\newtheorem{prop}{Proposition}
\newtheorem{thm}{Theorem}[section]
\newtheorem{lem}[thm]{Lemma}
\theoremstyle{definition}
\newtheorem{dfn}{Definition}
\theoremstyle{definition}
\newtheorem{rmk}{Remark}
\newtheorem{ex}{Example}
\newtheorem{conj}{Conjecture}
\newcommand{\bbZ}{\mathbb{Z}}
\newcommand{\bbF}{\mathbb{F}}
\newcommand{\Mod}{\text{Mod}}
\DeclareMathOperator{\Ima}{im}
\title{Sequences of Bivariate Bicycle Codes from Covering Graphs}
\author[1,3]{Benjamin C. B. Symons\footnote{bsymons@ed.ac.uk}}
\author[2]{Abhishek Rajput\footnote{abhishek.rajput@maths.ox.ac.uk}}
\author[3]{Dan E. Browne}
\affil[1]{School of Informatics, University of Edinburgh, United Kingdom}
\affil[2]{Mathematical Institute, Oxford University, Oxford, OX2 6GG, United Kingdom}
\affil[3]{Department of Physics \& Astronomy, University College London, London, WC1E 6BT, United Kingdom}
\date{\today}
\begin{document}

\maketitle

    \begin{abstract}
	We show that given an instance of a bivariate bicycle (BB) code, it is possible to generate an infinite sequence of new BB codes using increasingly large covering graphs of the original code's Tanner graph. When a BB code has a Tanner graph that is a $h$-fold covering of the base BB code's Tanner graph, we refer to it as a $h$-\textit{cover code}. We show that for a BB code to be a $h$-cover code, its lattice parameters and defining polynomials must satisfy simple algebraic conditions relative to those of the base code. By extending the graph covering map to a chain map, we show there are induced projection and lifting maps on (co)homology that enable the projection and lifting of logical operators and, in certain cases, automorphisms between the base and the cover code. The search space of cover codes is considerably reduced compared to the full space of possible polynomials and we find that many interesting examples of BB codes, such as the $[[144,12,12]]$ gross code, can be viewed as cover codes. We also apply our method to search for BB codes with weight 8 checks and find many new codes, including a $[[64,14,8]]$ and $[[144,14,14]]$ code. For an $h$-cover code of an $[[n,k,d]]$ BB code with parameters $[[n_h = hn, k_h, d_h]]$, we prove that $k_h \geq k$ for any $h$, and $d_h \leq hd$ when $h$ is odd. Furthermore if $h$ is odd and $k_h = k$, we prove the distance lower bound $d \leq d_h$. We conjecture it is always true that an $h$-cover BB code of a base $[[n,k,d]]$ BB code obeys the distance bounds $d \leq d_h \leq hd$. While the focus of this work is on bivariate bicycle codes, we expect these methods to generalise readily to many group algebra codes and to certain code constructions involving hypergraph, lifted, and balanced products.
	\end{abstract}
	\newpage
	\tableofcontents
	\newpage

\section{Introduction}

In recent years, quantum low-density parity check (qLDPC) codes have emerged as a promising alternative to surface codes for quantum error correction \cite{Breuckmann2021LDPCrev}. While the surface code is relatively easy to implement with its low-weight checks that can be arranged on a 2D planar architecture, it suffers from a vanishing rate and $\sqrt{n}$ distance scaling. 

There are now many examples of families of qLDPC codes with constant rate, distance scaling at least that of the surface code \cite{Breuckmann2021Balanced,Hastings2021Fiber,Leverrier2022Tanner}, and even asymptotically `good' codes \cite{Panteleev2022GoodQuantum,Dinur2023GoodLinear}. The benefit of many of these codes is that they have the potential to significantly reduce the space overhead of quantum error correction relative to the surface code, which could help usher in the era of fault tolerant quantum computation sooner. The price paid for this reduced space overhead however is the introduction of long range connections and checks that are often higher weight than those of the surface code. While both of these factors make the implementation of qLDPC codes on actual quantum hardware harder, particularly for those architectures with limited connectivity, this is a challenge that is actively being addressed by hardware teams \cite{Paetznick2024demoLogical,Reichardt2024ftqcNeutral,Wang2025BBdemonstration}.

Despite the limitations of the surface code, methods for performing logic on it and surface code architectures for performing universal quantum computation are well developed \cite{Horsman2012surgery,Litinski2019gameofSC,Bartolucci2023Fusion}. While the same cannot be said of qLDPC codes, there has been considerable recent progress in this area. For example, there are recent developments in generalised lattice surgery addressing the problem of joint measurements of logical operators within and between arbitrary qLDPC codes \cite{Cowtan2024surgeryUniversal,Cowtan2024ssip,Cross2024qldpcsurgery} with space overheads reduced from $O(d^2)$ \cite{Cohen2022def} to $O(d\text{log}^2(d))$ \cite{Williamson2024gaugingLogicals,Swaroop2025adapters,Ide2025homMeasure}. Progress has also been made on addressing the problem of parallel measurements \cite{Li2025timeEfficient,Cowtan2025parallellmeasure}. Alongside lattice surgery, there have been advances in using automorphisms to perform logic \cite{Sayginel2025auts,Malcolm2025computingEfficiently,Berthusen2025autGadgets}. There are now proposals for architectures that either combine a qLDPC memory with surface code computation blocks \cite{Xu204reconfigAtom,Stein2024hetArch} or use qLDPC codes directly for computation \cite{He2025extractorsArch,Yoder2025TdG}. Recent advances in transversal non-clifford gates on qLDPC codes \cite{He2025asymptoticallyNonCliff,Golowich2025TransversalNonCliff,Xu2025Fast,Jacob2025trivariate,Menon2025magicTricycles} also offer the potential for a purely qLDPC based architecture that can perform universal quantum computation. 

Bivariate bicycle (BB) codes were first proposed by Bravyi et. al. \cite{Bravyi2024BB} and are particularly interesting because there exist relatively small BB code instances with high rate and reasonably high distances, e.g. the so-called gross code with parameters $[[144,12,12]]$. The BB codes have weight 6 checks, with each qubit requiring 4 local and 2 non-local connections. Since their inception, there has been a considerable body of work studying the bivariate bicycle codes \cite{Eberhardt2025foldBB,Shaw2025Morphing,Wang2025coprimeBB,Wang2024RateBB,Voss2025Multivar,Eberhardt2024Pruning}, including a recent proposal for a BB code based architecture \cite{Yoder2025TdG}. Despite the interest in BB codes however, there has been relatively little progress made in establishing a \textit{general} family of BB codes. Recent work from Liang et. al. was able to generate a subset of the BB codes by considering an algebraic generalisation of the toric code \cite{Liang2025gtc} and further work by Chen et. al. was able to establish a topological classification of all such BB codes \cite{ChenAnyonToricLayout}. Although these constructions form a particular subset of BB codes, they contain many, if not all, of the known interesting examples of BB codes. 

In this work, we explore connections between instances of BB codes via the language of covering graphs, specifically those generated by graph lifts (derived graphs). We find that given a BB code $Q$ with Tanner graph $T(Q)$, it is possible to generate an infinite sequence of new BB codes $\tilde{Q}$ corresponding to covering graphs of $T(Q)$. We refer to these codes as $h$\textit{-cover codes} when their Tanner graph $T(\tilde{Q})$ is a $h$-fold covering graph of $T(Q)$. 

We derive purely algebraic conditions on the defining parameters and polynomials of the BB codes that guarantee a Tanner graph is a cover of a given base BB code's Tanner graph. Practically speaking, this allows a search over a considerably reduced polynomial space to generate new BB codes that appear to have interesting parameters. We provide both numerical evidence and theoretical guarantees to support the claim that the space of cover codes contains interesting examples of BB codes. In particular, our sequences of cover codes include all of the codes in the original BB codes paper \cite{Bravyi2024BB}; the $[[144,12,12]]$ gross code can be generated as a double cover of the $[[72,12,6]]$ code, for example\footnote{Shortly prior to publication, the authors were made of aware of a talk by V. Guemard that stated the gross code is a double cover of the [[72,12,6]] code \cite{Guemard2025QIPtalk}.}. We are able to reproduce a sequence of $[[18q,8,\leq2q]]$ balanced product codes from \cite{Tiew2025Dehn} and in principle all of the generalised toric codes \cite{Liang2025gtc} as sequences of cover codes. We also apply our method to search for BB codes with weight 8 checks and find sequences of codes with $k=8,10,12,14$. These codes haver higher $kd^2/n$ than the weight 6 codes and include examples such as $[[24,10,4]]$, $[[64,14,8]]$, and $[[144,14,14]]$ codes that have $kd^2/n$ of 6.7, 14 and 19.1 respectively.

Under certain conditions, we are able to prove bounds on the number of logical qubits and distance of a cover code. Given a base BB code with parameters $[[n,k,d]]$, we prove that an $h$-cover code has parameters $[[n_h = hn, k_h, d_h]]$ obeying $k_h \geq k$ for any $h$, and $d_h \leq  hd$ when $h$ is odd. When $h$ is odd and $k_h=k$, we can lower bound the distance as $d_h \geq d$. For even $h$, we empirically observe the same distance bounds and therefore conjecture that any $h$-cover code obeys the distance bounds $d \leq d_h \leq hd$.

Besides generating new instances of BB codes, the covering graph relation lets us use the graph covering map to induce a map between the (co)homologies of the corresponding BB code chain complexes that can be used to project logical operators from the cover to the base code. We also define a lifting map that similarly induces a map on (co)homology and lifts logical operators from the base to the cover code. This serves as a useful tool for finding logical operators of large codes given a logical basis for a smaller, base code. We define a weight-preserving lift and show how it can sometimes be used to upper bound the distance of a cover code as $d_h \leq d$. This is useful when searching for new codes to rule out instances where the distance does \textit{not} increase relative to the base code. Finally, we show that automorphisms also lift and that, under certain conditions, the logical action of the base and lifted automorphisms is the same.

We also show that our methods extend to qudit BB codes. In particular, we are able to search for new qudit BB codes given an initial base qudit BB code in the same way as for qubit BB codes. Our other results on the behavior of $k_h$ and $d_h$ likewise extend naturally and in fact, we find that our distance inequalities hold for more values of $h$. More precisely, when the field is $\mathbb{F}_q$ with $q = p^n$ for an integer $n$ and prime $p$, we find that $k_h \geq k$ for any $h$. We also show that $d_h \leq hd$ if $h \neq 0\ \text{mod}\ p$, and $d \leq d_h$ if $k_h = k$ also holds.

A paper by V. Guemard \cite{Guemard2025lifts} also uses coverings but in the context of simplicial complexes. Guemard takes the Tanner graph of an \textit{arbitrary} CSS code and constructs a simplicial complex from it, referred to as the \textit{Tanner cone complex}. By generating simplicial complexes that are coverings of the initial simplicial complex and then restricting to a Tanner graph, Guemard is able to construct new CSS codes. In contrast to Tanner graph coverings, which in general may not themselves be valid Tanner graphs, Guemard's method is guaranteed to generate valid codes. While our work uses similar ideas in the context of BB codes, we avoid the issues associated with graph coverings as we show that our graph covers, under the right conditions, always correspond to instances of BB codes and are therefore valid QEC codes. We are also able to exploit the fact that we have simple algebraic conditions that guarantee a Tanner graph is a cover in order to more efficiently search for new codes instead of explicitly generating covering graphs. Alongside this, we are able to derive bounds on code parameters under certain conditions and extend the covering graph ideas beyond code construction to analyse logical operators\footnote{In \cite{Guemard2025Moderate}, Guemard also considers the lifting of logical operators using the transfer homomorphism from covering space theory.}. 

In Section \ref{sec:Prelims}, we review the necessary background in graph theory, homological algebra, and the BB codes for the constructions in the rest of the paper. Section \ref{sec:SequenceBB} contains the proof of our main theorem that establishes a correspondence between the algebraic conditions relating a cover and base BB code and the existence of a Tanner cover graph of the base code's Tanner graph. The graph covering map induces chain maps between the two BB codes that we show in Section \ref{sec:ProjAndLift} can be used to lift and project logical operators between a cover and base code and bound their $k$ and $d$ parameters in certain cases. Section \ref{sec:GraphCodeAuts} proves that they can also be used to lift automorphisms from a base to a cover code and that when $h$ is odd and $k_h=k$, the logical action on the base and lifted logical bases is the same. In Section \ref{sec:CoverCodeNumerics}, we present numerical data on a variety of $h$-cover codes, including several sequences of BB codes with weight 6 and weight 8 checks.

\section{Preliminaries}
\label{sec:Prelims}
In this section, we briefly summarize the necessary background in graph theory, homological algebra, and the bivariate bicycle (BB) codes relevant to this work. We assume familiarity with standard stabilizer code theory and basic algebraic notions such as rings, fields, and modules. Those readers familiar with this material can skip directly to Section \ref{sec:SequenceBB}. 

\subsection{Graph Theory}
The constructions in this paper are inspired by the general notion of covering spaces in topology. Intuitively, a covering space $\tilde{X}$ of a topological space ${X}$ has the property that $\tilde{X}$ and $X$ look the same locally while possibly having different global topological structure.

The idea of a covering space carries over to graphs, which will be necessary to consider when applying covering spaces in the context of quantum error correcting codes. We review here those concepts from graph theory that are necessary for understanding the notion of graph covers. We henceforth denote all graphs as $G$ or $G = (V,E)$, where $V$ is the vertex set and $E$ the edge set, and write edges $e \in E$ as pairs of vertices $e = [v_1, v_2]$. 

Given two graphs, the proper notion of a map between them that preserves the edge structure is as follows: 

\begin{dfn}[\textbf{Graph Homomorphism and Isomorphism}]
\label{dfn:graph-hom-iso}
    Let $G(V,E)$ and $H(V',E')$ be two graphs. A graph homomorphism is a function $f \colon V \rightarrow V'$ such that if $[u,v] \in E$, then $[f(u), f(v)] \in E'$. If its inverse $f^{-1}$ is also a graph homomorphism, then $f$ is a graph isomorphism. Equivalently, $f \colon V \rightarrow V'$ is a graph isomorphism if it is a bijection such that for all $u,v \in V$, $[u, v] \in E$ if and only if $[f(u), f(v)] \in E'$. 
\end{dfn}


We are often interested in looking at portions of a graph around a vertex: 

\begin{dfn}[\textbf{Neighbourhood of a Vertex}]
Given a graph $G(V,E)$, the neighbourhood $N(v)$ of a vertex $v \in V$ is the set of all other vertices in the graph connected to $v$ by an edge, $N(v) = \{v_i \in V\ |\ [v, v_i] \in E \}$. 
\label{dfn:neighbour}
\end{dfn}

We can now define the notion of a covering graph, a concept that is used extensively in this work.

\begin{dfn}[\textbf{Covering Graph}]
A covering graph or a \textit{lift} of a graph $G(V,E)$ is a graph $C(\tilde{V}, \tilde{E})$ equipped with a surjective map $p \colon \tilde{V} \to V$ satisfying the condition that $\forall \tilde{v} \in \tilde{V}$, $p$ restricts to a bijection between $N(\tilde{v})$ and $N(p(\tilde{v}))$. $p$ is then said to be a \textit{covering map}. If $p$ is such that $p^{-1}(v)$ has cardinality $n$ for all $v \in V$, then $C(\tilde{V}, \tilde{E})$ is an $n$\textit{-sheeted cover}.  
\label{dfn:cover-graph}
\end{dfn}

\begin{figure}
    \centering
    \includegraphics[width=0.25\linewidth]{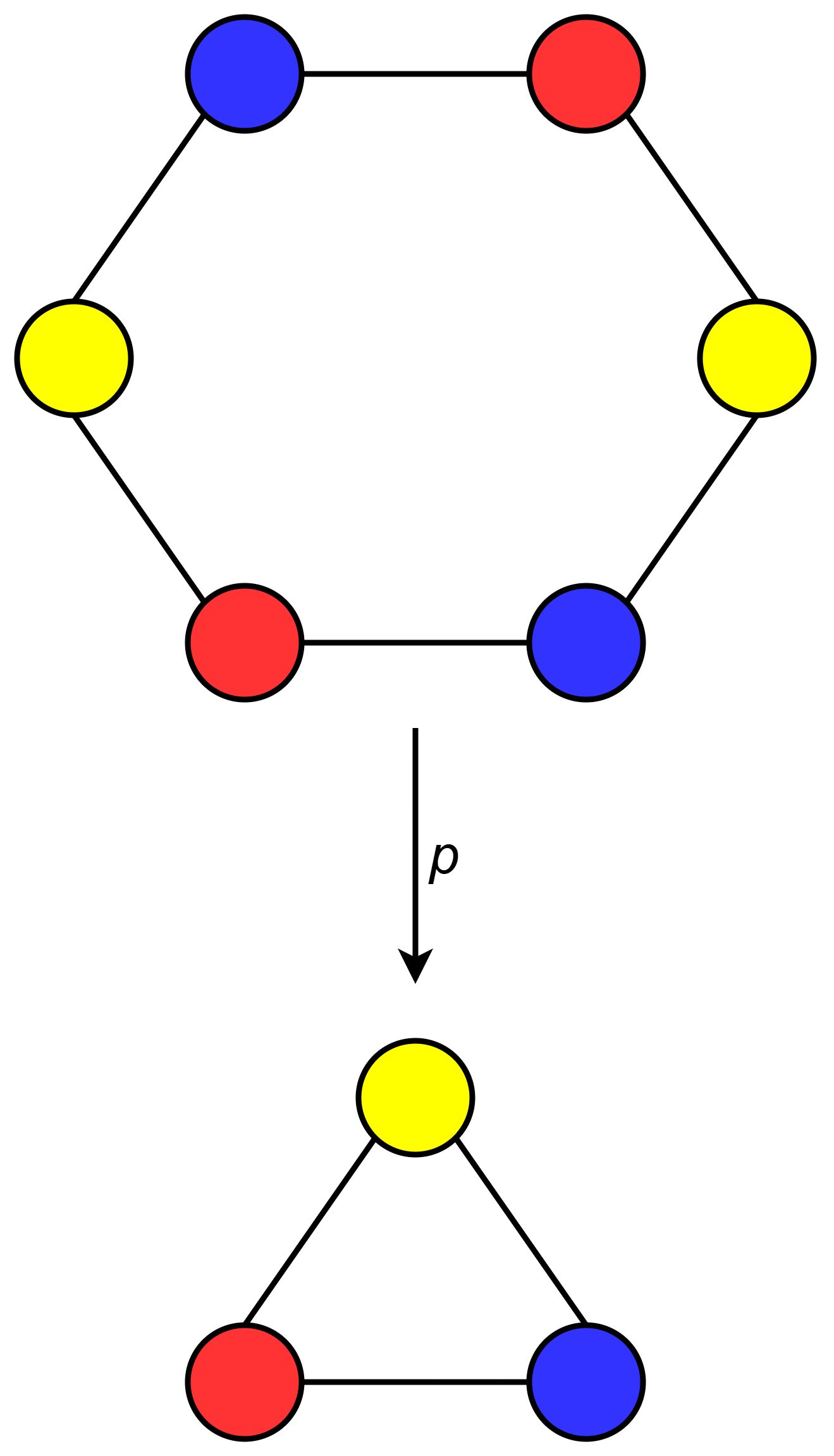}
    \caption{An example of a double cover (top) of a base graph (bottom). The colourings of vertices indicate which vertices are mapped to which under the covering map $p$.}
    \label{fig:covering-graph-ex}
\end{figure}

Figure \ref{fig:covering-graph-ex} shows an example of a double cover of a triangle graph. For each coloured vertex in the base graph, there are two vertices of the same colour in the covering graph that map to the base vertex under the covering map $p$. 

Given two graphs, it can be difficult to verify that one is a cover of the other. However, there is a common way of explicitly generating instances of covering graphs of a given graph using  what are called voltage graphs \cite{Gross1974Voltage}.

\begin{dfn}[\textbf{Voltage Assignment}]
     Let $G(V,E)$ be a graph\footnote{A voltage assignment is usually defined on a directed graph but can be extended to undirected graphs by replacing every undirected edge with pairs of oppositely ordered directed edges and requiring that these edges have labels that are inverse to each other in $\Gamma$. As the directionality of edges is unimportant for us, we leave this implicit.} and $\Gamma$ be a group. A voltage assignment is a function $\nu \colon E \rightarrow \Gamma$ that assigns to every edge $e \in E$ an element of $\Gamma$.
\end{dfn}

\begin{dfn}[\textbf{Voltage Graph and Right-Derived Graph}]
    A $\Gamma$-voltage graph is a pair $$(G(V,E), \ \nu \colon E \rightarrow \Gamma).$$ The \textit{right-derived graph} or \textit{right }$\Gamma$\textit{-lift} of $G$ is the graph $\tilde{G}(\tilde{V}, \tilde{E})$ with vertices $\tilde{V} = V \times \Gamma$ and edges $\tilde{E} = E \times \Gamma$. The vertices of $\tilde{G}$ are given by tuples $(v,\gamma)$ with $\gamma \in \Gamma$ while its edges have the form $\tilde{e} = [(u,\gamma), (v,\gamma\nu(e))]$, where $\nu(e) \in \Gamma$.
\label{dfn:volt-and-right-derived-graph}
\end{dfn}


It can be proven that (right) derived graphs of a base graph $G$ with respect to a group $\Gamma$ are in fact $|\Gamma|$-sheeted covers of $G$ (see \cite{TopGraphTheory}), a fact we will make use of later. Note that a left-derived graph involves multiplication by $\nu(e)$ on the left as opposed to the right. However, as all groups used in this work are Abelian, the left and right derived graphs are equivalent and we therefore leave this implicit throughout.

The choice of group $\Gamma$ and voltage assignment $\nu$ will determine the exact details of the covering graph produced. For example, consider the voltage group $\Gamma = \mathbb{Z}_2$ which is order 2 and therefore produces double covers. The choice of voltage assignment in Figure \ref{fig:volt-assign-ex} generates a derived graph that is exactly the double cover shown in Figure \ref{fig:covering-graph-ex}.

\begin{figure}[H]
    \centering
    \includegraphics[width=0.20\linewidth]{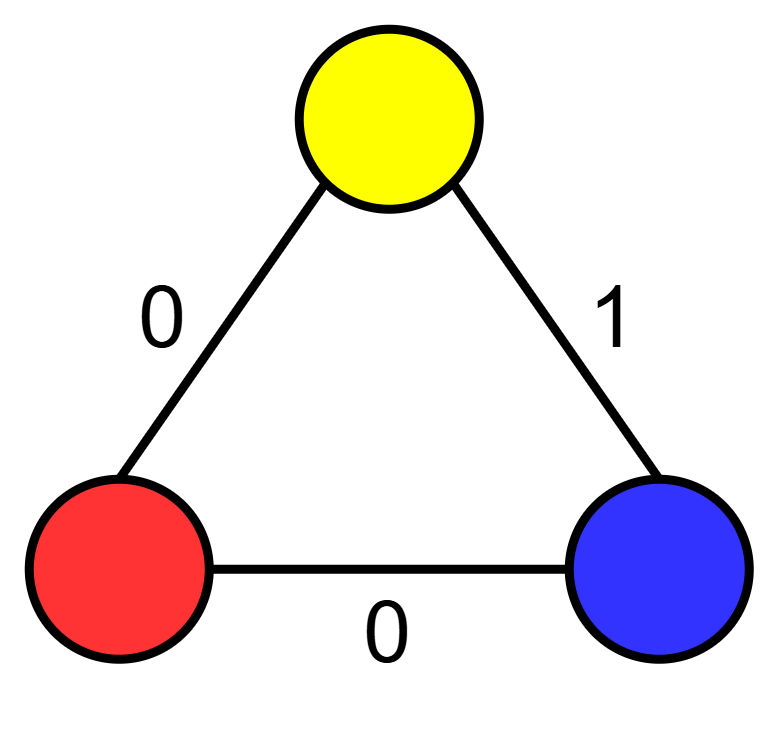}
    \caption{A choice of voltage assignment $\nu : E \to \mathbb{Z}_2$ that produces the double cover in Figure \ref{fig:covering-graph-ex}.}
    \label{fig:volt-assign-ex}
\end{figure}

\subsection{Homological Algebra}

In this section, we let $R$ denote a commutative ring with identity. 

\begin{dfn}[\textbf{Chain and Cochain Complexes}]
    A \textit{chain complex} is a sequence of $R$-modules $C_i$ and $R$-module homomorphisms $\partial_i$ called \textit{boundary maps} 
\begin{equation}
\label{eq:chaincomplex}
    ... \xrightarrow[]{\partial_{i+2}} C_{i+1} \xrightarrow[]{\partial_{i+1}} C_i \xrightarrow[]{\partial_i} C_{i-1} \xrightarrow[]{\partial_{i-1}} ... 
\end{equation}

with the property that $\partial_{i} \circ \partial_{i+1} = 0$ for all $i$. We use the notation $(C_{\bullet}, \partial_{\bullet})$ to denote a chain complex, or $C_{\bullet}$ when the boundary maps are implicit.  

A \textit{cochain complex}, denoted as $(C^{\bullet}, \delta^{\bullet})$ or $C^{\bullet}$, is defined analogously but with the boundary maps $\delta^i$ going in the direction of increasing indices. \end{dfn}
We will sometimes use the term \textit{complex} when it is clear from the context whether a chain or cochain complex is meant. The $i$-th position in a (co)chain complex is referred to as the $i$\textit{-th} \textit{degree}. If a (co)chain complex has only 0's above degree $j$ and below degree $i$, it is said to have \textit{length} $l=|j-i|$. 

The defining condition of a chain complex $\partial_i \circ \partial_{i+1} = 0$ is equivalent to $\Ima \partial_{i+1} \subseteq \ker \partial_i$. If the reverse containment $\Ima \partial_{i+1} \supseteq \ker \partial_i$ also holds at a particular $i$, the chain complex $C_{\bullet}$ is said to be \textit{exact at degree $i$}. If it holds at every $i$, the complex is said to be \textit{exact}. 


More generally, we have the following concept: 
\begin{dfn}[\textbf{Homology and Cohomology}]
    The $i$\textit{-th homology} of $C_{\bullet}$ is defined as the quotient module $$H_i(C_{\bullet}) = \frac{\ker \partial_i}{\Ima \partial_{i+1}}.$$

    The $i$\textit{-th cohomology} of $C^{\bullet}$ is defined as $$H^i(C^{\bullet}) = \frac{\ker \delta^{i+1}}{\Ima \delta^i}.$$
\end{dfn}

$H_i(C_{\bullet})$ ($H^i(C^{\bullet})$) is non-zero in general but is 0 if and only if $\Ima \partial_{i+1} = \ker \partial_i$ (if and only if $\Ima \delta^i = \ker \delta^{i+1}$), i.e. if the complex is exact at degree $i$. (Co)homology thus quantifies the failure of exactness at a particular degree in the (co)chain complex. 

Given two chain complexes $C_{\bullet}$ and $D_{\bullet}$, there are a variety of ways in homological algebra for combining them to build other chain complexes. The most important for error correction applications is the following: 

\begin{dfn}[\textbf{Tensor Product of Chain Complexes}]
Let $(A_{\bullet},\partial_{\bullet}^A)$ and $(B_{\bullet},\partial_{\bullet}^B)$ be chain complexes of $R$-modules. The tensor product of $A_{\bullet}$ and $B_{\bullet}$ is the chain complex $A_{\bullet} \otimes_R B_{\bullet}$ consisting of the $R$-modules $$(A_{\bullet} \otimes_R B_{\bullet})_{n} = \bigoplus_{i+j=n} A_i \otimes_R B_j$$ and boundary maps $$\partial_n^{A_{\bullet} \otimes_R B_{\bullet}} = \bigoplus_{i+j=n} \partial^A_i \otimes_R I_{B_j} + (-1)^i I_{A_i} \otimes_R \partial_j^B,$$ where $I_{A_i}$ ($I_{B_j}$) refers to the identity map on $A_i$ ($B_j$).
\end{dfn}

Given two (co)chain complexes, we also want a suitable notion of a `map' between them that preserves the (co)chain complex property that successive (co)boundary maps compose to $0$. This is given as follows:

\begin{dfn}[\textbf{Chain Map}]
    Let $(C_{\bullet},\partial_{\bullet})$ and $(D_{\bullet},\tilde{\partial}_{\bullet})$ be chain complexes. A \textit{chain map} or \textit{morphism of chain complexes} between $C_{\bullet}$ and $D_{\bullet}$ is denoted as $f_{\bullet} \colon C_{\bullet} \rightarrow D_{\bullet}$ and consists of a collection of $R$-module homomorphisms $f_i \colon C_i \rightarrow D_i$ at each degree $i$ such that the following diagram commutes:
\begin{equation}
\label{commdiag}
    \begin{tikzcd}
        \ldots \arrow[r, ""] 
        & C_{i+1} \arrow[r, "\partial_{i+1}"] \arrow[d, "f_{i+1}"]
        & C_{i} \arrow[d, "f_{i}"] \arrow[r, "\partial_i"]
        & C_{i-1} \arrow[d, "f_{i-1}"] \arrow[r,""]
        & \ldots  \\
        \ldots\arrow[r, ""]
        & D_{i+1} \arrow[r, "\tilde{\partial}_{i+1}"]
        & D_{i}  \arrow[r, "\tilde{\partial}_i"] 
        & D_{i-1} \arrow[r,""]
        & \ldots
    \end{tikzcd}
\end{equation}
The commuting condition is equivalent to demanding that each square in the above diagram commutes for all $i$, i.e. $$f_i \circ \partial_{i+1} = \tilde{\partial}_{i+1} \circ f_{i+1} \ \ \forall i.$$
\end{dfn}

The notion of a cochain map of cochain complexes is analogous. 

\begin{rmk}
\label{rmk:InducedHomMap}
    Chain maps of chain complexes automatically induce well-defined maps between the homologies of the complexes defined as follows for any degree $i$ \cite{Weibel_1994}: 
\begin{align}
    \hat{f}_i \coloneqq H_i(f_{\bullet}) &\colon H_i(C_{\bullet}) \rightarrow H_i(D_{\bullet}) \nonumber \\
    \hat{f}_i([x]) &\coloneqq [f_i(x)],
\end{align}
where $[x] \in H_i(C_{\bullet})$ is an equivalence class. 
An analogous definition holds for the induced maps between the cohomologies of two cochain complexes given a cochain map. 

$H_i$ also preserves the identity chain map $I_{\bullet} \colon C_{\bullet} \rightarrow C_{\bullet}$, where the maps $I_i \colon C_i \rightarrow C_i$ are simply the identity map $I_i(x) = x$, and the composition of chain maps: 
\begin{align}
    I_{H_i} \coloneqq H_i(I_{\bullet}) &\colon H_i(C_{\bullet}) \rightarrow H_i(C_{\bullet}) \nonumber \\
    I_{H_i}([x]) &= [I_i(x)] = [x],
\end{align} 
and 
\begin{equation}
    H_i(g_{\bullet} \circ f_{\bullet}) = H_i(g_{\bullet}) \circ H_i(f_{\bullet}),
\end{equation} 
where $f_{\bullet} \colon B_{\bullet}  \rightarrow C_{\bullet}$ and $g_{\bullet} \colon C_{\bullet} \rightarrow D_{\bullet}$ are two chain maps. The same holds for $H^i$ and the composition of cochain complexes.\hfill $\blacksquare$ 
\end{rmk}

\subsection{Homological QEC and Tanner Graphs}

In this section, we specialize to $R = \bbF_2$ and chain complexes $C_{\bullet}$ of finite-dimensional $\bbF_2$-vector spaces of length at most 2, i.e. where in the notation above $C_i = 0$ for all $i > 2$ and $i < 0$. We also assume all vector spaces are equipped with a basis. 




\begin{rmk}
    A binary $[[n,k,d]]$ CSS code $Q$ is specified by two classical codes that are the kernels $\ker H_X$ and $\ker H_Z$ of two binary parity check matrices $H_X$ and $H_Z$ respectively. Both matrices have $n$ columns corresponding to the $n$ qubits of the code. The $n_X$ rows of $H_X$ represent the $X$ stabilizers of the code and the $n_Z$ rows of $H_Z$ represent the $Z$ stabilizers. 
    
    These stabilizers are all required to pairwise commute, which is equivalent to the condition that $H_Z H_X^T = 0$ or, equivalently, $H_X H_Z^T = 0$. We can therefore represent this code as a length 2 chain complex\footnote{We can also define the chain complex with $H_X^T$ first and $H_Z$ second.}
\begin{equation}
\label{eq:CSScodecomplex}
    Q_{\bullet} \colon 0 \xrightarrow[]{} Q_2 = \bbF_2^{n_Z} \xrightarrow[]{H_Z^T} Q_1 = \bbF_2^{n} \xrightarrow[]{H_X} Q_0 = \bbF_2^{n_X} \xrightarrow[]{} 0.
\end{equation}

Conversely, any chain complex $C_{\bullet}$ of $\bbF_2$-vector spaces can be used to define a CSS code by selecting any three consecutive terms $C_{i+1}, C_i, C_{i-1}$ to be the space of $X$ checks, qubits, and $Z$ checks respectively and defining $H_Z^T \coloneqq \partial_{i+1}$ and $H_X \coloneqq \partial_i$. \hfill $\blacksquare$ 
\end{rmk}

\begin{rmk}
\label{rmk:kofCSScode}
    Note that by definition, the $Z$ logical operators of $Q$ commute with the $X$ stabilizers but are not themselves $Z$ stabilizers. The first condition implies that they are in $\ker H_X$ and the second condition that they are \textit{not} in $\Ima H_Z^T$. Thus, they are elements of $H_1(Q_{\bullet}) = \ker H_X/\Ima H_Z^T$, where the $[0]$ equivalence class in $H_1(Q_{\bullet})$ represents the stabilizers and the non-zero equivalence classes the non-trivial logical operators.

Taking the transpose of the maps in \eqref{eq:CSScodecomplex} gives the cochain complex
\begin{equation}
    Q^{\bullet} \colon 0 \xleftarrow[]{} Q^2 = \bbF_2^{n_Z} \xleftarrow[]{H_Z} Q^1 = \bbF_2^{n} \xleftarrow[]{H_X^T} Q^0 = \bbF_2^{n_X} \xleftarrow[]{} 0.
\end{equation}
By a similar logic, $H^1(Q^{\bullet}) = \ker H_Z/\Ima H_X^T$ represents the vector space of $X$ logical operators. It can be shown that $H^1(Q^{\bullet}) \cong H_1(Q_{\bullet})$.\footnote{This isomorphism is not true in general, but holds when working with complexes of finite dimensional vector spaces.} Then $k$ in fact equals $$k = \dim H^1(Q^{\bullet}) = \dim H_1(Q_{\bullet}).$$ Let $$d_X = \min \{ |c| \in \bbF_2^n \ | \ [c] \in H^1(Q^{\bullet}), [c] \neq [0] \}, \hspace{1em} d_Z = \min \{ |c| \in \bbF_2^n \ | \ [c] \in H_1(Q_{\bullet}), [c] \neq [0] \}$$ denote the $X$ and $Z$-\textit{distances} respectively. The \textit{distance} of the code is $d = \min\{d_X, d_Z\}$. \hfill $\blacksquare$ 
\end{rmk}
\begin{rmk}
    A correspondence also holds between CSS codes and certain bipartite graphs called \textit{Tanner graphs}. We denote them by $T(Q) = (V_C \cup V_Q, E_{CQ})$, where $V_Q$ is the vertex set of qubits, $V_C$ is the vertex set of checks, and an edge in $E_{CQ}$ exists between a qubit and check vertex if the check acts on that qubit in the code. $V_C$ can be further decomposed as $V_C = V_X \cup V_Z$ into disjoint vertex sets of $X$ and $Z$ checks. We often use the notation $T$ when the code $Q$ is clear from the context. \hfill $\blacksquare$ 
\end{rmk} 
The Tanner graph representation of a CSS code allows us to consider covering graphs $\tilde{T}$ of a Tanner graph $T$ that may define new QEC codes, as in Figure \ref{fig:double-cover}. 
\begin{figure}
    \centering
    \includegraphics[width=0.20\linewidth]{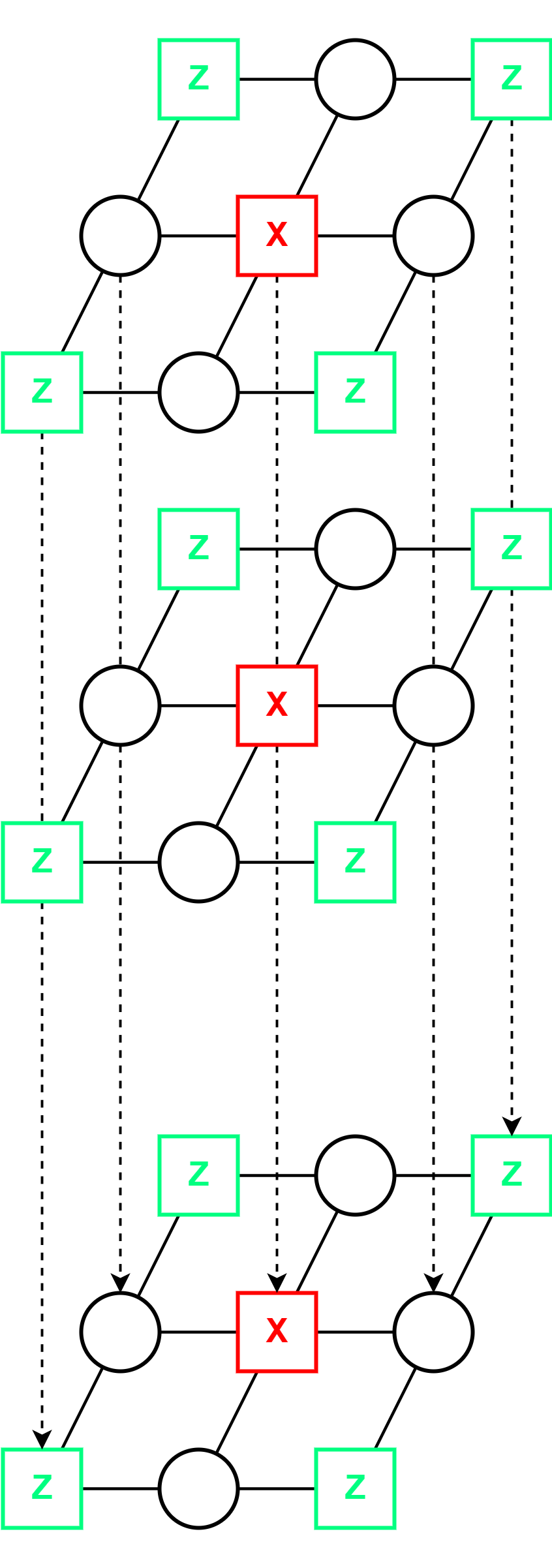}
    \caption{An example of a trivial (disconnected) double cover of a Tanner graph. The bottom graph is the base and the top two graphs together form the covering graph. Vertical dashed lines show how checks and qubits in the cover project to checks and qubits in the base.}
    \label{fig:double-cover}
\end{figure}

\begin{dfn}[\textbf{Cover Code and Base Code}]
    Let $\tilde{Q}$ be a CSS code with a Tanner graph $T(\tilde{Q})$ that is an $h$-sheeted covering of a Tanner graph $T(Q)$ of a code $Q$. We say $\tilde{Q}$ is an $h$\textit{-cover code} of $Q$ and that $Q$ is the \textit{base code}.
\end{dfn}

\subsection{Bivariate Bicycle Codes}

We now recall the basics of the bivariate bicycle code construction; for the full details see \cite{Bravyi2024BB}. The BB codes are group algebra codes for the product group $(\bbZ_l \times \bbZ_m, +)$, where $(\bbZ_l, +)$ and $(\bbZ_m, +)$ are cyclic groups of order $l$ and $m$ respectively. The group algebra $R = \bbF_2[\bbZ_l \times \bbZ_m]$ is ring isomorphic to 
\begin{equation}
\label{eq:bbcodeAlgisos}
    R = \bbF_2[\bbZ_l \times \bbZ_m] \cong \bbF_2[\bbZ_l] \otimes_{\bbF_2} \bbF_2[\bbZ_m] \cong \frac{\bbF_2[x]}{(x^l-1)} \otimes_{\bbF_2} \frac{\bbF_2[y]}{(y^m-1)} \cong \frac{\bbF_2[x,y]}{(x^l-1, y^m-1)}.
\end{equation}
We note that $R$ is, in particular, an $\bbF_2$-vector space of dimension $lm$ since it has the monomial basis
\begin{equation}
\label{eq:monbasis}
    M = \{x^i y^j \ | \ 0 \leq i \leq m-1, \ 0 \leq j \leq l-1\}.
\end{equation} 

Let $a, b \in R$ and consider $R$ as a (rank 1) $R$-module. Consider the following chain complexes $$R \xrightarrow[]{\cdot a} R$$ $$R \xrightarrow[]{\cdot b} R$$ determined by multiplication by $a$ and $b$. Taking the tensor product of these chain complexes and using the standard isomorphism $R \otimes_R R \cong R$ gives the following complex of $R$-modules\footnote{Since the coefficient ring of the polynomials is $\bbF_2$, the sign that appears in the definition of the boundary map for the tensor product of two chain complexes is irrelevant.} 
\begin{equation}
\label{eq:groupAlgComplex}
    R \xrightarrow[]{\begin{pmatrix} b \\ a \end{pmatrix}} R^2 \xrightarrow[]{\begin{pmatrix} a & b \end{pmatrix}} R.
\end{equation} 




Via the vector space isomorphism $R \cong \bbF_2^{lm}$ mentioned above, every element $r$ of $R$ determines a linear map from $\bbF_2^{lm} \rightarrow \bbF_2^{lm}$ corresponding to ``left multiplication by $r$". We have the following characterization for what the corresponding linear maps look like: 

\begin{prop}
\label{thm:groupAlgMatrixRep}
    Let $I_l$ be the $l \times l$ identity matrix. Let $v = (0,1,\ldots,0)^T \in \bbF_2^k$ and $C_k$ be the $k \times k$ circulant matrix with entries $(C_k)_{ij} = v_{(i-j) \ \text{mod} \ k}$ where $v_l$ refers to the $l$-th component of $v$. Then $$\varphi \colon R \rightarrow \bbF_2^{lm \times lm}$$ defined by $$x \mapsto C_l \otimes I_m, \ \ \  y \mapsto I_l \otimes C_m$$ is an injective ring homomorphism.
\end{prop}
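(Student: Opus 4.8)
The plan is to construct $\varphi$ in two stages: first build a ring homomorphism out of the free polynomial ring $\bbF_2[x,y]$ using its universal property, then show it descends to the quotient $R$, and finally establish injectivity. First I would observe that $C_k$ is exactly the permutation matrix of the cyclic shift $j \mapsto j+1 \bmod k$, so $C_k^k = I_k$, and that by the mixed-product property of the Kronecker product the matrices $C_l \otimes I_m$ and $I_l \otimes C_m$ commute, since both products equal $C_l \otimes C_m$. Since $\bbF_2^{lm \times lm}$ is a (noncommutative, but here the relevant images commute) $\bbF_2$-algebra, the universal property of $\bbF_2[x,y]$ yields a unique $\bbF_2$-algebra homomorphism $\tilde{\varphi}\colon \bbF_2[x,y] \to \bbF_2^{lm \times lm}$ with $x \mapsto C_l \otimes I_m$ and $y \mapsto I_l \otimes C_m$. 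Because $\tilde{\varphi}(x^l - 1) = (C_l \otimes I_m)^l - I_{lm} = C_l^l \otimes I_m - I_{lm} = 0$ and likewise $\tilde{\varphi}(y^m - 1) = 0$, the ideal $(x^l - 1, y^m - 1)$ lies in $\ker \tilde{\varphi}$, so $\tilde{\varphi}$ factors through a well-defined ring homomorphism $\varphi\colon R \to \bbF_2^{lm \times lm}$.

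For injectivity I would work with the $\bbF_2$-basis $M = \{x^i y^j : 0 \le i \le l-1,\ 0 \le j \le m-1\}$ of $R$ from Proposition \ref{prop:f2-alg} and compute $\varphi(x^i y^j) = (C_l \otimes I_m)^i (I_l \otimes C_m)^j = C_l^i \otimes C_m^j$. Indexing the rows and columns of an $lm \times lm$ matrix by pairs $(a,b)$ with $a \in \bbZ_l$, $b \in \bbZ_m$, the entry of $C_l^i \otimes C_m^j$ at $((a,b),(a',b'))$ equals $(C_l^i)_{a,a'}(C_m^j)_{b,b'}$, which is $1$ precisely when $a - a' \equiv i \pmod l$ and $b - b' \equiv j \pmod m$; in particular every row of this matrix contains exactly one $1$. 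Hence, as $(i,j)$ ranges over the index set of $M$, the supports of the matrices $\varphi(x^i y^j)$ are pairwise disjoint (and together tile the whole matrix). Over $\bbF_2$ a sum of matrices with pairwise disjoint supports has support equal to the union of the individual supports, so for any nonempty $S$ the element $\sum_{(i,j) \in S} x^i y^j$ maps to $\sum_{(i,j) \in S} C_l^i \otimes C_m^j \neq 0$. Therefore $\ker \varphi = 0$, so $\varphi$ is injective, completing the proof.

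Alternatively, and more conceptually, one can identify $\varphi$ with the left regular representation of $R$: with $M$ ordered so that the $x$-exponent is the outer (slower-varying) index, left multiplication by $x$ permutes $M$ by the cyclic shift in the first coordinate, which is represented by $C_l \otimes I_m$, and similarly for $y$. The map $R \to \mathrm{End}_{\bbF_2}(R)$, $r \mapsto (s \mapsto rs)$, is a ring homomorphism, and it is injective because $r \cdot 1_R = r$ recovers $r$ from its image; transporting along the basis $M$ gives exactly $\varphi$. I expect the only place any care is needed is bookkeeping — fixing the ordering of $M$ so the Kronecker factors appear as $C_l \otimes I_m$ and $I_l \otimes C_m$ rather than with the tensor factors transposed, and keeping the Kronecker-product identities straight — rather than any substantive mathematical obstacle.
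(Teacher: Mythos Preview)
Your proposal is correct, and in fact your ``alternative, more conceptual'' paragraph at the end is precisely the paper's proof: the paper invokes the canonical injective ring homomorphism $R \to \mathrm{End}(R,+)$ given by left multiplication (citing a standard reference), then identifies the matrices of left-multiplication by $x$ and $y$ in the monomial basis as $C_l \otimes I_m$ and $I_l \otimes C_m$.

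Your primary argument takes a somewhat different route: rather than recognizing $\varphi$ as the regular representation from the outset, you build it from scratch via the universal property of $\bbF_2[x,y]$, check that the relations $x^l-1$ and $y^m-1$ die, and then prove injectivity by the pleasant combinatorial observation that the matrices $C_l^i \otimes C_m^j$ have pairwise disjoint supports, so no nontrivial $\bbF_2$-combination can vanish. This is slightly longer but entirely self-contained, avoiding the appeal to the general fact that the regular representation of a ring is faithful; the paper's version is shorter because it packages both well-definedness and injectivity into that one citation. Your remark about bookkeeping (ordering $M$ so the Kronecker factors come out the right way round) is apt and is exactly the content of the paper's final sentence.
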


\begin{proof}
    Any ring $R$ acting on its underlying Abelian group $(R, +)$ via left-multiplication defines a canonical injective ring homomorphism between $R$ and the endomorphisms of $(R,+)$ (see Proposition 2.7 of Ch. 3 in \cite{algCh0}). We claim that upon choosing a suitable basis, this ring homomorphism is given by that in the theorem statement.  
    
    Since $(R,+) \cong \bbF_2^{lm} \cong \bbF_2^l \otimes_{\bbF_2} \bbF_2^m$ in our case, it suffices to determine how left-multiplication by $x$ and $y$ act on each piece of the tensor product, i.e. on $\bbF_2^l \cong \bbF_2[x]/(x^l-1)$ and $\bbF_2^m \cong \bbF_2[y]/(y^m-1)$ respectively, to determine the action of any polynomial of $R$ on $\bbF_2^l \otimes_{\bbF_2} \bbF_2^m$. Picking the basis $\{1,x,\ldots,x^{l-1}\}$ for $\bbF_2^{l}$ and $\{1,y,\ldots,y^{m-1}\}$ for $\bbF_2^m$ shows that left-multiplication by $x$ and $y$ respectively cyclically permutes the basis elements. The matrices corresponding to this action are precisely those in the theorem statement, verifying the claim. 
\end{proof}

We will often implicitly identify polynomials in $R$ with their binary matrix representations via the above proposition.  

\begin{rmk}
\label{rmk:BBcodeDefn}
    The complex in \eqref{eq:groupAlgComplex} can be written with the preceding results as
\begin{equation}
\label{eq:BBcodeComplex}
    Q_{\bullet} \colon Q_2 = \bbF_2^{lm} \xrightarrow[]{\begin{pmatrix} B \\ A \end{pmatrix}} Q_1 = \bbF_2^{lm} \oplus \bbF_2^{lm} \xrightarrow[]{\begin{pmatrix} A & B \end{pmatrix}} Q_0 = \bbF_2^{lm},
\end{equation} 
where $\varphi(a) = A$ and $\varphi(b) = B$, when restricting to a complex of $\bbF_2$-vector spaces. 
    This complex defines the BB code when the polynomials $A$ and $B$ are specifically chosen to have 3 terms 
\begin{equation}
    A = A_1 + A_2 + A_3, \ B = B_1 + B_2 + B_3.
\end{equation}
Each of the terms $A_i, B_j$ is a monomial in the basis $M$ in \eqref{eq:monbasis}. 

Via the correspondence between CSS codes and chain complexes of $\bbF_2$-vector spaces, the parity check matrices of a BB code are then defined using the polynomials $A$ and $B$ as
\begin{equation}
    H_X = \left(A \ B \right),\ H_Z^T = \begin{pmatrix}
B \\
A 
\end{pmatrix}.
\end{equation}
This construction gives rise to codes that have weight 6 checks because there are 3 terms in each of $A$ and $B$. The codes can be laid out in such a way that 4 of the connections on every qubit are local and 2 are non-local \cite{Bravyi2024BB}. It was proved that $d = d_X = d_Z$ and $\text{rank } H_X = \text{rank } H_Z$ in \cite{Bravyi2024BB} for such codes. 

Note that since circulant matrices are orthogonal, the transpose of $A$ and $B$ when interpreted as matrices corresponds to taking the additive inverse of the powers of $x$ and $y$ modulo their respective orders when interpreted as polynomials. For example, $x^{-1} \coloneqq x^{l-1}$ and $y^{-1} \coloneqq y^{m-1}$. 

The integers $l$ and $m$ for the orders of the cyclic groups $\bbZ_l$ and $\bbZ_m$ are referred to as the \textit{lattice parameters} of the BB code, since we can view the Tanner graph of the BB code as a $2l \times 2m$ 2D lattice with qubits and checks on vertices and periodic boundaries (akin to the toric code). \textit{We will denote a specific BB code via the notation} $Q(A,B,l,m)$ \textit{indicating the defining polynomials $A,B$ and the lattice parameters} $l,m$. \hfill $\blacksquare$
\end{rmk}

\begin{rmk}
    The term in the middle of the chain complex in \eqref{eq:BBcodeComplex} is a direct sum of two $\bbF_2$-vector spaces of dimension $lm$. This implies the BB code qubits have a 2-block structure, i.e. they can be split into two sets of `left' ($L$) and `right' ($R$) qubits each of size $lm$ so that $Q_1=Q_1^L \oplus Q_1^R=\mathbb{F}_2^{lm} \oplus \mathbb{F}_2^{lm}$.

Each of the sets of $L$ and $R$ qubits, $X$ checks, and $Z$ checks are in correspondence with the monomials $r \in M$, so we will occasionally use the following notation (borrowed from \cite{Bravyi2024BB}):
\begin{equation}
\begin{split}
    &q_{r}^L \in Q_1^L,\ q_{r}^R \in Q_1^R, \\
    &Z_{r} \in Q_2,\ X_{r} \in Q_0.
\end{split}
\end{equation}
The sets $\{X_r\}$ and $\{Z_r\}$ generate a basis for the $X$ and $Z$ stabilisers respectively, and the form of the qubit supports of these checks are
\begin{equation}
    \text{Supp}(X_r) = (r(A_1 + A_2 + A_3) , r (B_1 + B_2 + B_3)),
\label{eq:bb-x-supp}
\end{equation}
\begin{equation}
    \text{Supp}(Z_r) = (r(B^T_1 + B^T_2 + B^T_3) , r (A^T_1 + A^T_2 + A^T_3)),
\label{eq:bb-z-supp}
\end{equation}

where the supports in equations \ref{eq:bb-x-supp} and \ref{eq:bb-z-supp} are elements of $Q_1^L \oplus Q_1^R$. 

When discussing logical operators, it will be useful to denote tensor products of Pauli $X$ or $Z$ operators as $X(r,r')$ and $Z(r,r')$ respectively where the terms in the polynomials $r$ and $r'$ index respectively the left and right qubits that the Pauli $X$ or $Z$ operators act on. If $r=1+y$ for example, then the logical operator $X(r, 0)$ would correspond to operator $X_{1} \otimes X_{y}$, i.e. the Pauli $X$ acting on the left qubits indexed by $1$ and $y$ and identity on all other qubits, where we have suppressed the identity operators for clarity.  \hfill $\blacksquare$
\end{rmk}

\begin{rmk}
\label{rmk:BBcodeNotation}
The Tanner graphs $T$ of BB codes likewise have extra structure. The 2-block structure of the BB codes mentioned in the previous remark means that the qubit vertices can be split into two equal size sets of left and right qubits, $V_Q = V_{Q_L} \cup V_{Q_R}$. The vertex set of $T$ therefore decomposes as $V = \cup_i V_i$ where $V_i \in \{V_X, V_Z, V_{Q_L}, V_{Q_R}\}$ and each set $V_i$ contains exactly $lm$ elements. We label the elements of $V_i$ by the elements of the set $\mathbb{Z}_l \times \mathbb{Z}_m$  so that each $V_i \cong \mathbb{Z}_l \times \mathbb{Z}_m$.

Note that the elements of $\mathbb{Z}_l \times \mathbb{Z}_m$ also label the powers of a monomial in the monomial basis set $M$, e.g. $(a,b) \in \mathbb{Z}_l \times \mathbb{Z}_m$ for $x^ay^b \in M$. We specifically denote the powers of the monomials $A_i, B_j$ in the defining polynomials $A$ and $B$ respectively as the tuples $\alpha_i = (\alpha_{i1}, \alpha_{i2}), \beta_j = (\beta_{j1}, \beta_{j2})$ in $\mathbb{Z}_l \times \mathbb{Z}_m$.

The edges in the Tanner graph of a BB code are determined by the Abelian group $(\mathbb{Z}_l \times \mathbb{Z}_m, +)$ and the choice of defining polynomials $A, B$. Figure \ref{fig:BB-TG} shows a subgraph of a BB code Tanner graph with directed edges labelled by monomials $A_i, B_j$ of the defining polynomials $A, B$. Travelling along an edge labelled by $A_i$ in the forward direction corresponds to the addition of $\alpha_i$ while travelling in the reverse direction corresponds to the addition of $-\alpha_i$. This is because addition in $(\mathbb{Z}_l\times \mathbb{Z}_m , +)$ corresponds to multiplication of monomials in the ring $R$, since the multiplication of monomials corresponds to the addition of their powers.

Consider for example an $X$ check vertex labelled by $c = (c_1, c_2) \in (\mathbb{Z}_l \times \mathbb{Z}_m, +)$. The qubits in the neighbourhood of an edge are given by $$q \in \{c + \alpha_1, \ c+ \alpha_2, \ c+ \alpha_3, \ c+ \beta_1, \ c+\beta_2, \ c+\beta_3 \}.$$ Similarly the qubits linked to a $Z$ check vertex labelled $c$ are $$q \in \{ c-\alpha_1, \  c-\alpha_2, \ c-\alpha_3, \  c-\beta_1, \ c-\beta_2, \ c-\beta_3 \}.$$ Thus we can write all edges in a BB code Tanner graph in the form $e = [c,q] = [c, c+\alpha_i]$ or $e = [c,q] = [c, c+\beta_i]$. Note that the addition above is interpreted as modulo $l$ or $m$ depending on if the terms involve $\alpha_i$ or $\beta_i$ respectively. 

When $\{\alpha_i,\beta_j : i,j \in \{1,2,3\}\}$ is a generating set for the group $(\mathbb{Z}_l \times \mathbb{Z}_m,+)$, the Tanner graph is a Cayley graph. Note the edge labelling in Figure \ref{fig:BB-TG} should not be confused with the voltage assignment used later when generating derived graphs. \hfill $\blacksquare$
\end{rmk}

\begin{figure}[H]
    \centering
    \includegraphics[width=0.40\linewidth]{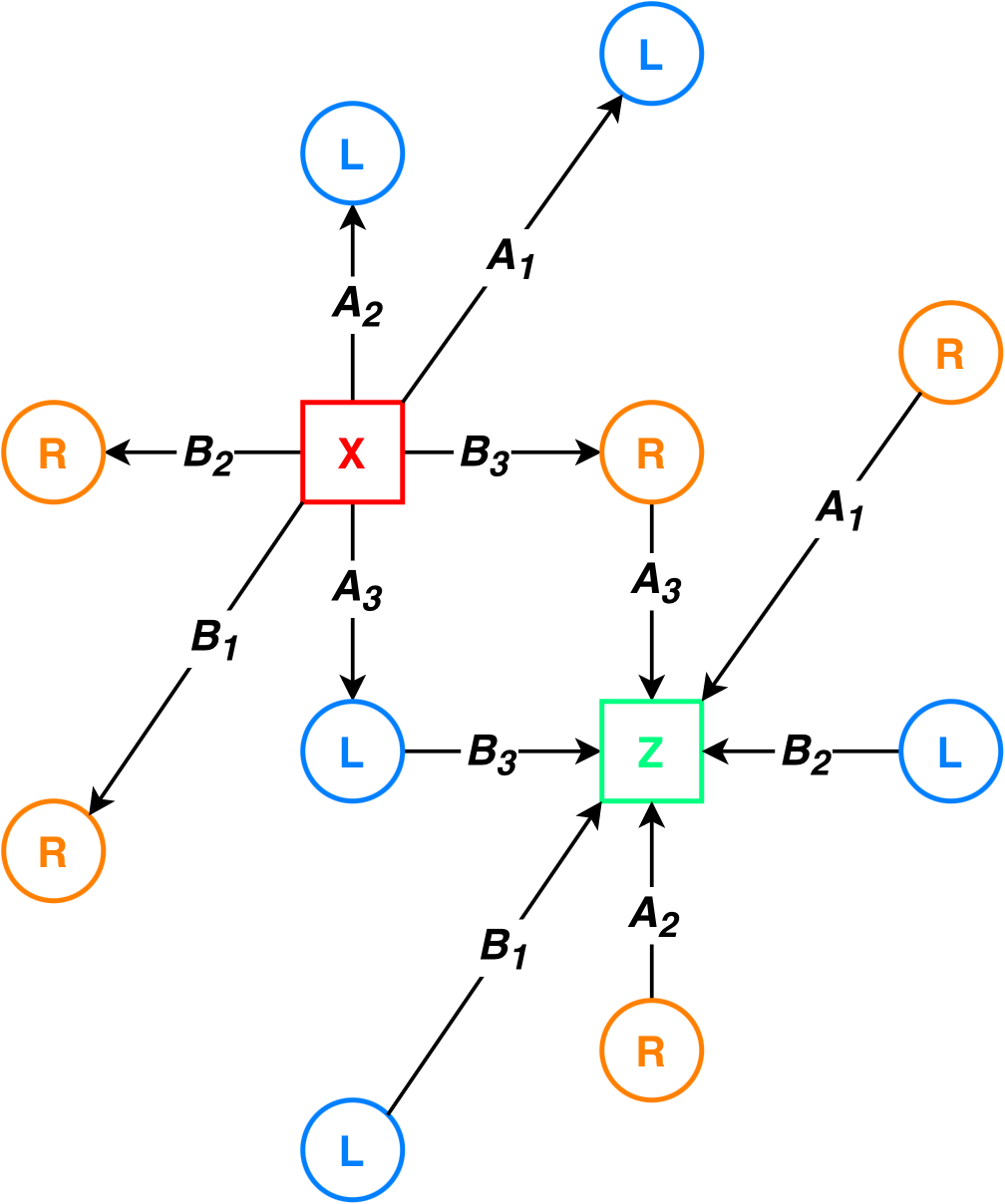}
    \caption{A diagram showing part of a BB code Tanner graph including a single X and Z check. Edges are labelled with elements from the defining polynomials $A,B$. Short edges represent local connections while long edges show non-local connections. See \cite{Bravyi2024BB} for more detail.}
    \label{fig:BB-TG}
\end{figure}

\section{Sequences of Bivariate Bicycle Codes}
\label{sec:SequenceBB}
In this section we will show that under the right conditions, the Tanner graph of one BB code can be viewed as a covering of the Tanner graph of a smaller BB code. This will allow us to construct a sequence of BB codes that are successively larger covers of some smaller base code. All of the BB codes in the original paper \cite{Bravyi2024BB} can be viewed as members of some sequence of so-called cover codes. For example, the Tanner graph of the gross code $[[144,12,12]]$ is a double cover of the Tanner graph of $[[72,12,6]]$ code. 

Given the Tanner graph of a BB code $Q(A,B,l,m)$, an appropriate voltage group $\Gamma$, and a voltage assignment $\nu$, we show that the generated derived graph (covering graph) is a Tanner graph of another BB code. We find that for the Tanner graph of a BB code $\tilde{Q}(\tilde{A}, \tilde{B}, \tilde{l}, \tilde{m})$ to be a cover of another BB code $Q(A,B,l,m)$, the powers of the monomials in $\tilde{A}, \tilde{B}$ must be congruent to those of $A,B$ modulo $l$ and $m$, and that $\tilde{l} = ul$ and $\tilde{m} = tm$. The simple algebraic conditions on the defining polynomials and lattice parameters in Theorem \ref{thm:derived-Tanner-iso} also enable an efficient search over a reduced space of polynomials that meet these conditions (see Section \ref{sec:CoverCodeNumerics}). 

In what follows, we use the notation $\Mod(a,l)$ to denote the remainder after the division of $a$ by $l$. Thus $a = ql + \Mod(a,l)$ where $q = \lfloor a/l \rfloor$. Expressions such as $\Mod(a+b,l) = \Mod(a,l) + \Mod(b,l)$ are interpreted as modulo $l$ on the right hand side. We will sometimes use the notation `$\text{mod } l$' at the end of an equality to make this explicit as needed. 

We now state our main theorem as follows:

\begin{thm}
\label{thm:derived-Tanner-iso}
    With the notation as in Remarks \ref{rmk:BBcodeDefn} and \ref{rmk:BBcodeNotation}, let $Q(A,B,l,m)$ and $\tilde{Q}(\tilde{A}, \tilde{B}, \tilde{l}, \tilde{m})$ be two BB codes with Tanner graphs $T:=T(Q)$ and $\tilde{T} := T(\tilde{Q})$ respectively, where the latter has the vertex set $\tilde{V} = \cup_{i=1}^4 \tilde{V}_i = \tilde{V}_{Q_L} \cup \tilde{V}_{Q_R} \cup \tilde{V}_{X} \cup \tilde{V}_Z$ with $\tilde{V}_i \cong \mathbb{Z}_{\tilde{l}} \times \mathbb{Z}_{\tilde{m}}$ for each $i$. 
    
    Let $V'=\cup_i V'_i = (V_{Q_L} \cup V_{Q_R} \cup V_{X} \cup V_Z) \times \Gamma$, where $V'_i \cong \mathbb{Z}_l \times \mathbb{Z}_m \times \Gamma$ for each $i$ and $\Gamma = \bbZ_u \times \bbZ_t$, be a vertex set and suppose the following conditions are met:
    \begin{enumerate}
        \item $\tilde{l} = ul$ 
        \item $\tilde{m} = tm$
        \item Mod$(\tilde{\alpha}_{i1}, l) = \alpha_{i1}$ and Mod$(\tilde{\alpha}_{i2}, m) = \alpha_{i2}$, where $\tilde{\alpha}_i = (\tilde{\alpha}_{i1}, \tilde{\alpha}_{i2})$ are the powers of $x$ and $y$ in the monomials $\tilde{A_i} = x^a y^b$ in $\tilde{A} = \tilde{A}_1 + \tilde{A}_2 + \tilde{A}_3$, and similarly for $\alpha_i = (\alpha_{i1}, \alpha_{i2})$ and the monomials $A_i$ in $A$. 
        \item Mod$(\tilde{\beta}_{i1}, l) = \beta_{i1}$ and Mod$(\tilde{\beta}_{i2}, m) = \beta_{i2}$, where $\tilde{\beta}_i = (\tilde{\beta}_{i1}, \tilde{\beta}_{i2})$ are the powers of $x$ and $y$ in the monomials $\tilde{B_i} = x^a y^b$ in $\tilde{B} = \tilde{B}_1 + \tilde{B}_2 + \tilde{B}_3$ and similarly for $\beta_i = (\beta_{i1}, \beta_{i2})$ and the monomials $B_i$ in $B$.
    \end{enumerate}
    Consider the derived graph $G(V',E') = D(T,\Gamma)$ of $T$ with vertex set $V'$, voltage group $\Gamma = (\mathbb{Z}_u \times \mathbb{Z}_t, +)$, and voltage assignment $$v(e) = (\lfloor (c_1+\tilde{\alpha}_{i1})/l \rfloor, \lfloor (c_2 + \tilde{\alpha}_{i2})/m \rfloor)$$ on the edges $e=[c, c+\alpha_i] = [(c_1, c_2), (c_1, c_2) + (\alpha_{i1},\alpha_{i2})]$ of $T$ and similarly for the edges $[c, c+\beta_j]$. Then the Tanner graph $\tilde{T}$ is graph isomorphic to the derived graph $G = D(T,\Gamma)$ of $T$. 
\end{thm}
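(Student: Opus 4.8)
The plan is to produce an explicit graph isomorphism $\psi\colon \tilde V\to V'$ and verify it against Definition \ref{dfn:graph-iso}. On each of the four vertex classes I would identify $\tilde V_i\cong\mathbb{Z}_{ul}\times\mathbb{Z}_{tm}$ and $V'_i\cong\mathbb{Z}_l\times\mathbb{Z}_m\times\mathbb{Z}_u\times\mathbb{Z}_t$ (legitimate by conditions 1 and 2) and define $\psi$ by the mixed-radix, division-algorithm decomposition
$$(\tilde c_1,\tilde c_2)\ \longmapsto\ \bigl((\Mod(\tilde c_1,l),\ \Mod(\tilde c_2,m)),\ (\lfloor \tilde c_1/l\rfloor,\ \lfloor \tilde c_2/m\rfloor)\bigr),$$
applied identically on each class, so that $\psi$ carries $\tilde V_X$ to $V'_X$, $\tilde V_Z$ to $V'_Z$, $\tilde V_{Q_L}$ to $V'_{Q_L}$, and $\tilde V_{Q_R}$ to $V'_{Q_R}$. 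Division with remainder by $l$ (resp.\ $m$) puts residues modulo $ul$ (resp.\ $tm$) in bijection with pairs $(c_1,g_1)$, $0\le c_1<l$, $0\le g_1<u$ (resp.\ $(c_2,g_2)$), so $\psi$ is a bijection of vertex sets respecting the class decomposition. To upgrade this to a graph isomorphism it then suffices, by Definition \ref{dfn:graph-iso}, to show $\psi$ sends edges of $\tilde T$ to edges of $G=D(T,\Gamma)$: a BB Tanner graph with lattice parameters $l,m$ has $12lm$ edges, so $|E(\tilde T)|=12\tilde l\tilde m=12lm\cdot ut=|E(T)|\cdot|\Gamma|=|E'|$ (the last equality by Definition \ref{dfn:derived-graph}), hence an injection $E(\tilde T)\hookrightarrow E'$ induced by a vertex bijection is automatically a bijection (alternatively one can run the reverse computation).

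Next I would describe both edge sets concretely. By Remark \ref{rmk:BBcodeNotation} applied to $\tilde Q$, every edge of $\tilde T$ can be written $[\tilde c,\ \tilde c+\tilde\alpha_i]$ or $[\tilde c,\ \tilde c+\tilde\beta_j]$, addition taken in $\mathbb{Z}_{ul}\times\mathbb{Z}_{tm}$: for an $\tilde X$-check base $\tilde c$ the other endpoint is an $\tilde L$- or $\tilde R$-qubit, while the edges incident to a $\tilde Z$-check $\tilde d$, naturally $[\tilde d,\tilde d-\tilde\beta_j]$ and $[\tilde d,\tilde d-\tilde\alpha_i]$, are re-expressed with the qubit as base, i.e.\ $[\tilde q,\tilde q+\tilde\beta_j]$ and $[\tilde q,\tilde q+\tilde\alpha_i]$ (using the convention that reversing an undirected edge inverts its label). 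The same description holds for $T$ with $A,B,l,m$, and by Definition \ref{dfn:derived-graph} the edges of $G$ are exactly $[(c,\gamma),\ (c+\alpha_i,\ \gamma+\nu(e))]$, and the $\beta_j$ analogue, as $e=[c,c+\alpha_i]$ ranges over edges of $T$ and $\gamma$ over $\Gamma$.

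The core of the argument is a short carry computation. Fix an edge $[\tilde c,\tilde c+\tilde\alpha_i]$ of $\tilde T$; write $\psi(\tilde c)=((c_1,c_2),(g_1,g_2))$, so $\tilde c_1=c_1+lg_1$ with $c_1=\Mod(\tilde c_1,l)$, and likewise in the second coordinate. Condition 3 lets me write $\tilde\alpha_{i1}=\alpha_{i1}+lh_1$ with $\alpha_{i1}=\Mod(\tilde\alpha_{i1},l)$ and $h_1=\lfloor\tilde\alpha_{i1}/l\rfloor$, and within $T$ one has $c_1+\alpha_{i1}=q_1+l\varepsilon_1$ where $q_1=\Mod(c_1+\alpha_{i1},l)$ is the first coordinate of $c+\alpha_i$ and $\varepsilon_1=\lfloor(c_1+\alpha_{i1})/l\rfloor\in\{0,1\}$. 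Then $\tilde c_1+\tilde\alpha_{i1}=q_1+l(g_1+\varepsilon_1+h_1)$, so reducing modulo $ul$ and applying $\psi$ to $\tilde c+\tilde\alpha_i$ gives $T$-component first coordinate $q_1$ and $\Gamma$-component first coordinate $\Mod(g_1+\varepsilon_1+h_1,u)$; since $\nu(e)_1=\lfloor(c_1+\tilde\alpha_{i1})/l\rfloor=\varepsilon_1+h_1$, this is exactly the first $\Gamma$-component of $\gamma+\nu(e)$. The second coordinate is identical with $m,t$ in place of $l,u$, and the $\tilde\beta_j$-edges and the qubit-based edges (those meeting $\tilde Z$-checks) go through verbatim with the appropriate monomial. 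Hence $\psi$ maps $[\tilde c,\tilde c+\tilde\alpha_i]$ onto $[(c,\gamma),(c+\alpha_i,\gamma+\nu(e))]\in E'$, and with the edge count above $\psi$ is a graph isomorphism.

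I expect the only real difficulty to be bookkeeping: making sure the voltage assignment, stated only for edges written as $[c,c+\alpha_i]$, is applied to \emph{every} edge with the correct choice of base endpoint (so that $Z$-check edges are handled consistently), and that the nested floors and remainders compose correctly in the ``double carry'' case where both the internal carry $\varepsilon_1$ and the shift $h_1$ coming from $\tilde\alpha_i$ are nonzero. None of the individual steps is deep, but the indexing must be tracked carefully across the four vertex classes and the two monomial families $\{\tilde\alpha_i\}$ and $\{\tilde\beta_j\}$.
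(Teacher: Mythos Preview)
Your proposal is correct and follows essentially the same route as the paper: the bijection $\psi$ is exactly the map the paper calls $f$ in Lemma~\ref{lem:vertex-iso}, and your carry computation is the paper's forward-direction edge check reorganized around the explicit carries $\varepsilon_1$ and $h_1$. The only difference is that the paper establishes the reverse implication by directly verifying that $f^{-1}$ is a graph homomorphism, whereas you shortcut this via the edge count $|E(\tilde T)|=12\tilde l\tilde m=12lm\cdot|\Gamma|=|E'|$; both arguments are valid and the overall strategies coincide.
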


Before proving this theorem, we will first define a bijection between the vertex sets of $\tilde{T}$ and $G$. 

\begin{lem}
\label{lem:vertex-iso}
    With notation as in the statement of the above theorem, let $f \colon \tilde{V}_i \rightarrow V'_i$ be defined by
    \begin{equation}
        f((\tilde{a}, \tilde{b})) = (\text{Mod}(\tilde{a}, l), \text{Mod}(\tilde{b}, m), s_1, s_2),
    \label{eq:graph-hom}
    \end{equation}
    where $s_1 = \lfloor \tilde{a}/l \rfloor $ and $s_2 = \lfloor \tilde{b}/m \rfloor$. Then $f$ is a bijection with inverse $f^{-1} \colon V'_i \rightarrow \tilde{V}_i$ defined by 
    \begin{equation}
    \label{eq:inverse-graph-hom}
        f^{-1}(a, b, s_1, s_2) = (a + ls_1, b + ms_2).
    \end{equation}
\end{lem}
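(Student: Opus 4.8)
The plan is to reduce the whole statement to the division algorithm applied one coordinate at a time, so that no computation beyond bounding quotients and remainders is needed. First I would observe that $f$ splits as a product of one-variable maps: up to the obvious reindexing of coordinates $(a,s_1,b,s_2)\mapsto(a,b,s_1,s_2)$ that identifies $(\mathbb{Z}_l\times\mathbb{Z}_u)\times(\mathbb{Z}_m\times\mathbb{Z}_t)$ with $\mathbb{Z}_l\times\mathbb{Z}_m\times\Gamma\cong V'_i$, we have $f=g\times g'$, where $g\colon\mathbb{Z}_{\tilde l}\to\mathbb{Z}_l\times\mathbb{Z}_u$ sends $\tilde a\mapsto(\Mod(\tilde a,l),\lfloor\tilde a/l\rfloor)$ and $g'\colon\mathbb{Z}_{\tilde m}\to\mathbb{Z}_m\times\mathbb{Z}_t$ is the analogous map built from $m$ and $t$. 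Since a product of bijections is a bijection with the product inverse, it suffices to prove $g$ is a bijection with inverse $h(a,s_1)=a+ls_1$ (and then $g'$ by the identical argument with $(m,t)$ in place of $(l,u)$).

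To handle $g$, I would fix canonical representatives, identifying $\mathbb{Z}_{\tilde l}$ with $\{0,1,\dots,\tilde l-1\}=\{0,1,\dots,ul-1\}$ and $\mathbb{Z}_l$, $\mathbb{Z}_u$ with $\{0,\dots,l-1\}$, $\{0,\dots,u-1\}$. For $\tilde a$ in this range, division by $l$ produces a unique quotient $s_1:=\lfloor\tilde a/l\rfloor$ and remainder $a:=\Mod(\tilde a,l)$ with $0\le a\le l-1$, $\tilde a=a+ls_1$, and moreover $s_1\le u-1$ because $\tilde a\le ul-1$. This simultaneously shows that $g$ is well defined (its image lies in $\mathbb{Z}_l\times\mathbb{Z}_u$) and that $h(g(\tilde a))=a+ls_1=\tilde a$. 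Conversely, for $(a,s_1)$ with $0\le a\le l-1$ and $0\le s_1\le u-1$, the integer $a+ls_1$ lies in $\{0,\dots,ul-1\}$, hence is a legitimate representative of $\mathbb{Z}_{\tilde l}$, and by uniqueness of quotient and remainder $\Mod(a+ls_1,l)=a$ and $\lfloor(a+ls_1)/l\rfloor=s_1$, so $g(h(a,s_1))=(a,s_1)$. Thus $g$ has two-sided inverse $h$. Reassembling $f=g\times g'$ yields the bijectivity of $f$ with inverse $(a,b,s_1,s_2)\mapsto(a+ls_1,b+ms_2)$, which is exactly $f^{-1}$ as in \eqref{eq:inverse-graph-hom}. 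As a consistency check one can also note $|\tilde V_i|=\tilde l\tilde m=ul\cdot tm=lm\cdot ut=|V'_i|$, so injectivity alone would already suffice, but exhibiting $f^{-1}$ explicitly is cleaner and is what the rest of the paper uses.

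I do not expect a genuine obstacle here; the lemma is essentially a repackaging of the division algorithm. The only point needing care — and the closest thing to a subtlety — is the bookkeeping of representatives: one must fix canonical representatives in $\{0,\dots,\tilde l-1\}$ and $\{0,\dots,\tilde m-1\}$ so that $\Mod$ and $\lfloor\cdot\rfloor$ invert $h$ \emph{on the nose} rather than merely modulo $l$ or $m$, and one must check that $f^{-1}$ as written actually lands back in $\tilde V_i\cong\mathbb{Z}_{\tilde l}\times\mathbb{Z}_{\tilde m}$, i.e. that $a+ls_1\le ul-1$ and $b+ms_2\le tm-1$. Both follow immediately from the range bounds $a\le l-1$, $s_1\le u-1$ (resp.\ $b\le m-1$, $s_2\le t-1$) established above.
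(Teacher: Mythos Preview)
Your proposal is correct and follows essentially the same approach as the paper: both arguments verify that the candidate inverse is a two-sided inverse by invoking the division-algorithm identity $\tilde a=\Mod(\tilde a,l)+l\lfloor\tilde a/l\rfloor$. The only difference is organizational---you factor $f$ as a product $g\times g'$ of one-variable maps and treat one coordinate at a time, whereas the paper computes $f\circ f^{-1}$ and $f^{-1}\circ f$ directly on the full tuple; your version is also slightly more explicit about the range checks (e.g.\ $s_1\le u-1$), which the paper defers to the surrounding theorem.
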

\begin{proof}
It suffices to show that $f^{-1}$ is a two-sided inverse of $f$ as claimed. 

Note that, 
\begin{align*}
    (f \circ f^{-1})(a,b, s_1, s_2) &= f(a + ls_1, b + m s_2) \\
    &= (\Mod(a + ls_1,l), \Mod(b + ms_2,m), \lfloor (a + ls_1)/l \rfloor, \lfloor (b + ms_2)/m \rfloor \\
    &= (a,b,s_1,s_2). 
\end{align*}
Similarly,
\begin{align*}
    (f^{-1} \circ f)(\tilde{a},\tilde{b}) &= f^{-1}((\text{Mod}(\tilde{a}, l), \text{Mod}(\tilde{b}, m), \lfloor \tilde{a}/l \rfloor, \lfloor \tilde{b}/m \rfloor)) \\
    &=(\text{Mod}(\tilde{a}, l) + l\lfloor \tilde{a}/l \rfloor, \text{Mod}(\tilde{b}, m) + m \lfloor \tilde{b}/m \rfloor) \\
    &=(\tilde{a}, \tilde{b}). \qedhere
\end{align*}
\end{proof}

\begin{proof}[Proof of Theorem \ref{thm:derived-Tanner-iso}]
By Definition \ref{dfn:graph-hom-iso}, a graph isomorphism requires an edge preserving bijection between vertices. Lemma \ref{lem:vertex-iso} shows that $f$ is a bijection between the vertices of $\tilde{T}$ and $G$, so we now prove that this map preserves edges. Note that in the following, we consider only edges of the form $[c, c+\alpha_i]$ but the same logic holds for the remaining edges of the form $[c, c+\beta_j]$.

Consider an edge $\tilde{e} = [\tilde{c}, \tilde{q}]$ in $\tilde{T}$ with $\tilde{c} \in \tilde{V}_X, \tilde{V}_Z$ and $\tilde{q} \in \tilde{V}_{Q_L}, \tilde{V}_{Q_R}$. By Remark \ref{rmk:BBcodeNotation}, we can write $\tilde{e} = [\tilde{c}, \tilde{c} + \tilde{\alpha}_i]$ where $\tilde{c} = (\tilde{c}_1, \tilde{c}_2)$ and $\tilde{\alpha}_i = (\tilde{\alpha}_{i1}, \tilde{\alpha}_{i2})$ are in $\mathbb{Z}_{\tilde{l}} \times \mathbb{Z}_{\tilde{m}}$. Now consider an arbitrary edge $e'$ in the derived graph $G$ which has the form
\begin{equation*}
\begin{split}
    e' &= [(a, \gamma) , (b, \gamma + \nu(e))] \\ 
    &= [(a_1, a_2, \gamma_1, \gamma_2), (a_1 + \alpha_{i1}, a_2 + \alpha_{i2}, \gamma_1 + g_1, \gamma_2 + g_2)],
\end{split}
\end{equation*}
where $\nu(e) = g = (g_1, g_2) \in \mathbb{Z}_u \times \mathbb{Z}_t$, $\gamma = (\gamma_1,\gamma_2) \in \mathbb{Z}_u \times \mathbb{Z}_t$, and $\alpha_i = (\alpha_{i1}, \alpha_{i2}) \in \bbZ_l \times \bbZ_m$. We first show that applying the map $f$ to the vertices in an edge $\tilde{e}$ maps it to an edge in the derived graph of the above form. By the definition of $f$, 
\begin{equation*}
    f(\tilde{c}) = (\text{Mod}(\tilde{c}_1, l), \text{Mod}(\tilde{c}_2, m), s_1, s_2)
\end{equation*}
with $$s_1 = \lfloor \tilde{c}_1/l \rfloor, \ s_2 = \lfloor \tilde{c}_2/m \rfloor.$$ Similarly,
\begin{equation*}
    f(\tilde{q}) = (\text{Mod}(\tilde{c}_1+\tilde{\alpha}_{i1}, l), \text{Mod}(\tilde{c}_2+\tilde{\alpha}_{i2}, m), s'_1, s'_2)
\end{equation*}
with $$s'_1 = \lfloor (\tilde{c}_1 + \tilde{\alpha}_{i1})/l \rfloor, \ s'_2 = \lfloor (\tilde{c}_2 + \tilde{\alpha}_{i2})/m \rfloor.$$ 

Define $c_1 = \text{Mod}(\tilde{c_1},l), \ c_2 = \text{Mod}(\tilde{c}_2,m)$. Then $$\tilde{c}_1 = s_1 l + c_1, \hspace{1em} \tilde{c}_2 = s_2 m + c_2,$$ from which it follows that $c_1 < l$ and $c_2 < m$. Since $\tilde{c}_1 < \tilde{l} = ul$, where the last equality is from the first assumption in the theorem statement, we have $s_1 l + c_1 < ul$. This implies $s_1 < u$ and therefore that $s_1 \in \bbZ_u$. A similar reasoning implies $s_2 \in \bbZ_t$ since $\tilde{m} = tm$ by assumption. Thus $f(\tilde{c}) \in \bbZ_l \times \bbZ_m \times \Gamma$. 

Next, note that 
\begin{equation*}
\begin{split}
    &\text{Mod}(\tilde{c}_1 + \tilde{\alpha}_{i1}, l) = (c_1 + \alpha_{i1}) \text{ mod } l,\\
    &\text{Mod}(\tilde{c}_2 + \tilde{\alpha}_{i2}, m) = (c_2 + \alpha_{i2}) \text{ mod } m, 
\end{split}
\label{eq:dg-8}
\end{equation*}
where we used the third and fourth assumptions in the theorem statement and the fact that Mod$(\tilde{c}_1 + \tilde{\alpha}_{i1}, l) = \text{Mod}(\tilde{c}_1, l) + \text{Mod}(\tilde{\alpha}_{i1}, l)$, with the sum on the RHS taken to be modulo $l$.

Finally, note that 
\begin{align*}
    s'_1 &= \lfloor (\tilde{c}_1 + \tilde{\alpha}_{i1})/l \rfloor \\
    &= \lfloor (s_1 l + c_1 + \tilde{\alpha}_{i1})/l \rfloor \\
    &= \lfloor(c_1 + \tilde{\alpha}_{i1})/l \rfloor + s_1 = g_1 + s_1,
\end{align*}
by the definition of our voltage assignment. A similar argument shows that $s'_2 = s_2 + g_2$. Thus, $f(\tilde{q}) = (c_1 + \alpha_{i1}, c_2 + \alpha_{i2}, s_1 + g_1, s_2 + g_2)$, showing that $[f(\tilde{c}),f(\tilde{q})]$ is an edge in the derived graph $G$. 
 
In order to complete the proof, we show the map $f^{-1}$ in \eqref{eq:inverse-graph-hom} is also a graph homomorphism. Let 
\begin{align*}
    e &= [(c,\gamma), (c + \alpha_i, \gamma + g)] \\ 
    &= [(c_1, c_2, \gamma_1, \gamma_2), (\Mod(c_1 + \alpha_{i1},l), \Mod(c_2 + \alpha_{i2},m), \Mod(\gamma_1 + g_1, u), \Mod(\gamma_2 + g_2,t)].
\end{align*}
be an edge in the derived graph $G$, where $g = (g_1, g_2) = \nu(e)$ and we make explicit the integer over which the sums in the last equality are taken modulo of. We must show that $f^{-1}((c,\gamma))$ and $f^{-1}((c+\alpha_i, \gamma + g))$ form an edge in $\tilde{T}$. To do so, it suffices to show that $$f^{-1}((c+\alpha_i, \gamma + g)) = f^{-1}((c,\gamma)) + \tilde{\alpha}_i,$$ where $\tilde{\alpha}_i = (\tilde{\alpha}_{i1}, \tilde{\alpha}_{i2})$. We make the argument for one of the vertex components since the argument for the other is analogous. 

First, let $\tilde{\alpha}_{i1} = \alpha_{i1} + l\lambda$. By the definition of $f^{-1}$,
\begin{align*}
    f^{-1}((c,\gamma)) &= (c_1 + l \gamma_1, c_2 + m \gamma_2) \\
    f^{-1}((c + \alpha_i, \gamma + g)) = (&\Mod(c_1 + \alpha_{i1},l) + \Mod(\gamma_1 + g_1,u)l, \\
    &\Mod(c_2 + \alpha_{i2},m) + \Mod(\gamma_2 + g_2,t)m). \nonumber
\end{align*}
By the definition of our voltage assignment, $$g_1 = \lfloor (c_1+\tilde{\alpha}_{i1})/l \rfloor = \lfloor (c_1+\alpha_{i1} + l\lambda)/l \rfloor = \lfloor (c_1+\alpha_{i1})/l \rfloor + \lambda.$$ We first write 
\begin{align*}
    \Mod(c_1 + \alpha_{i1},l) &= c_1 + \alpha_{i1} - l\lfloor (c_1+\alpha_{i1})/l \rfloor \\
    \Mod(\gamma_1 + g_1,u) &= \gamma_1 + g_1 - us
\end{align*}
where $s = \lfloor (\gamma_1 + g_1)/u \rfloor$. 
This implies 
\begin{align*}
\Mod(c_1 + \alpha_{i1},l) + \Mod(\gamma_1 + g_1,u)l &= c_1 + \alpha_{i1} - l\lfloor (c_1+\alpha_{i1})/l \rfloor + \gamma_1 l + g_1 l - usl \mod \tilde{l}\\
&= c_1 + \alpha_{i1} + \gamma_1 l + \lambda l \mod \tilde{l}\\
&= c_1 + \gamma_1 l + \tilde{\alpha}_1 \mod \tilde{l}, 
\end{align*}
where we used $\tilde{l} = ul$ and the expression for $g_1$ to arrive at the second equality above. Repeating an analogous argument for the other components shows that $f^{-1}((c+\alpha_i, \gamma + g)) = f^{-1}((c,\gamma)) + \tilde{\alpha}_i$ as desired. 
\end{proof}

\begin{thm}
\label{thm:graphProjectionThm}
    Under the conditions in Theorem \ref{thm:derived-Tanner-iso}, the Tanner graph $\tilde{T}$  equipped with the natural projection map $\pi : \mathbb{Z}_{\tilde{l}} \times \mathbb{Z}_{\tilde{m}} \rightarrow \mathbb{Z}_{l} \times \mathbb{Z}_{m}$ defined as $\pi((\tilde{a}, \tilde{b})) = (\text{Mod}(\tilde{a}, l), \text{Mod}(\tilde{b}, m))$ is a $|\Gamma|-$covering of the Tanner graph $T$.
\label{thm:Tanner-cover}
\end{thm}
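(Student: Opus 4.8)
The plan is to obtain this as a short corollary of Theorem~\ref{thm:derived-Tanner-iso} together with the standard fact, quoted in the preliminaries (see \cite{TopGraphTheory}), that the right-derived graph $D(T,\Gamma)$ is a $|\Gamma|$-sheeted cover of $T$. The point is that Theorem~\ref{thm:derived-Tanner-iso} already identifies $\tilde T$ with $G = D(T,\Gamma)$ as graphs, so it remains only to transport the canonical covering map of $G$ across this identification and check that it is precisely $\pi$.

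First I would recall that the derived graph $G(V',E')$ comes equipped with its canonical covering map $p \colon V' \to V$ given by projection onto the base vertex, $p((v,\gamma)) = v$ for $v \in V$ and $\gamma \in \Gamma$; by the cited result $p$ is a covering map and $|p^{-1}(v)| = |\Gamma| = ut$ for every $v$. Next, Theorem~\ref{thm:derived-Tanner-iso} (through Lemma~\ref{lem:vertex-iso}) supplies a graph isomorphism $f \colon \tilde V \to V'$ with $f((\tilde a,\tilde b)) = (\Mod(\tilde a,l), \Mod(\tilde b,m), \lfloor \tilde a/l\rfloor, \lfloor \tilde b/m\rfloor)$ that respects the block decompositions $\tilde V_i \leftrightarrow V'_i$. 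Composing, the map $p\circ f \colon \tilde V \to V$ sends $(\tilde a,\tilde b)$ to $(\Mod(\tilde a,l), \Mod(\tilde b,m))$ inside each block $\tilde V_i$, which is exactly the projection $\pi$ of the theorem, extended block-wise from $\mathbb{Z}_{\tilde l}\times\mathbb{Z}_{\tilde m}\to\mathbb{Z}_l\times\mathbb{Z}_m$ to all of $\tilde V = \tilde V_{Q_L}\cup\tilde V_{Q_R}\cup\tilde V_X\cup\tilde V_Z$.

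It then remains to note that $\pi = p\circ f$ is a covering map. This is the only step requiring an argument, and it is the routine observation that precomposing a covering map with a graph isomorphism again yields a covering map: for each $\tilde v\in\tilde V$, the isomorphism $f$ restricts to a bijection $N_{\tilde T}(\tilde v)\to N_G(f(\tilde v))$ and the covering map $p$ restricts to a bijection $N_G(f(\tilde v))\to N_T(p(f(\tilde v)))$, so $p\circ f$ restricts to a bijection $N_{\tilde T}(\tilde v)\to N_T(\pi(\tilde v))$, as required by Definition~\ref{dfn:cover-graph}; surjectivity of $\pi$ follows from that of $f$ and $p$, and since $f$ is a bijection each fibre satisfies $|\pi^{-1}(v)| = |p^{-1}(v)| = |\Gamma| = ut$. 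Hence $\pi$ exhibits $\tilde T$ as a $|\Gamma|$-sheeted cover of $T$. I expect the only thing that needs care is the bookkeeping in identifying $p\circ f$ with $\pi$ block-by-block (so that $\pi$ genuinely respects the $X$/$Z$/left-qubit/right-qubit partition, which it does because both $f$ and $p$ do) and the fibre-cardinality count; alternatively one could bypass the derived-graph fact entirely and verify the covering-map axioms for $\pi$ directly from the edge description in Remark~\ref{rmk:BBcodeNotation}, but routing through Theorem~\ref{thm:derived-Tanner-iso} is cleaner.
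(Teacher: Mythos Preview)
Your proposal is correct and follows essentially the same route as the paper: define $\pi$ as the composition $c \circ f$ of the canonical projection $c \colon V' \to V$ (which is a $|\Gamma|$-cover by the cited fact about derived graphs) with the graph isomorphism $f$ from Theorem~\ref{thm:derived-Tanner-iso}. The paper's proof is simply a terser version of yours, omitting the explicit verification that precomposition with an isomorphism preserves the covering property and the block-wise bookkeeping you spell out.
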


\begin{proof}
    All derived graphs $D(T,\Gamma)$ are $|\Gamma|$-fold covering graphs of $T$ under the canonical projection map $c \colon V' \rightarrow V$ sending $(v,g) \mapsto v$ (see \cite{TopGraphTheory}). The claim then follows from Theorem \ref{thm:derived-Tanner-iso} upon defining $\pi \coloneqq c \circ f$.
\end{proof}

\begin{thm}
    Given a base BB code $Q = (A,B,l,m)$ with parameters $[[n, k, d]]$, there exists an infinite sequence of $h$-cover codes $\tilde{Q}_h = (\tilde{A}, \tilde{B}, \tilde{l}=ul, \tilde{m} = tm)$ with $h:=ut$, defining polynomials satisfying $\pi(\tilde{\alpha}_i) = \alpha_i$ and $\pi(\tilde{\beta}_j) = \beta_j$ $\forall i,j \in \{1,2,3\}$, and parameters $[[n_h = hn, k_h, d_h]]$.
\label{thm:code-seqs}
\end{thm}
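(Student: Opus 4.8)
The plan is to construct the sequence explicitly by taking, for each pair of positive integers $(u,t)$, the \emph{trivial lift} of the defining data of $Q$. Set $\tilde{l} = ul$, $\tilde{m} = tm$, $h := ut$, and $\Gamma = \bbZ_u \times \bbZ_t$, so that $|\Gamma| = h$. Write the monomials of $A$ and $B$ using their standard exponent representatives $\alpha_i = (\alpha_{i1},\alpha_{i2})$ and $\beta_j = (\beta_{j1},\beta_{j2})$ with $0 \le \alpha_{i1},\beta_{j1} < l$ and $0 \le \alpha_{i2},\beta_{j2} < m$, and define $\tilde{A} = \tilde{A}_1 + \tilde{A}_2 + \tilde{A}_3$, $\tilde{B} = \tilde{B}_1 + \tilde{B}_2 + \tilde{B}_3$ by $\tilde{A}_i = x^{\alpha_{i1}} y^{\alpha_{i2}}$, $\tilde{B}_j = x^{\beta_{j1}} y^{\beta_{j2}}$, now read as elements of $\bbF_2[x,y]/(x^{\tilde l}-1, y^{\tilde m}-1)$. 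Since $\alpha_{i1} < l \le \tilde l$ and $\alpha_{i2} < m \le \tilde m$ (and similarly for the $\beta$'s), these are legitimate monomials; and since the exponent tuples of the three monomials of $A$ (resp.\ $B$) are pairwise distinct in $\{0,\dots,l-1\}\times\{0,\dots,m-1\}$—they must be, or terms would cancel over $\bbF_2$ and $Q$ would fail to be a weight-$6$ BB code—the lifted triples remain pairwise distinct. Hence $\tilde{Q}_h := Q(\tilde A,\tilde B,\tilde l,\tilde m)$ is a genuine BB code.

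Next I would check the four hypotheses of Theorem~\ref{thm:derived-Tanner-iso} with voltage group $\Gamma$. Conditions 1 and 2 hold by the definitions of $\tilde l,\tilde m$. For conditions 3 and 4, the lifted powers $\tilde\alpha_i$ and $\tilde\beta_j$ have the same integer coordinates as $\alpha_i$ and $\beta_j$, which already lie in $\{0,\dots,l-1\}\times\{0,\dots,m-1\}$; hence $\Mod(\tilde\alpha_{i1},l)=\alpha_{i1}$, $\Mod(\tilde\alpha_{i2},m)=\alpha_{i2}$ and likewise for the $\beta$'s. Theorem~\ref{thm:derived-Tanner-iso} therefore yields a graph isomorphism $T(\tilde Q_h)\cong D(T(Q),\Gamma)$, and Theorem~\ref{thm:graphProjectionThm} promotes this to the statement that $T(\tilde Q_h)$ is a $|\Gamma|=h$-sheeted covering of $T(Q)$ under the projection $\pi((\tilde a,\tilde b))=(\Mod(\tilde a,l),\Mod(\tilde b,m))$. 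By the definition of a cover code, $\tilde Q_h$ is an $h$-cover code of $Q$, and from the construction of $\tilde A,\tilde B$ and $\pi$ we get $\pi(\tilde\alpha_i)=\alpha_i$ and $\pi(\tilde\beta_j)=\beta_j$ for all $i,j\in\{1,2,3\}$.

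It then remains to read off parameters and infinitude. A BB code $Q(A,B,l,m)$ has $n = 2lm$ qubits, since the degree-$1$ term of the complex \eqref{eq:BBcodeComplex} is $\bbF_2^{lm}\oplus\bbF_2^{lm}$; thus $\tilde Q_h$ has $n_h = 2\tilde l\tilde m = 2(ul)(tm) = ut\cdot 2lm = hn$ qubits, and $k_h,d_h$ are the (well-defined) dimension of the first homology and minimum logical weight of $\tilde Q_h$. Finally, choosing $(u,t)=(h,1)$ for each $h\in\bbN$ produces such a cover code $\tilde Q_h$ for every positive integer $h$ (with $\tilde Q_1 = Q$); since $n_h = hn$ is strictly increasing in $h$, these codes are pairwise non-isomorphic, giving the claimed infinite sequence.

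There is essentially no obstacle: the theorem is a direct corollary of Theorems~\ref{thm:derived-Tanner-iso} and \ref{thm:graphProjectionThm}. The only point requiring care is the sanity check that the trivial lift still defines a genuine weight-$6$ BB code—distinct, in-range monomials—handled above; the rest is bookkeeping of lattice parameters and of the covering degree. One could of course instead take non-trivial lifts $\tilde\alpha_{i1} = \alpha_{i1} + l\lambda$, etc., to obtain the more interesting members of the sequence, but for mere existence the trivial lift suffices.
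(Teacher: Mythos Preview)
Your proposal is correct and takes essentially the same approach as the paper: the paper's proof is the single line ``This follows from Theorems~\ref{thm:derived-Tanner-iso} and \ref{thm:Tanner-cover},'' and you have simply made that deduction explicit by exhibiting the trivial lift, verifying the four hypotheses of Theorem~\ref{thm:derived-Tanner-iso}, invoking Theorem~\ref{thm:graphProjectionThm}, and reading off $n_h = hn$ and the infinitude of the sequence. Your added sanity check that the lifted monomials remain distinct (so $\tilde Q_h$ is genuinely a weight-$6$ BB code) is a detail the paper leaves implicit.
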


\begin{proof}
    This follows from Theorems \ref{thm:derived-Tanner-iso} and \ref{thm:Tanner-cover}.
\end{proof}

We let the parameters of the base code be $n_1 = n$, $d_1 = d$, and $k_1 = k$ in what follows. We will give certain theoretical guarantees on the behaviour of $k_h$ and $d_h$ in relation to the base code parameters $k$ and $d$ in the subsequent sections. 

Covering graphs found by arbitrary graph lifts are not guaranteed to produce graphs that are still valid Tanner graphs in general. In particular, the checks associated to an arbitrary covering graph of a given Tanner graph are likely to not commute. Guemard addressed this issue in \cite{Guemard2025lifts} by constructing a simplicial complex from the Tanner graph and examining covering complexes. The extra conditions imposed on a covering complex as opposed to a general graph cover ensure that the resulting checks still commute. 

We avoid the need for this as we have shown in Theorem \ref{thm:derived-Tanner-iso} that with the right choices of a group and voltage assignment, the derived graph is guaranteed to be the Tanner graph of a BB code. This means that while we also consider covering graphs, we are restricted by construction to those graph covers that define valid codes. 

Suppose we start with a BB code with lattice parameters $l,m$ and check polynomials of the form 
\begin{equation}
    A = \sum_{i=1}^3 x^{\alpha_{i1}} y^{\alpha_{i2}},     \hspace{1em} B =  \sum_{i=1}^3 x^{\beta_{i1}} y^{\beta_{i2}}.
\end{equation}
We can pick the check polynomials $\tilde{A}$ and $\tilde{B}$ of an $h$-cover code, where $h = ut$, as follows:
\begin{equation}
    \tilde{A} = \sum_{i=1}^3 x^{\alpha_{i1} + la_{i1}} y^{\alpha_{i2} + ma_{i2}},     \hspace{1em} \tilde{B} =  \sum_{i=1}^3 x^{\beta_{i1}+lb_{i1}} y^{\beta_{i2}+mb_{i2}},
\label{eq:cover-poly-explicit}
\end{equation}
where $a_{i1}, b_{i1} \in \{0, \ldots, u-1\}$ and $a_{i2}, b_{i2} \in \{0,\ldots,t-1\}$. It is easy to see that any cover code polynomial of this form satisfies the conditions of the preceding theorem.

We will now show some illustrative examples of BB cover codes.

\begin{ex}
\label{ex:grossCodeCover}
    The gross code $[[144,12,12]]$ with $$(\tilde{A} = x^3 + y + y^2, \ \tilde{B} = y^3 + x + x^2, \ \tilde{l} = 12,\ \tilde{m}=6)$$ is a double cover of the $[[72,12,6]]$ code with $$(A = x^3 + y + y^2,\ B = y^3 + x + x^2,\ l = 6,\ m = 6).$$ The polynomials defining the two codes are the same, so the fact that the gross code is a double cover simply follows from the fact that $\tilde{l}/l = 2$. \hfill $\blacksquare$
\end{ex}

\begin{ex}
\label{ex:18-8-2-code-ex}
    Consider the $[[18,8,2]]$ BB code defined by $$(A = 1 + y+ y^2, \ B = 1 + x + x^2,\ l=3,\ m=3).$$ The code defined by $$(\tilde{A} = x^3 + y + y^2, \ \tilde{B} = 1 + x + x^2,\ \tilde{l} = 6,\ \tilde{m} = 3)$$ is a 2-cover code with parameters $[[36,8,4]]$. This is because $\tilde{l}/l = 2$ and because the defining polynomials only differ in a single term $\tilde{A}_1 = x^3$ such that $$\pi(\tilde{\alpha}_1) = \pi((3,0)) = (\text{Mod}(3,3), 0) = (0,0) = \alpha_1,$$ the label for $1$ in the monomial basis ($x^0 y^0 = 1)$. \hfill $\blacksquare$
\end{ex}

\begin{ex}
    The covering map allows for some freedom, especially as $l,m$ increase. Another valid 2-cover of the $[[18,8,2]]$ code in Example \ref{ex:18-8-2-code-ex} is the $[[36,8,4]]$ code defined by $$(\tilde{A} = x^3 + x^3 y + y^2, \ \tilde{B} = 1 + x + x^2, \ \tilde{l}=6, \ \tilde{m}=3).$$ Similarly to the previous example, we see that $\pi(\tilde{\alpha}_1) = \alpha_1$ and $$\pi(\tilde{\alpha}_2) = \pi((3,1)) = (\text{Mod}(3,3), \text{Mod(1,3)}) = (0,1) = \alpha_2,$$ the label for the monomial $y$ ($x^0 y^1 = y$). \hfill $\blacksquare$
\end{ex}

\section{Projection and Lifting of Logical Operators}
\label{sec:ProjAndLift}

We now extend the graph covering map of Theorem \ref{thm:graphProjectionThm} to a variety of linear maps between the free $\bbF_2$-vector spaces on the base and cover graph vertex sets. 

In Section \ref{sec:ProjLogical}, we use the graph covering map to define a (co)chain map between two BB codes where the Tanner graph of one is a covering graph of the other under the conditions of Theorem \ref{thm:derived-Tanner-iso}. This (co)chain map induces a map on (co)homology that allows for the projection of logical operators in the cover code those of the base code. 

In Section \ref{sec:LiftLogical}, we define a ``lifting" chain map from the base to the cover code and show that this map is the dual of the projection (co)chain map. The lifting map is the more important of the two, as it can be used to lift logical operators of the base code to those of the cover code. We also use the properties of the projection and lifting maps to prove that the number of logical qubits in the cover code must be at least the number of qubits in the base code, i.e. $k_h \geq k$. A consequence of this is that when $h$ is odd and $k_h = k$, a complete logical basis for the base code will lift to a complete, but not necessarily minimum weight, logical basis for the cover code.

We then prove distance bounds under certain conditions in Section \ref{sec:BoundDist}. When $h$ is odd, we have the upper bound $d_h \leq hd$ for the cover code; if $k_h=k$ is also satisfied, then we also lower bound the distance as $d_h \geq d$. Finally, we define a weight-preserving lift (co)chain map that depends on the choice of a cover graph vertex in the pre-image of each base graph vertex under the covering map. Since this choice is arbitrary, this map does not necessarily define a (co)chain map between the base and cover codes. If such a (co)chain map can be found however, it can be used to find weight $d$ logical operators of the cover code and therefore serves as a useful numerical tool for ruling out cover codes with distance equal to the base code. 

Lastly, we conclude in Section \ref{sec:Qudits} with some observations about extending our results to the case of qudit BB codes. 

\subsection{Projecting Logical Operators from a Cover to a Base Code}
\label{sec:ProjLogical}

The graph covering map $\pi$ in Theorem \ref{thm:graphProjectionThm} induces a surjective linear map $p \colon \mathbb{F}_2^{\tilde{l}\tilde{m}} \rightarrow \mathbb{F}_2^{lm}$ between the free vector spaces of the cover and base graph vertex sets, where the bases of the two vector spaces are the monomial bases $\tilde{M}$ and $M$ respectively of \eqref{eq:monbasis}. Its abstract description is as follows:

\begin{dfn}[\textbf{Projection Map}]
\label{dfn:projectionMap}
    Let $\pi \colon \tilde{V} \rightarrow V$ from $C(\tilde{V},\tilde{E})$ to $B(V, E)$ be a graph covering map. Let $e_{\tilde{v}}$ be a basis vector in $\bbF_2^{|\tilde{V}|}$ corresponding to a vertex $v \in \tilde{V}$. Define a surjective linear map $p \colon \bbF_2^{|\tilde{V}|} \rightarrow \bbF_2^{|V|}$ by
    \begin{equation}
        p(e_{\tilde{v}}) \coloneqq e_{\pi(\tilde{v})} 
    \end{equation}
    and extending linearly, i.e. $p(e_{\tilde{v}} + e_{\tilde{v}'}) \coloneqq p(e_{\tilde{v}}) + p(e_{\tilde{v}'})$. We call $p$ the \textit{projection map}. 
\end{dfn}

\begin{rmk}
\label{rmk:projMonomial}
    In our context, this translates to defining $p$ on the monomial basis elements $x^{\tilde{a}}y^{\tilde{b}} \in \tilde{M}$ as
    \begin{equation}
    \label{eq:projMonomial}
        p(x^{\tilde{a}} y^{\tilde{b}}) \coloneqq x^{\text{Mod}(\tilde{a},l)}y^{\text{Mod}(\tilde{b},m)}
    \end{equation}
    and extending linearly, i.e. $p(\tilde{r} + \tilde{r}') \coloneqq p(\tilde{r}) + p(\tilde{r}')$ with $\tilde{r}, \tilde{r}' \in \tilde{M}$. \hfill $\blacksquare$     
\end{rmk}

The significance of this map is that it allows us to define a chain map of two BB codes. This implies that the logical operators of the cover code project to logical operators of the base code.

\begin{thm}
    Let $\tilde{Q}_{\bullet}$ and $Q_{\bullet}$ be the chain complexes as in Remark \ref{rmk:BBcodeDefn} associated to BB codes $\tilde{Q}(\tilde{A}, \tilde{B}, \tilde{l}, \tilde{m})$ and $Q(A, B, l, m)$ with the same properties as in the statement of Theorem \ref{thm:derived-Tanner-iso}. Define the cochain complexes $Q^{\bullet}, \tilde{Q}^{\bullet}$ as the dual of $Q_{\bullet}$ and $\tilde{Q}_{\bullet}$ respectively, as in Remark \ref{rmk:kofCSScode}. Let $p \colon \mathbb{F}_2^{\tilde{l}\tilde{m}} \rightarrow \mathbb{F}_2^{lm}$ be the map in \eqref{eq:projMonomial}.
    
    \begin{enumerate}[label={(\alph*)}]
        \item Define a map $p_{\bullet} \colon \tilde{Q}_{\bullet} \rightarrow Q_{\bullet}$ as follows 
            \begin{equation}
            \label{eq:Z-proj_diag}
                \begin{tikzcd}[row sep=2em, column sep = 5em]
                    \tilde{Q}_2 = \bbF_2^{\tilde{l}\tilde{m}} \arrow[r, "\tilde{H}_Z^T = \begin{pmatrix} \tilde{B} \\ \tilde{A} \end{pmatrix}"] \arrow[d, "p_2"]
                    & \tilde{Q}_1 = \bbF_2^{\tilde{l}\tilde{m}} \oplus \bbF_2^{\tilde{l}\tilde{m}} \arrow[d, "p_1"] \arrow[r, "\tilde{H}_X = \begin{pmatrix} \tilde{A} \ \tilde{B} \end{pmatrix}"]
                    & \tilde{Q}_0 = \bbF_2^{\tilde{l}\tilde{m}} \arrow[d, "p_0"] \\
                    Q_2 = \bbF_2^{lm} \arrow[r, "H_Z^T = \begin{pmatrix} B \\ A \end{pmatrix}"]
                    & Q_1 = \bbF_2^{lm} \oplus \bbF_2^{lm} \arrow[r, "H_X = \begin{pmatrix} A \ B \end{pmatrix}"] 
                    & Q_0 = \bbF_2^{lm}
                \end{tikzcd}
            \end{equation}
            where $p_2 \coloneqq p$, $p_0 \coloneqq p$, $p_1 \coloneqq p \oplus p$, and $p$ is as in Remark \ref{rmk:projMonomial}. Then
            $p_{\bullet}$ is a chain map and induces a map $\hat{p}_1 \coloneqq H_1(p_{\bullet}) \colon H_1(\tilde{Q}_{\bullet}) \rightarrow H_1(Q_{\bullet})$. 
        \item Define a map $p^{\bullet} \colon \tilde{Q}^{\bullet} \rightarrow Q^{\bullet}$ as
        \begin{equation}
        \label{eq:X-proj_diag}
            \begin{tikzcd}[row sep=2em, column sep = 5em]
                \tilde{Q}^0 = \bbF_2^{\tilde{l}\tilde{m}} \arrow[r, "\tilde{H}_X^T = \begin{pmatrix} \tilde{A}^T \\ \tilde{B}^T \end{pmatrix}"] \arrow[d, "p^0"]
                & \tilde{Q}^1 = \bbF_2^{\tilde{l}\tilde{m}} \oplus \bbF_2^{\tilde{l}\tilde{m}} \arrow[d, "p^1"] \arrow[r, "\tilde{H}_Z = \begin{pmatrix} \tilde{B}^T \ \tilde{A}^T \end{pmatrix}"]
                & \tilde{Q}^2 = \bbF_2^{\tilde{l}\tilde{m}} \arrow[d, "p^2"] \\
                Q^0 = \bbF_2^{lm} \arrow[r, "H_X^T = \begin{pmatrix} A^T \\ B^T \end{pmatrix}"]
                & Q^1 = \bbF_2^{lm} \oplus \bbF_2^{lm} \arrow[r, "H_Z = \begin{pmatrix} B^T \ A^T \end{pmatrix}"] 
                & Q^2 = \bbF_2^{lm}
            \end{tikzcd}
        \end{equation}
        where $p^2 \coloneqq p$, $p^0 \coloneqq p$, and $p^1 \coloneqq p \oplus p$. Then $p^{\bullet}$ is a cochain map and induces a map $\hat{p}^1 \coloneqq H^1(p^{\bullet}) \colon H^1(\tilde{Q}^{\bullet}) \rightarrow H^1(Q^{\bullet})$.
    \end{enumerate}
    \label{thm:proj-chain-map}
\end{thm}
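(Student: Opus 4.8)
The plan is to reduce both parts to a single observation: the projection map $p$ of Remark \ref{rmk:projMonomial} is a ring homomorphism $\tilde{R} \to R$, where $\tilde{R} = \bbF_2[\bbZ_{\tilde{l}} \times \bbZ_{\tilde{m}}]$ and $R = \bbF_2[\bbZ_l \times \bbZ_m]$. Hypotheses 1 and 2 of Theorem \ref{thm:derived-Tanner-iso} ($\tilde{l} = ul$, $\tilde{m} = tm$) guarantee that reduction mod $l$ and mod $m$ descends to a well-defined group homomorphism $\bbZ_{\tilde{l}} \times \bbZ_{\tilde{m}} \to \bbZ_l \times \bbZ_m$, and $p$ is precisely the induced $\bbF_2$-algebra map on group algebras. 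I would verify this directly on the monomial basis: well-definedness uses $\Mod(\tilde{l},l) = 0$ and $\Mod(\tilde{m},m) = 0$ to respect the relations $x^{\tilde{l}} = 1$, $y^{\tilde{m}} = 1$ of $\tilde{R}$, and multiplicativity uses $\Mod(\tilde{a}_1 + \tilde{a}_2, l) \equiv \Mod(\tilde{a}_1,l) + \Mod(\tilde{a}_2,l) \pmod{l}$ (and similarly for the $y$-powers).

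Next I would record the consequence of hypotheses 3 and 4: $p(\tilde{A}) = A$ and $p(\tilde{B}) = B$, since $p(\tilde{A}_i) = p(x^{\tilde{\alpha}_{i1}} y^{\tilde{\alpha}_{i2}}) = x^{\Mod(\tilde{\alpha}_{i1},l)} y^{\Mod(\tilde{\alpha}_{i2},m)} = x^{\alpha_{i1}} y^{\alpha_{i2}} = A_i$, and likewise for $\tilde{B}$. Identifying a polynomial with the matrix of left multiplication by it via Proposition \ref{thm:groupAlgMatrixRep}, the ring homomorphism property then gives, for every $\tilde{s} \in \tilde{R}$, $p(\tilde{A}\tilde{s}) = p(\tilde{A})\,p(\tilde{s}) = A\,p(\tilde{s})$ and $p(\tilde{B}\tilde{s}) = B\,p(\tilde{s})$; that is, $p\tilde{A} = Ap$ and $p\tilde{B} = Bp$ as linear maps $\bbF_2^{\tilde{l}\tilde{m}} \to \bbF_2^{lm}$.

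For part (a), the left square of \eqref{eq:Z-proj_diag} commutes if and only if $(p \oplus p)\begin{pmatrix} \tilde{B} \\ \tilde{A} \end{pmatrix} = \begin{pmatrix} B \\ A \end{pmatrix} p$, which splits into the two identities just established, and the right square commutes if and only if $p\begin{pmatrix} \tilde{A} & \tilde{B} \end{pmatrix} = \begin{pmatrix} A & B \end{pmatrix}(p \oplus p)$, which splits the same way. Hence $p_{\bullet}$ is a chain map, and by Remark \ref{rmk:InducedHomMap} it induces $\hat{p}_1 = H_1(p_{\bullet})$ on homology. For part (b), I would note (Remark \ref{rmk:BBcodeDefn}) that transposing a circulant corresponds to the ring involution $r \mapsto r^{*}$ that inverts each group element, and that this involution commutes with $p$ because $\Mod(-\tilde{a},l) \equiv -\Mod(\tilde{a},l) \pmod{l}$; thus $p(\tilde{A}^{*}) = A^{*}$ and $p(\tilde{B}^{*}) = B^{*}$, so $p\tilde{A}^T = A^T p$ and $p\tilde{B}^T = B^T p$ by the same computation as before. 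Substituting these into the two squares of \eqref{eq:X-proj_diag} shows $p^{\bullet}$ is a cochain map, which induces $\hat{p}^1 = H^1(p^{\bullet})$ on cohomology.

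I do not expect a serious obstacle here: the only point requiring genuine care is checking that $p$ really is a ring homomorphism — in particular that it is well-defined modulo the relations of $\tilde{R}$, which is exactly where hypotheses 1 and 2 of Theorem \ref{thm:derived-Tanner-iso} enter — together with keeping the bookkeeping straight between polynomials and their circulant matrix representatives when transporting the ring homomorphism identity to the operator statements $p\tilde{A} = Ap$ and $p\tilde{A}^T = A^T p$.
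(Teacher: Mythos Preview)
Your proposal is correct and follows essentially the same route as the paper: both reduce the commutativity of the two squares to the identities $p\tilde{A} = Ap$, $p\tilde{B} = Bp$ (and their transposed analogues), and then invoke Remark \ref{rmk:InducedHomMap}. The only difference is packaging---the paper verifies $p(\tilde{A}\cdot x^{\tilde{a}}y^{\tilde{b}}) = A\cdot p(x^{\tilde{a}}y^{\tilde{b}})$ by a direct monomial computation, whereas you factor that computation as ``$p$ is a ring homomorphism'' plus ``$p(\tilde{A}) = A$'', which is a slightly cleaner way to say the same thing and makes the transpose case in (b) fall out immediately from the compatibility of $p$ with the group-inversion involution.
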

\begin{proof}
    \begin{enumerate}[label={(\alph*)}]
        \item We need to show the first diagram above commutes, i.e. that
    \begin{align*}
        p_1 \circ \tilde{H}_Z^T &= H_Z^T \circ p_2 \\
        p_0 \circ \tilde{H}_X &= H_X \circ p_1.
    \end{align*}
    Substituting in the definitions of the above maps, this is equivalent to showing 
    \begin{equation*}
        \begin{pmatrix} p & 0 \\ 0 & p \end{pmatrix}    \begin{pmatrix} \tilde{B} \\ \tilde{A} \end{pmatrix} = \begin{pmatrix} B \\ 
        A \end{pmatrix} p, \hspace{1em} \text{i.e.} \hspace{1em} \begin{pmatrix} p \circ \tilde{B} \\ p \circ \tilde{A} \end{pmatrix} 
         = \begin{pmatrix} B \circ p \\ 
        A \circ p \end{pmatrix}  
    \end{equation*} and
    \begin{equation*}
        p \begin{pmatrix} \tilde{A} & \tilde{B} \end{pmatrix} = \begin{pmatrix} A & B \end{pmatrix} 
        \begin{pmatrix} p & 0 \\ 0 & p \end{pmatrix} 
        , \hspace{1em} \text{i.e.} \hspace{1em} \begin{pmatrix} p \circ \tilde{A} & p \circ \tilde{B} \end{pmatrix} 
         = \begin{pmatrix} A \circ p & B \circ p \end{pmatrix}.  
    \end{equation*}
    It therefore suffices to show that 
    \begin{align}
        p \circ \tilde{A} &= A \circ p \\
        p \circ \tilde{B} &= B \circ p. 
    \end{align} 
    We show the first of these, as the other follows from an identical argument. By linearity of $p$, it suffices to consider the case where 
    \begin{equation*}
        \tilde{A} = x^{\tilde{\alpha}_{1}}y^{\tilde{\alpha}_{2}}, \ A = x^{\alpha_{1}}y^{{\alpha}_{2}} 
    \end{equation*}
    i.e. when the defining polynomials of the codes are single monomials. Without loss of generality, consider an arbitrary monomial $x^{\tilde{a}} y^{\tilde{b}}$ in $\bbF_2^{\tilde{l}\tilde{m}}$. Then
    \begin{equation*}
        (p \circ \tilde{A})(x^{\tilde{a}} y^{\tilde{b}}) = p(x^{\tilde{\alpha}_1 + \tilde{a}} y^{\tilde{\alpha}_2+\tilde{b}}) 
        = 
         x^{\Mod(\tilde{\alpha}_1 + \tilde{a},l)} y^{\Mod(\tilde{\alpha}_2 + \tilde{b},m)}
    \end{equation*}
    On the other hand
    \begin{equation*}
        (A \circ p)(x^{\tilde{a}} y^{\tilde{b}}) = A(x^{\Mod(\tilde{a},l)} y^{\Mod(\tilde{b},m)})
        = x^{\alpha_1 + \Mod(\tilde{a},l)} y^{\alpha_2 + \Mod(\tilde{b},m)}.
    \end{equation*}
    Equality of $(p \circ \tilde{A})(x^{\tilde{a}} y^{\tilde{b}})$ and $(A \circ p)(x^{\tilde{a}} y^{\tilde{b}})$ then follows from
    \begin{align*}
        &\Mod(\tilde{\alpha}_1 + \tilde{a},l) = \Mod(\tilde{\alpha}_1,l) + \Mod(\tilde{a},l) = \alpha_1 + \Mod(\tilde{a},l) \mod l \\
        &\Mod(\tilde{\alpha}_2 + \tilde{a},m) = \Mod(\tilde{\alpha}_2,m) + \Mod(\tilde{a},m) = \alpha_2 + \Mod(\tilde{b},m) \mod m, 
    \end{align*}
    where the last equality follows from our assumptions on the two BB codes. Thus $p_{\bullet}$ is a chain map and, by Remark \ref{rmk:InducedHomMap}, induces a well-defined map $\hat{p}_1 \colon H_1(\tilde{Q}_{\bullet}) \rightarrow H_1(Q_{\bullet})$. 
    \item By a similar reasoning as the previous point, it suffices to show that 
    \begin{align}
        p \circ \tilde{A}^T &= A^T \circ p \\
        p \circ \tilde{B}^T &= B^T \circ p. 
    \end{align} 
    We again show the first of these, as the other follows from an identical argument. Since the transpose operation corresponds to taking the additive inverse of the powers of $x$ and $y$ modulo their respective orders, it suffices to consider
    \begin{equation*}
        \tilde{A}^T = x^{\tilde{l} - \tilde{\alpha}_{1}}y^{\tilde{m} - \tilde{\alpha}_{2}}, \ A^T = x^{l-\alpha_{1}}y^{m-{\alpha}_{2}}. 
    \end{equation*}
    Again Without loss of generality, consider an arbitrary $x^{\tilde{a}} y^{\tilde{b}}$ in $\bbF_2^{\tilde{l}\tilde{m}}$. Then $$(p \circ \tilde{A}^T)(x^{\tilde{a}} y^{\tilde{b}}) = p(x^{\tilde{l} + \tilde{a} - \tilde{\alpha}_1} y^{\tilde{m}+\tilde{b}-\tilde{\alpha}_2}) = x^{\Mod(\tilde{l} + \tilde{a} - \tilde{\alpha}_1,l)}y^{\Mod(\tilde{m}+\tilde{b}-\tilde{\alpha}_2,m)}.$$ On the other hand, $$(A^T \circ p)(x^{\tilde{a}} y^{\tilde{b}}) = A^T(x^{\Mod(\tilde{a},l)}y^{\Mod(\tilde{b},m)}) = x^{l-\alpha_1+\Mod(\tilde{a},l)}y^{m-\alpha_2 + \Mod(\tilde{b},m)}.$$ Equality of $(p \circ \tilde{A}^T)(x^{\tilde{a}} y^{\tilde{b}})$ and $(A^T \circ p)(x^{\tilde{a}} y^{\tilde{b}})$ then follows from 
    \begin{align*}
        &\Mod(\tilde{l} + \tilde{a} - \tilde{\alpha}_1,l) = \Mod(\tilde{l} - \tilde{\alpha}_1,l) + \Mod(\tilde{a},l) = l - \alpha_1 + \Mod(\tilde{a},l) \mod l \\
        &\Mod(\tilde{m} + \tilde{b} - \tilde{\alpha}_2,m) = \Mod(\tilde{m} - \tilde{\alpha}_2,m) + \Mod(\tilde{b},m) = m - \alpha_2 + \Mod(\tilde{b},m) \mod m,
    \end{align*}
    where the last equality follows from our assumptions on the two BB codes. Thus $p^{\bullet}$ is a chain map and, by Remark \ref{rmk:InducedHomMap}, induces a well-defined map $\hat{p}^1 \colon H^1(\tilde{Q}^{\bullet}) \rightarrow H^1(Q^{\bullet})$. \qedhere
    \end{enumerate}
\end{proof}

The first part of the preceding theorem therefore implies that we can project $Z$ logical operators from the cover code to those of the base code; the second part implies that we can project $X$ logical operators from the cover code to those of the base code. 
    
Note that at each degree of the two chain maps, $p_i = p^i$. Since $H^1(\tilde{Q}^{\bullet}) \cong H_1(\tilde{Q}_{\bullet})$ (see Remark \ref{rmk:kofCSScode}) and similarly for $Q$, the matrix representations of $\hat{p}^1$ and $\hat{p}_1$ are equivalent up to a change of basis. The maps induced in (co)homology by $p_{\bullet}$ ($p^{\bullet}$) are not necessarily surjective and we will later show under what conditions this is guaranteed to be the case (see Lemma \ref{lem:PcomposeT}) and the proof of Theorem \ref{thm:hOddLogical}).  

\begin{ex}
\label{ex:proj-logical}
Continuing from Example \ref{ex:grossCodeCover}, a logical basis for the gross code is 
\begin{equation}
\begin{split}
    &\tilde{X}_i = X(r_i f,0),\ \tilde{Z}_i = Z(r'_ig^T,r'_ih^T) \\
    &\tilde{X}'_i = X(r_i g, r_i h), \tilde{Z}'_i = Z(0, r'_if),
\end{split}
\label{eq:n18-log-basis}
\end{equation}
where
\begin{equation}
\begin{split}
    &f = 1 + x + x^2 + x^3 + xy^3 + x^5y^3 + x^6 + x^7 + x^8 + x^9 + x^7y^3 + x^{11}y^3,\\
    &g = y^2 + y^4 + x + xy^2 + x^2y + x^2y^3, \\
    &h = 1 + y + y^2 + y^3 + xy + xy^3.
\end{split}
\label{eq:144-basis-polys}
\end{equation}

The monomials $r_i$ and $r'_j$ are drawn from the ordered sets $$r_i \in \{1, y, x^2y, x^2y^5, x^3y^2, x^4 \}$$ and $$r'_j \in \{ y, y^5, xy, 1, x^4, x^5y^2 \},$$ where $\tilde{X}_i$ and $\tilde{Z}_j$ (and correspondingly $\tilde{X}'_i$ and $\tilde{Z}'_j$) anticommute when $i=j$ and commute otherwise \cite{Bravyi2024BB}. 

When we apply the projection map, we find that $\tilde{X}_i$ and $\tilde{Z}'_j$ project to zero for all $i,j$. For the simplest case of $\tilde{X}_1$ where $r_1 = 1$, the projected logical operator is $X(p_1(f), 0)$ where
\begin{equation}
    p_1(f) = 1 + x + x^2 + x^3 + xy^3 + x^5y^3 + 1 + x + x^2 + x^3 + xy^3 + x^5y^3 = 0.
\end{equation}
While $p_1(f) = f' + f' = 0$, $X(f', 0)$ is in fact a non-trivial logical operator of the base code. Thus the cover logical $\tilde{X}_1$ projects to a sum of two copies of the same base code logical operator. 

The remaining logical operators project to weight 12 non-trivial logical operators of the $[[72,12,6]]$ code. For example, the projection of $\tilde{X}'_1$ is the logical operator $X(p_1(g), p_1(h))$, where it is clear that $p_1(g) = g$ and $p_1(h) = h$. \hfill $\blacksquare$
\end{ex}

\subsection{Lifting Logical Operators from a Base to a Cover Code}
\label{sec:LiftLogical}

While the ability to project logical operators from a cover code to a smaller base code can be useful, more interesting is the possibility of doing the reverse, i.e. lifting logical operators from a base to a cover code. 

\begin{dfn}[\textbf{Lifting Map}]
\label{dfn:liftingMap}
    Let $\pi \colon \tilde{V} \rightarrow V$ from $C(\tilde{V},\tilde{E})$ to $B(V, E)$ be a  graph covering map. Let $e_{v}$ be a basis vector in $\bbF_2^{|V|}$ corresponding to a vertex $v \in V$. Define a linear map $\tau \colon \bbF_2^{|V|} \rightarrow \bbF_2^{|\tilde{V}|}$ by
    \begin{equation}
        \label{eq:tauAbstract}
        \tau(e_v) = \sum_{\tilde{v} \in \pi^{-1}(v)} e_{\tilde{v}} 
    \end{equation} 
    and extending linearly. We call $\tau$ the \textit{lifting map}. 
\end{dfn}

\begin{rmk}
\label{rmk:tauMonomial}
    In our context, this translates to defining $\tau$ on the monomial basis elements $x^a y^b \in M$ as
    \begin{equation}
    \label{eq:tauMonomial}
        \tau(x^{a} y^{b}) \coloneqq \sum_{j=0}^{u-1} \sum_{k=0}^{t-1} x^{a+lj}y^{b+mk}
    \end{equation}
    and extending linearly. \hfill $\blacksquare$
\end{rmk}
As in the case of the projection map, we will show the lifting map can be used to form a (co)chain map from a base BB code to a cover BB code which therefore induces a corresponding map on (co)homology. This allows logical Pauli operators to be lifted from the base code to the cover code. We first note that there is a close relationship between the projection and lifting maps as follows:
\begin{thm}
    Let $p$ be the projection map in Definition \ref{dfn:projectionMap} and $\tau$ the lifting map in Definition \ref{dfn:liftingMap}. Then $\tau$ is the transpose of $p$ and is injective.
\label{thm:lift-transpose}
\end{thm}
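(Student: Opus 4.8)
The plan is to verify the two claims directly from the definitions of $p$ and $\tau$ on the monomial bases. First I would recall that $p \colon \bbF_2^{\tilde l\tilde m} \to \bbF_2^{lm}$ is given on basis elements by $p(e_{\tilde v}) = e_{\pi(\tilde v)}$, while $\tau \colon \bbF_2^{lm} \to \bbF_2^{\tilde l \tilde m}$ is given by $\tau(e_v) = \sum_{\tilde v \in \pi^{-1}(v)} e_{\tilde v}$. To show $\tau = p^T$, it suffices to compare matrix entries in the chosen monomial bases: the $(\tilde v, v)$ entry of the matrix of $\tau$ is $1$ iff $\tilde v \in \pi^{-1}(v)$, i.e.\ iff $\pi(\tilde v) = v$; the $(v, \tilde v)$ entry of the matrix of $p$ is $1$ iff $p(e_{\tilde v}) = e_v + (\text{other basis vectors})$ has a nonzero $e_v$ component, i.e.\ iff $\pi(\tilde v) = v$ (since $p$ sends each basis vector to a single basis vector). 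Hence the matrix of $\tau$ is the transpose of the matrix of $p$, which is the claim. Equivalently, one can phrase this via the standard bilinear pairing $\langle \cdot,\cdot\rangle$ on the free $\bbF_2$-vector spaces and check $\langle p(e_{\tilde v}), e_v\rangle = \langle e_{\tilde v}, \tau(e_v)\rangle$ for all basis pairs, both sides equalling $1$ exactly when $\pi(\tilde v) = v$.

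For injectivity of $\tau$, I would argue that the images $\{\tau(e_v)\}_{v \in V}$ are linearly independent. The key structural fact is that the fibers $\pi^{-1}(v)$ partition $\tilde V$ (since $\pi$ is surjective and the fibers are disjoint — distinct $v$ have disjoint preimages); thus $\tau(e_v)$ and $\tau(e_{v'})$ for $v \neq v'$ are sums over disjoint sets of basis vectors. Given a linear combination $\sum_{v \in S} \tau(e_v) = 0$ over $\bbF_2$ with $S \subseteq V$, expanding gives $\sum_{v\in S}\sum_{\tilde v \in \pi^{-1}(v)} e_{\tilde v} = 0$; because the fibers are disjoint and each nonempty (surjectivity of $\pi$), no cancellation can occur across different $v$, so the only way the sum vanishes is $S = \emptyset$. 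Hence $\ker \tau = 0$ and $\tau$ is injective. Alternatively, one may observe $p \circ \tau = h \cdot \mathrm{Id} = \mathrm{Id}$ over $\bbF_2$ only when $h$ is odd, so I would avoid that route and stick with the fiber-disjointness argument, which works for all $h$.

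I do not anticipate a genuine obstacle here; the statement is essentially a bookkeeping fact about the incidence structure of a covering map. The one point requiring minor care is making explicit that in our BB-code setting the relevant bases are the monomial bases $\tilde M$ and $M$, and that $\pi$ restricted to monomials is exactly $x^{\tilde a}y^{\tilde b} \mapsto x^{\Mod(\tilde a,l)}y^{\Mod(\tilde b,m)}$ with fibers $\{x^{a+lj}y^{b+mk} : 0 \le j < u,\ 0 \le k < t\}$ of size $h = ut$; this is precisely what is recorded in Remarks \ref{rmk:projMonomial} and \ref{rmk:tauMonomial}, so the transpose relation is immediate from comparing \eqref{eq:projMonomial} with \eqref{eq:tauMonomial}. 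With those identifications in place, both halves of the proof reduce to the two short arguments above.
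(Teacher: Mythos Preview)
Your proposal is correct. For the transpose claim, your direct matrix-entry comparison (or equivalently the bilinear pairing check) is the same computation the paper performs, just phrased concretely rather than through dual spaces: the paper writes $p^*(f^v) = f^v \circ p$, expands in the dual basis, and finds the nonzero coefficients are exactly those $\tilde v$ with $\pi(\tilde v) = v$, then identifies $\tau$ with $p^*$ via the basis isomorphisms --- which is precisely your ``$(\tilde v,v)$ entry of $\tau$ equals $(v,\tilde v)$ entry of $p$'' statement.

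For injectivity the routes diverge slightly. The paper dispatches it in one line by invoking the general fact that the transpose of a surjective linear map is injective (since $p$ is surjective by Definition~\ref{dfn:projectionMap}). Your fiber-disjointness argument is a direct unpacking of that fact in this specific setting: because $\pi$ is surjective with disjoint fibers, the columns of $\tau$ (equivalently the rows of $p$) have pairwise disjoint supports and are each nonzero, hence independent. Your approach is more self-contained and makes the combinatorics visible; the paper's is shorter but presumes the reader knows the rank argument behind ``transpose of surjective is injective.'' Either way, you were right to avoid the $p\circ\tau = h\cdot I$ route, which the paper only deploys later (Lemma~\ref{lem:PcomposeT}) and which indeed fails to give injectivity when $h$ is even.
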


\begin{proof}
    For simplicity, define $W = \bbF_2^{|V|}$ and $\tilde{W} = \bbF_2^{|\tilde{V}|}$. The dual of $p$ is the map $p^* \colon W^* \rightarrow \tilde{W}^*$. Assume we have picked a dual basis $f^v$ of $W^*$ and $f^{\tilde{v}}$ of $\tilde{W}^*$ such that $f^v e_{v'} = \delta_{v,v'}$ and $f^{\tilde{v}} e_{\tilde{v}'} = \delta_{\tilde{v}, \tilde{v}'}$. Note that $p^*(f^v) = f^v \circ p$. We can write $f^v \circ p$ as an $\bbF_2$-linear combination of the basis $f^{\tilde{v}}$, i.e.
    \begin{equation*}
        f^v \circ p = \sum_{\tilde{v} \in \tilde{V}} a_{\tilde{v}} f^{\tilde{v}}
    \end{equation*} 
    where $a_{\tilde{v}} \in \bbF_2$. To determine which $a_{\tilde{v}}$ are non-zero, we can apply an arbitrary basis vector $e_{\tilde{v}'}$ to both sides of the equation to get
    \begin{equation*}
        (f^v \circ p)(e_{\tilde{v}'}) = \delta_{v,\pi(\tilde{v}')} = \sum_{\tilde{v} \in \tilde{V}} a_{\tilde{v}} f^{\tilde{v}} e_{\tilde{v}'} = \sum_{\tilde{v} \in \tilde{V}} a_{\tilde{v}} \delta_{\tilde{v},\tilde{v}'} = a_{\tilde{v}'}
    \end{equation*}
    Thus the non-zero $a_{\tilde{v}}$ are those such that $\pi(\tilde{v}') = v$, i.e. those corresponding to the $\tilde{v}'$ that are in the pre-image $\pi^{-1}(v)$. Thus $$p^*(f^{v}) = \sum_{\tilde{v} \in \pi^{-1}(v)} f^{\tilde{v}}.$$ Let $\varphi \colon W \rightarrow W^*$ and $\psi \colon \tilde{W}^* \rightarrow \tilde{W}$ be the basis isomorphisms sending $e_v \mapsto f^v$ and $f^{\tilde{v}} \mapsto e_{\tilde{v}}$ respectively. Then $\tau = \psi \circ p^* \circ \varphi$ and the matrix representing $\tau$ is simply the transpose of the matrix representing $p$. The transpose of a surjective linear map is injective, so $\tau$ is injective.   
\end{proof}

A straightforward consequence of the above result that $p^T = \tau$ is as follows:

\begin{thm}
\label{thm:lift-chain-map}
    Let $\tilde{Q}_{\bullet}$ and $Q_{\bullet}$ be the chain complexes as in Remark \ref{rmk:BBcodeDefn} associated to BB codes $\tilde{Q}(\tilde{A}, \tilde{B}, \tilde{l}, \tilde{m})$ and $Q(A, B, l, m)$ with the same properties as in the statement of Theorem \ref{thm:derived-Tanner-iso}. Define the cochain complexes $Q^{\bullet}, \tilde{Q}^{\bullet}$ as the dual of $Q_{\bullet}$ and $\tilde{Q}_{\bullet}$ respectively, as in Remark \ref{rmk:kofCSScode}. Let $\tau \colon \mathbb{F}_2^{lm} \rightarrow \mathbb{F}_2^{\tilde{l}\tilde{m}}$ be the map in \eqref{eq:tauMonomial}.
    
    \begin{enumerate}[label={(\alph*)}]
        \item Define a map $\tau_{\bullet} \colon Q_{\bullet} \rightarrow \tilde{Q}_{\bullet}$ as follows 
            \begin{equation}
            \label{eq:Z-lift_diag}
                \begin{tikzcd}[row sep=2em, column sep = 5em]
                    \tilde{Q}_2 = \bbF_2^{\tilde{l}\tilde{m}} \arrow[r, "\tilde{H}_Z^T = \begin{pmatrix} \tilde{B} \\ \tilde{A} \end{pmatrix}"] 
                    & \tilde{Q}_1 = \bbF_2^{\tilde{l}\tilde{m}} \oplus \bbF_2^{\tilde{l}\tilde{m}} \arrow[r, "\tilde{H}_X = \begin{pmatrix} \tilde{A} \ \tilde{B} \end{pmatrix}"]
                    & \tilde{Q}_0 = \bbF_2^{\tilde{l}\tilde{m}} \\
                    Q_2 = \bbF_2^{lm} \arrow[r, "H_Z^T = \begin{pmatrix} B \\ A \end{pmatrix}"] \arrow[u,"\tau_2"]
                    & Q_1 = \bbF_2^{lm} \oplus \bbF_2^{lm} \arrow[r, "H_X = \begin{pmatrix} A \ B \end{pmatrix}"] \arrow[u,"\tau_1"]
                    & Q_0 = \bbF_2^{lm} \arrow[u,"\tau_0"]
                \end{tikzcd}
            \end{equation}
            where $\tau_2 \coloneqq \tau$, $\tau_0 \coloneqq \tau$, $\tau_1 \coloneqq \tau \oplus \tau$, and $\tau$ is as in Remark \ref{rmk:tauMonomial}. Then $\tau_{\bullet} = (p^{\bullet})^T$ 
            is a chain map and induces a map $\hat{\tau}_1 \coloneqq H_1(\tau_{\bullet}) \colon H_1(Q_{\bullet}) \rightarrow H_1(\tilde{Q}_{\bullet})$ such that $\hat{\tau}_1 = (\hat{p}^1)^T$. 
        \item Define a map $\tau^{\bullet} \colon Q^{\bullet} \rightarrow \tilde{Q}^{\bullet}$ as
        \begin{equation}
        \label{eq:X-lift-diag}
            \begin{tikzcd}[row sep=2em, column sep = 5em]
                \tilde{Q}^0 = \bbF_2^{\tilde{l}\tilde{m}} \arrow[r, "\tilde{H}_X^T = \begin{pmatrix} \tilde{A}^T \\ \tilde{B}^T \end{pmatrix}"]
                & \tilde{Q}^1 = \bbF_2^{\tilde{l}\tilde{m}} \oplus \bbF_2^{\tilde{l}\tilde{m}} \arrow[r, "\tilde{H}_Z = \begin{pmatrix} \tilde{B}^T \ \tilde{A}^T \end{pmatrix}"]
                & \tilde{Q}^2 = \bbF_2^{\tilde{l}\tilde{m}} \\
                Q^0 = \bbF_2^{lm} \arrow[r, "H_X^T = \begin{pmatrix} A^T \\ B^T \end{pmatrix}"] \arrow[u, "\tau^0"]
                & Q^1 = \bbF_2^{lm} \oplus \bbF_2^{lm} \arrow[r, "H_Z = \begin{pmatrix} B^T \ A^T \end{pmatrix}"] \arrow[u, "\tau^1"] 
                & Q^2 = \bbF_2^{lm} \arrow[u, "\tau^2"] 
            \end{tikzcd}
        \end{equation}
        where $\tau^2 \coloneqq \tau$, $\tau^0 \coloneqq \tau$, and $\tau^1 \coloneqq \tau \oplus \tau$. Then $\tau^{\bullet} = (p_{\bullet})^T$ is a cochain map and induces a map $\hat{\tau}^1 \coloneqq H^1(\tau^{\bullet}) \colon H^1(Q^{\bullet}) \rightarrow H^1(\tilde{Q}^{\bullet})$ such that $\hat{\tau}^1 = (\hat{p}_1)^T$. 
    \end{enumerate}
\end{thm}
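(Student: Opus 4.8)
The plan is to obtain everything by transposing the projection (co)chain maps already built in Theorem \ref{thm:proj-chain-map}, so that essentially no new computation is required. First I would record the graded identifications. By Remark \ref{rmk:kofCSScode} the cochain complex $\tilde{Q}^{\bullet}$ is obtained from the chain complex $\tilde{Q}_{\bullet}$ by transposing each boundary map, and since every space involved is a finite-dimensional $\bbF_2$-vector space equipped with its monomial basis, the double dual recovers $\tilde{Q}_{\bullet}$ on the nose; the same holds for $Q$. By Theorem \ref{thm:lift-transpose} we have $\tau = p^{T}$, and since $(p \oplus p)^{T} = p^{T} \oplus p^{T} = \tau \oplus \tau$, at every degree $i$ the component $\tau_{i}$ equals $(p^{i})^{T}$ and $\tau^{i}$ equals $(p_{i})^{T}$. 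This already establishes the identities $\tau_{\bullet} = (p^{\bullet})^{T}$ and $\tau^{\bullet} = (p_{\bullet})^{T}$ at the level of graded maps, so it only remains to see that a transposed (co)chain map is again a (co)chain map and to track the induced maps.

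Next I would verify that $\tau_{\bullet}$ is a chain map by transposing the commuting squares of the cochain map $p^{\bullet}$ from Theorem \ref{thm:proj-chain-map}(b). Writing $\tilde{\delta}^{\bullet}$, $\delta^{\bullet}$ for the coboundary maps of $\tilde{Q}^{\bullet}$, $Q^{\bullet}$, a square there reads $p^{i+1} \circ \tilde{\delta}^{i} = \delta^{i} \circ p^{i}$. Transposing and using $(XY)^{T} = Y^{T} X^{T}$ gives $(\tilde{\delta}^{i})^{T} \circ (p^{i+1})^{T} = (p^{i})^{T} \circ (\delta^{i})^{T}$; since $(\tilde{\delta}^{i})^{T}$ is exactly the boundary map of $\tilde{Q}_{\bullet}$ in the relevant degree (by construction of the dual complex), and likewise for $Q$, this is precisely the commutativity of the square in diagram \eqref{eq:Z-lift_diag} with vertical maps $\tau_{i+1} = (p^{i+1})^{T}$ and $\tau_{i} = (p^{i})^{T}$. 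Running this for both squares shows $\tau_{\bullet}$ is a chain map; the symmetric argument, transposing the chain map $p_{\bullet}$ of Theorem \ref{thm:proj-chain-map}(a), shows $\tau^{\bullet}$ is a cochain map. By Remark \ref{rmk:InducedHomMap}, each then induces well-defined maps $\hat{\tau}_{1} = H_{1}(\tau_{\bullet}) \colon H_{1}(Q_{\bullet}) \rightarrow H_{1}(\tilde{Q}_{\bullet})$ and $\hat{\tau}^{1} = H^{1}(\tau^{\bullet}) \colon H^{1}(Q^{\bullet}) \rightarrow H^{1}(\tilde{Q}^{\bullet})$.

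Finally I would identify these induced maps with transposes. The point is that over a field, for complexes of finite-dimensional vector spaces, dualizing and taking homology commute: there are natural isomorphisms $H_{1}\big((\tilde{Q}^{\bullet})^{*}\big) \cong \big(H^{1}(\tilde{Q}^{\bullet})\big)^{*}$ and $H_{1}\big((Q^{\bullet})^{*}\big) \cong \big(H^{1}(Q^{\bullet})\big)^{*}$ (the universal coefficients isomorphism, which is natural in the complex), and under these the map $H_{1}\big((p^{\bullet})^{T}\big)$ corresponds to the transpose of $H^{1}(p^{\bullet}) = \hat{p}^{1}$. Combined with $\tau_{\bullet} = (p^{\bullet})^{T}$ this yields $\hat{\tau}_{1} = (\hat{p}^{1})^{T}$, and the symmetric statement gives $\hat{\tau}^{1} = (\hat{p}_{1})^{T}$ (both transposes taken with respect to the fixed monomial bases and the induced bases on homology). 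I expect the only genuinely delicate step to be this last one: one must check that the universal-coefficients isomorphism is natural enough to intertwine $H_{1}\big((p^{\bullet})^{T}\big)$ with $\hat{p}^{1}{}^{T}$, rather than merely producing an abstract isomorphism between the relevant homology spaces. Everything before it is transposition bookkeeping, which is routine precisely because the projection (co)chain maps and the chain–cochain duality have already been set up in Theorem \ref{thm:proj-chain-map} and Remark \ref{rmk:kofCSScode}.
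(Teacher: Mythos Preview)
Your proposal is correct and follows essentially the same route as the paper: the paper's proof also observes that the diagram \eqref{eq:Z-lift_diag} is the transpose of \eqref{eq:X-proj_diag}, deduces commutativity by transposing the commutativity conditions for $p^{\bullet}$, and then invokes the fact that dualizing commutes with (co)homology (phrased there as ``the duality functor is exact and exact functors commute with (co)homology'') to conclude $H_1(\tau_{\bullet}) = (H^1(p^{\bullet}))^T$. Your expanded discussion of the naturality needed in the last step is exactly what underlies the paper's one-line appeal to exactness.
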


\begin{proof}
    \begin{enumerate}[label={(\alph*)}]
        \item Note that this diagram is simply the dual of the diagram in \eqref{eq:X-proj_diag}, so $\tau_{\bullet} = (p^{\bullet})^T$. The commutativity of this diagram therefore follows from taking the transpose of the commutativity conditions for $p^{\bullet}$. The duality map on vector spaces preserves cokernels and kernels and therefore commutes with (co)homology.\footnote{More formally, the duality functor is an exact functor on vector spaces and exact functors commute with (co)homology.} Thus $H_1(\tau_{\bullet}) = H^1((p^{\bullet})^T) = (H^1(p^{\bullet}))^T$.
        \item The same proof as above holds when applied to taking the dual of the diagram in \eqref{eq:Z-proj_diag}. \qedhere  
    \end{enumerate} 
\end{proof}

Theorem \ref{thm:lift-chain-map} implies that we can lift logical Pauli $Z$ and $X$ operators from a base code to a cover code. This is useful in finding logical operators of particularly large codes if the logical operators of a smaller, base code are known. At each degree of the two chain maps, $\tau_i = \tau^i$. Since $H^1(\tilde{Q}^{\bullet}) \cong H_1(\tilde{Q}_{\bullet})$ (see Remark \ref{rmk:kofCSScode}) and similarly for $Q$, the matrix representations of $\hat{\tau}^1$ and $\hat{\tau}_1$ are equivalent up to a change of basis. 

Note however that the induced maps in homology $\hat{\tau}_1$ and $\hat{\tau}^1$ are generally not guaranteed to be surjective or injective. This implies that even if a logical basis is known for the base code, it may only lift to a set of logical operators for the cover code that do not span the logical space in the cover code, i.e. the number of linearly independent lifted logical operators may be less than $k_h$. Lifting logical operators from a base code may not give minimum weight logical operators in the cover code either. This is because the definition of the lifting map implies the weight of the lifted logical operator dilates by a factor of $h$ while, as we will see in the next section, the distance may not. We will give conditions in the next section for when the map induced in (co)homology by $\tau_{\bullet}$ ($\tau^{\bullet}$) is injective (see Lemma \ref{lem:PcomposeT}) and the proof of Theorem \ref{thm:hOddLogical}). 

\begin{ex}
    Consider the $[[72,12,6]]$ code $$Q(x^3 + y + y^2, \ y^3 + x + x^2, \ 6, \ 6)$$ and let
    \begin{equation}
        f = 1 + x + x^2 + x^3 + xy^3 + x^5y^3.
    \label{eq:lift-ex1-f-poly}
    \end{equation}
    Then $X(rf,0)$ is a logical operator of the code for all $r$ in the base code's monomial basis $M$. Each of these lifts to a logical operator of the $[[144,12,12]]$ code $$\tilde{Q}(x^3 + y + y^2, \ y^3 + x + x^2, \ 12, 6).$$ For simplicity, we consider only the $r=1$ case here. Applying the lifting map in \eqref{eq:tauMonomial} to the polynomial in \eqref{eq:lift-ex1-f-poly} gives 
    \begin{equation}
        \tau_1(f) = 1 + x + x^2 + x^3 + xy^3 + x^5y^3 + x^6 + x^7 + x^8 + x^9 + x^7y^3 + x^{11}y^5,
    \end{equation}
    which is exactly the polynomial given in the original BB code paper \cite{Bravyi2024BB} and also in \eqref{eq:144-basis-polys} in Example \ref{ex:proj-logical}. Thus $X(\tau_1(f), 0)$ is a logical operator of the $[[144,12,12]]$ code. 
    
    When taking base code logical operators $X_i = X(r_i f, 0)$ with $r_i \in \{1, y, x^2y, x^2y^5, x^3y^2, x^4 \}$, each $X_i$ lifts to exactly the $\tilde{X}_i$ logical operators of the gross code given in Example \ref{ex:proj-logical}. These are the gross code $X$ logical operators that project to zero. \hfill $\blacksquare$
\end{ex} 

We can use the induced projection and lifting maps in (co)homology of Theorems \ref{thm:proj-chain-map} and \ref{thm:lift-chain-map} to show that in certain cases, the number of logical qubits in the cover code must be at least that of the base code. 

We first show the following simple lemma: 

\begin{lem}
\label{lem:PcomposeT}
    Let $Q(A,B,l,m)$ and $\tilde{Q}(\tilde{A},\tilde{B},\tilde{l},\tilde{m})$ be two BB codes satisfying the conditions of Theorem \ref{thm:derived-Tanner-iso} so that $\tilde{Q}$ is an $h$-cover code of $Q$. Consider the projection map $p$ in \eqref{eq:projMonomial} and lifting map $\tau$ in \eqref{eq:tauMonomial}. When $h$ is odd, $p \circ \tau = I$. When $h$ is even, $p \circ \tau = 0$. 
\end{lem}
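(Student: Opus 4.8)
The plan is to evaluate $p \circ \tau$ directly on the monomial basis $M$ of $\bbF_2^{lm}$ and then extend by linearity, since both $p$ and $\tau$ are $\bbF_2$-linear by construction.

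First I would fix an arbitrary basis monomial $x^a y^b \in M$, so that $0 \leq a \leq l-1$ and $0 \leq b \leq m-1$. By Remark \ref{rmk:tauMonomial},
\[ \tau(x^a y^b) = \sum_{j=0}^{u-1}\sum_{k=0}^{t-1} x^{a+lj}\,y^{b+mk}. \]
Next I would apply $p$ termwise using Remark \ref{rmk:projMonomial}: for each pair $(j,k)$,
\[ p\bigl(x^{a+lj}\,y^{b+mk}\bigr) = x^{\Mod(a+lj,\,l)}\,y^{\Mod(b+mk,\,m)} = x^a y^b, \]
where the last equality uses $\Mod(a+lj,l) = \Mod(a,l) = a$ (since $0 \leq a < l$) and likewise $\Mod(b+mk,m) = b$.

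Hence $p(\tau(x^a y^b))$ is a sum of $ut = h$ copies of the single monomial $x^a y^b$ in $\bbF_2^{lm}$. Because the coefficient field is $\bbF_2$, this sum equals $x^a y^b$ when $h$ is odd and $0$ when $h$ is even. Since this holds for every element of the basis $M$ and both maps are linear, we conclude $p \circ \tau = I$ for odd $h$ and $p \circ \tau = 0$ for even $h$.

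There is essentially no obstacle in this argument; the only point that requires a moment of care is that the monomials of $M$ are taken with exponents in the canonical ranges $[0,l)$ and $[0,m)$, which is precisely what guarantees that each of the $h$ shifted monomials appearing in $\tau(x^a y^b)$ projects back to $x^a y^b$ itself rather than to some other basis element.
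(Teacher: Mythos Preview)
Your proposal is correct and follows essentially the same approach as the paper: evaluate $p\circ\tau$ on a monomial $x^a y^b$, observe that each of the $ut=h$ shifted monomials in $\tau(x^a y^b)$ projects back to $x^a y^b$ because $a<l$ and $b<m$, and then reduce the sum modulo $2$. The paper's proof is in fact slightly terser than yours, but the logic is identical.
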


\begin{proof}
    By definition, we have $$(p \circ \tau)(x^a y^b) = p\left(\sum_{j=0}^{u-1} \sum_{k=0}^{t-1} x^{a+lj}y^{b+mk}\right) = \sum_{j=0}^{u-1} \sum_{k=0}^{t-1} p(x^{a+lj}y^{b+mk}) = \sum_{j=0}^{u-1} \sum_{k=0}^{t-1} x^{\Mod(a+lj,l)}y^{\Mod(b+mk,m)}$$ Since $\Mod(a+lj,l) = a$ as $a < l$ and $\Mod(b+mk,m) = m$ as $b < m$, we have $$(p \circ \tau)(x^a y^b) = \sum_{j=0}^{u-1} \sum_{k=0}^{t-1} x^{\Mod(a+lj,l)}y^{\Mod(b+mk,m)} = \sum_{j=0}^{u-1} \sum_{k=0}^{t-1} x^a y^b = ut x^a y^b = h x^a y^b$$ Since we are working over $\bbF_2$, $(p \circ \tau)(x^a y^b) = x^a y^b$ when $h$ is odd and $(p \circ \tau)(x^a y^b) = 0$ when $h$ is even.  
\end{proof}

\begin{thm}
\label{thm:hOddLogical}
    With the same conditions as in the preceding lemma, if $h$ is odd then $k_h \geq k$. 
\end{thm}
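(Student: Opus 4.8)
The plan is to promote the identity $p \circ \tau = I$ from Lemma \ref{lem:PcomposeT} to an identity of chain maps and then pass to homology. First I would note that $p_{\bullet}$ and $\tau_{\bullet}$ are genuine chain maps $\tilde{Q}_{\bullet} \to Q_{\bullet}$ and $Q_{\bullet} \to \tilde{Q}_{\bullet}$ by Theorems \ref{thm:proj-chain-map} and \ref{thm:lift-chain-map}, so their composite $p_{\bullet} \circ \tau_{\bullet} \colon Q_{\bullet} \to Q_{\bullet}$ is again a chain map. Since $p_0 = p_2 = p$ and $\tau_0 = \tau_2 = \tau$, this composite acts as $p \circ \tau$ in degrees $0$ and $2$; since $p_1 = p \oplus p$ and $\tau_1 = \tau \oplus \tau$, it acts as $(p \circ \tau) \oplus (p \circ \tau)$ in degree $1$. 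When $h$ is odd, Lemma \ref{lem:PcomposeT} gives $p \circ \tau = I$, so in every degree $p_{\bullet} \circ \tau_{\bullet}$ is the identity, i.e. $p_{\bullet} \circ \tau_{\bullet} = I_{\bullet}$ is the identity chain map on $Q_{\bullet}$.

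Next I would apply the functoriality of $H_1$ from Remark \ref{rmk:InducedHomMap}, which preserves composition and the identity chain map. This yields
$$\hat{p}_1 \circ \hat{\tau}_1 = H_1(p_{\bullet}) \circ H_1(\tau_{\bullet}) = H_1(p_{\bullet} \circ \tau_{\bullet}) = H_1(I_{\bullet}) = I_{H_1(Q_{\bullet})}.$$
Hence $\hat{\tau}_1 \colon H_1(Q_{\bullet}) \to H_1(\tilde{Q}_{\bullet})$ has a left inverse, so it is injective as a map of $\bbF_2$-vector spaces, and therefore $\dim H_1(Q_{\bullet}) \leq \dim H_1(\tilde{Q}_{\bullet})$. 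Finally, since $k = \dim H_1(Q_{\bullet})$ and $k_h = \dim H_1(\tilde{Q}_{\bullet})$ (Remark \ref{rmk:kofCSScode}), this is exactly $k \leq k_h$, as claimed.

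I do not expect a serious obstacle: all of the real content is already packaged in Lemma \ref{lem:PcomposeT} and in the chain-map statements. The only points needing a little care are (i) checking the degreewise composition, in particular that the direct-sum structure in the middle term $Q_1 = \bbF_2^{lm} \oplus \bbF_2^{lm}$ still yields the identity, and (ii) correctly invoking the covariant functoriality of $H_1$ so that $H_1$ of the composite equals the composite of the induced maps. If one wants to sidestep the abstract functoriality entirely, the same conclusion follows by a direct computation on representatives: for a cycle $c$ with $[c] \in H_1(Q_{\bullet})$ one has $\hat{\tau}_1([c]) = [\tau_1(c)]$ and then $\hat{p}_1([\tau_1(c)]) = [p_1(\tau_1(c))] = [c]$ because $p_1 \circ \tau_1 = I$, which again shows $\hat{\tau}_1$ is injective. (One could equally run the dual argument with $\hat{\tau}^1$ and $\hat{p}^1$ on cohomology, since $H^1$ computes the same $k$ by Remark \ref{rmk:kofCSScode}.)
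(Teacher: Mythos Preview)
Your proposal is correct and follows essentially the same approach as the paper: promote $p \circ \tau = I$ to the chain-map identity $p_{\bullet} \circ \tau_{\bullet} = I_{\bullet}$, apply functoriality of $H_1$ to obtain $\hat{p}_1 \circ \hat{\tau}_1 = I$, and conclude injectivity of $\hat{\tau}_1$ (equivalently surjectivity of $\hat{p}_1$), hence $k_h \geq k$. Your extra care in verifying the degreewise composition on the direct-sum middle term and the alternative representative-level argument are nice additions but not substantively different.
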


\begin{proof}
From the preceding lemma, it follows that $p_{\bullet} \circ \tau_{\bullet} = I_{\bullet}$ when $h$ is odd, where $p_{\bullet}$ and $\tau_{\bullet}$ are the chain maps in \eqref{eq:Z-proj_diag} and $\eqref{eq:Z-lift_diag}$. Applying homology as in Remark \ref{rmk:InducedHomMap} gives that $H_1(p_{\bullet}) \circ H_1(\tau_{\bullet}) = I$. 

Because $H_1(p_{\bullet}) \colon H_1(\tilde{Q}_{\bullet}) \rightarrow H_1(Q_{\bullet})$ and $H_1(\tau_{\bullet}) \colon H_1(Q_{\bullet}) \rightarrow H_1(\tilde{Q}_{\bullet})$ are simply linear maps over $\bbF_2$, this immediately implies that $H_1(p_{\bullet})$ is surjective because it has a right-inverse and $H_1(\tau_{\bullet})$ is injective because it has a left-inverse. Surjectivity of $H_1(p_{\bullet})$ (or injectivity of $H_1(\tau_{\bullet})$) as a linear map immediately implies $\dim H_1(\tilde{Q}_{\bullet}) = k_h \geq \dim H_1(Q_{\bullet}) = k$. 
\end{proof}

A consequence of Theorem \ref{thm:hOddLogical} is that when $h$ is odd and $k_h = k$, then $\hat{\tau}_1$ is in fact an isomorphism and a complete logical basis for the base code will lift to a complete logical basis for the cover code. Below we show an example of a case where a minimum weight logical basis does lift to a minimum weight basis for the cover code (as noted before, this is not guaranteed in general).

\begin{ex}
\label{ex:logical-lift-2}
The $[[54,8,6]]$ code $$\tilde{Q}(x^3+y+y^2, \ 1+x+x^2, \ 6, \ 3)$$ is a triple cover of the $[[18,8,2]]$ code $$Q(1+y+y^2, \ 1+x+x^2, \ 3, \ 3).$$ The base code has a logical basis given by
\begin{equation}
\begin{split}
    &X_i = X(r_i f,0), Z_i = Z(r'_ig,0) \\
    &X'_i = X(0, r'_ig), Z'_i = Z(0, r_if),
\end{split}
\end{equation}
where $f = 1+x$ and $g = 1+y$. Anticommuting pairs of logical operators are given by using monomials from the ordered sets $r_i \in \{ 1, y^2, x, xy^2 \}$ and $r'_j \in \{ 1, y, x^2, x^2y\}$, where the operators $X_i, Z_j$ (or $X'_i, Z'_j$) anticommute for $r_i, r'_j$ when $i=j$ and commute otherwise.

The corresponding lifted basis is given by,
\begin{equation}
\begin{split}
    &\tilde{X}_i = X(\tau_1(r_i f),0), \tilde{Z}_i = Z(\tau_1(r'_i g),0) \\
    &\tilde{X}'_i = X(0, \tau_1(r'_i g)), \tilde{Z}'_i = Z(0, \tau_1(r_i f)),
\end{split}
\label{eq:n18-lift-log-basis}
\end{equation}
where $r_i, r'_j$ are drawn from the same sets as above. We don't enumerate all of the lifted polynomials here but consider $r_1=1$ and $r'_1 =1$ for example. Then $$\tau_1(r_1 f) = 1 + x + x^3 + x^4 + x^6 + x^7$$ and $$\tau_1(r'_1 g) = 1 + y + x^3 + x^3y + x^6 + x^6y.$$ It is straightforward to check that the lifted logical operators $\tilde{X}_1$ and $\tilde{Z}_1$ anticommute with these choices of polynomials. \hfill $\blacksquare$
\end{ex}

We will now strengthen the result of the preceding theorem and show that $k_h \geq k$ for \textit{any} $h$. 

\begin{thm}
\label{thm:kboundanyh}
    Let $Q(A,B,l,m)$ and $\tilde{Q}(\tilde{A},\tilde{B},\tilde{l},\tilde{m})$ be two BB codes satisfying the conditions of Theorem \ref{thm:derived-Tanner-iso} so that $\tilde{Q}$ is an $h$-cover code of $Q$. Then $k_h \geq k$.
\end{thm}

\begin{proof}
    Let $R$ be the ring in \eqref{eq:bbcodeAlgisos} and $\tilde{R}$ the ring $$\tilde{R} = \frac{\bbF_2[x,y]}{(x^{\tilde{l}} - 1, y^{\tilde{m}} -1)}.$$ Consider the surjective map $p$ as defined in \eqref{eq:projMonomial}. It in fact extends to a surjective ring homomorphism $p \colon \tilde{R} \rightarrow R$ since $$p(1) = p(x^0 y^0) = x^{\text{Mod}(0,l)}y^{\text{Mod}(0,m)} = 1$$ and $$p(x^{\tilde{a}} y^{\tilde{b}}) = x^{\text{Mod}(\tilde{a},l)}y^{\text{Mod}(\tilde{b},m)} = p(x) p(y).$$ From \cite{Bravyi2024BB}, $k = 2(lm - (\mathrm{rank}(H_X))$ for any BB code but note that $$lm - \mathrm{rank}(H_X) = \dim_{\bbF_2} \frac{R}{(A,B)}.$$ Our assumptions on $A,B$ and $\tilde{A}, \tilde{B}$ from Theorem \ref{thm:derived-Tanner-iso} imply that $p((\tilde{A}, \tilde{B})) \subseteq (A,B)$, so $p$ descends to a surjective ring homomorphism $$p \colon \frac{\tilde{R}}{(\tilde{A},\tilde{B})} \rightarrow \frac{R}{(A,B)}.$$ This immediately implies that $$\dim_{\bbF_2} \frac{\tilde{R}}{(\tilde{A},\tilde{B})} \geq \dim_{\bbF_2} \frac{R}{(A,B)},$$ which in turn implies that $k_h \geq k$. 
\end{proof}

\subsection{Bounds on Distance by Lifting and Projecting Logical Operators}
\label{sec:BoundDist}

The induced lifting map $\hat{\tau}_1$ on homology also gives a way to upper bound the distance $d_h$ of an $h$-cover code when $h$ is odd. 

\begin{thm}
\label{thm:hOddDistUpperBound}
    Let $Q(A,B,l,m)$ and $\tilde{Q}(\tilde{A},\tilde{B},\tilde{l},\tilde{m})$ be two BB codes with parameters $[[n,k,d]]$ and $[[n_h=hn,k_h,d_h]]$ respectively that satisfy the conditions of Theorem \ref{thm:derived-Tanner-iso} so that $\tilde{Q}$ is an $h$-cover code of $Q$. If $h$ is odd, $d_h \leq hd$.
\end{thm}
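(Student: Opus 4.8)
The plan is to exhibit a single non-trivial logical operator of the cover code of weight exactly $hd$; producing such an operator immediately forces $d_h \le hd$.

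First I would take a minimum-weight non-trivial logical operator of the base code. Since $Q$ is a BB code we have $d = d_X = d_Z$ (this was recorded in the preliminaries), so without loss of generality pick a $Z$-logical operator $L$, i.e. a cycle representing a non-zero class $[L] \in H_1(Q_{\bullet})$ with $|L| = d$. Then push it forward along the lifting chain map $\tau_{\bullet} \colon Q_{\bullet} \to \tilde{Q}_{\bullet}$ of Theorem \ref{thm:lift-chain-map}, setting $\tilde{L} := \tau_1(L)$ so that $[\tilde{L}] = \hat{\tau}_1([L])$. (If one prefers to work with $d_X$ the same argument runs verbatim with the cochain maps $p^{\bullet}, \tau^{\bullet}$ of Theorems \ref{thm:proj-chain-map} and \ref{thm:lift-chain-map} in place of the chain maps.)

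Two facts about $\tilde{L}$ then finish the proof. The first is its weight: using the explicit formula \eqref{eq:tauMonomial}, each monomial $x^a y^b$ of $L$ (with $0 \le a < l$, $0 \le b < m$) contributes the $h = ut$ monomials $x^{a+lj}y^{b+mk}$ for $0 \le j < u$, $0 \le k < t$; since $\tilde{l} = ul$ and $\tilde{m} = tm$, reducing exponents modulo $l$ and $m$ shows that monomials arising from distinct terms of $L$ have disjoint support, so no $\bbF_2$-cancellation occurs and $|\tilde{L}| = h|L| = hd$ — this is exactly the weight-multiplying property of $\tau$ already noted after Theorem \ref{thm:lift-chain-map}. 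The second is non-triviality, and this is the only place the parity of $h$ is used: by Lemma \ref{lem:PcomposeT}, $p \circ \tau = I$ when $h$ is odd, hence $p_{\bullet} \circ \tau_{\bullet} = I_{\bullet}$, and applying the homology functor (Remark \ref{rmk:InducedHomMap}) gives $\hat{p}_1 \circ \hat{\tau}_1 = I$ on $H_1(Q_{\bullet})$; thus $\hat{\tau}_1$ is injective, so $[\tilde{L}] = \hat{\tau}_1([L]) \neq 0$ because $[L] \neq 0$. Therefore $\tilde{L}$ is a non-trivial $Z$-logical operator of $\tilde{Q}$ of weight $hd$, so the $Z$-distance of $\tilde{Q}$ is at most $hd$, and since the distance is the minimum of the $X$- and $Z$-distances we conclude $d_h \le hd$. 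There is no real obstacle here; all the content sits in the two already-established facts that $\tau$ multiplies weights by $h$ and that $p \circ \tau = I$ for odd $h$, the latter being precisely what fails when $h$ is even and hence why the bound is claimed only in the odd case.
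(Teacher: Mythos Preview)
Your proposal is correct and follows essentially the same approach as the paper's proof: lift a minimum-weight logical operator via $\tau_1$, observe that $\tau_1$ multiplies weight by $h$, and use the injectivity of $\hat{\tau}_1$ (from $p \circ \tau = I$ when $h$ is odd) to conclude the lift is non-trivial. The paper's version is terser --- it cites the proof of Theorem~\ref{thm:hOddLogical} for injectivity rather than re-deriving it from Lemma~\ref{lem:PcomposeT} --- but the logic is identical.
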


\begin{proof}
    Let $L$ be a weight $d$ representative of a non-zero homology class $[L] \in H_1(Q_{\bullet})$ (recall that for BB codes, $d = d_X = d_Z$; see Remark \ref{rmk:BBcodeDefn}). Then $\tau_1(L) \in [\tau_1(L)] = \hat{\tau}_1([L])$ by Remark \ref{rmk:InducedHomMap}. Since $h$ is odd, $\hat{\tau}_1$ is injective by the proof of Theorem \ref{thm:hOddLogical}. Thus $[\tau_1(L)] \neq 0$, so $\tau_1(L)$ represents a non-trivial logical operator in $\tilde{Q}$. The definition of $\tau_1$ then implies $\tau_1(L)$ has weight $hd$. 
\end{proof}
Empirically, we always observe that the distance of the cover code obeys $d_h \leq hd$  regardless of $h$. As $h$ increases, we find that $d_h$ rarely saturates the upper bound of $hd$. When $k_h = k$ and $h$ is odd though, we can in fact prove a distance lower bound. 

\begin{thm}
\label{thm:hOddDistBounds}
    With the same conditions as in the preceding theorem, suppose $h$ is odd and $k_h = k$. Then $d \leq d_h \leq hd$.
\end{thm}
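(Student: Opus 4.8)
The upper bound $d_h \leq hd$ is already furnished by Theorem~\ref{thm:hOddDistUpperBound}, so the plan is to establish the lower bound $d \leq d_h$, and the right tool for this direction is the \emph{projection} chain map $p_\bullet$ of Theorem~\ref{thm:proj-chain-map} rather than the lifting map (which produces the upper bound).

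First I would record that under the hypotheses ``$h$ odd'' and ``$k_h = k$'' the induced map $\hat p_1 = H_1(p_\bullet) \colon H_1(\tilde Q_\bullet) \to H_1(Q_\bullet)$ is in fact an isomorphism. Indeed, by the proof of Theorem~\ref{thm:hOddLogical}, $h$ odd gives $\hat p_1 \circ \hat\tau_1 = I$, so $\hat p_1$ is surjective; and since $\dim H_1(\tilde Q_\bullet) = k_h = k = \dim H_1(Q_\bullet)$, a surjective $\bbF_2$-linear map between vector spaces of equal (finite) dimension is bijective. In particular $\hat p_1$ is injective.

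Next I would take a minimum-weight representative $\tilde L$ of a non-zero class in $H_1(\tilde Q_\bullet)$. Since $\tilde Q$ is itself a BB code by Theorem~\ref{thm:derived-Tanner-iso}, Remark~\ref{rmk:BBcodeDefn} gives $d_h = d_{X,h} = d_{Z,h}$, so we may arrange $|\tilde L| = d_h$ with $\tilde L \in \ker \tilde H_X \setminus \Ima \tilde H_Z^T$. Because $p_\bullet$ is a chain map, $p_1(\tilde L)$ is a cycle of $Q_\bullet$ whose homology class is $\hat p_1([\tilde L])$; injectivity of $\hat p_1$ and $[\tilde L] \neq 0$ give $\hat p_1([\tilde L]) \neq 0$, so $p_1(\tilde L)$ represents a non-trivial $Z$-logical operator of $Q$ and therefore $|p_1(\tilde L)| \geq d_Z = d$.

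Finally I would observe that the projection map $p$, and hence $p_1 = p \oplus p$, is weight non-increasing: it sends each monomial basis vector to a single monomial basis vector, $p(x^{\tilde a}y^{\tilde b}) = x^{\Mod(\tilde a,l)}y^{\Mod(\tilde b,m)}$, so the support of $p_1(\tilde L)$ is contained in the image of $\mathrm{Supp}(\tilde L)$ under the covering map $\pi$ on each of the $L$- and $R$-blocks, and over $\bbF_2$ any collisions in that image only cancel, giving $|p_1(\tilde L)| \leq |\tilde L| = d_h$. Chaining the two inequalities yields $d \leq |p_1(\tilde L)| \leq d_h$, which together with Theorem~\ref{thm:hOddDistUpperBound} proves $d \leq d_h \leq hd$. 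The only mildly delicate step is the weight-non-increasing property of $p_1$, but this is routine once one notes $p$ carries basis elements to basis elements; the rest is bookkeeping around the isomorphism $\hat p_1$ that is already implicit in Theorems~\ref{thm:hOddLogical} and~\ref{thm:proj-chain-map}.
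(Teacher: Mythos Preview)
Your proof is correct and uses the same two ingredients as the paper's argument: the weight-non-increasing property of $p_1$ and the injectivity of $\hat p_1$ when $h$ is odd and $k_h = k$. The paper phrases the lower bound as a contradiction (assuming a cover logical of weight $< d$ and showing it must be a stabilizer) and makes a small detour through lifting a stabilizer via the surjectivity of $p_2$, but once one observes that $[p_1(L')] = 0$ and $\hat p_1$ is injective this collapses to exactly your direct argument.
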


\begin{proof}
    Let $L'$ denote a representative of a non-trivial class $[L'] \in H_1(\tilde{Q}_{\bullet})$ such that $|L'| < d$. The definition of $p_1$ implies that $|p_1(L')| \leq |L'| < d$, so $p_1(L')$ is a $Z$ stabilizer in $Q_{\bullet}$ (see the diagram in \eqref{eq:Z-proj_diag}). Thus there exists a $z \in Q_2$ such that $H_Z^T(z) = p_1(L')$. Then $[H_Z^T(z)] = [p_1(L')] = \hat{p}_1[L'] = [0]$. Since $\hat{p}_1$ is surjective and $k_h = k$, $\hat{p}_1$ is also injective since a surjective linear map between vector spaces of the same dimension is an isomorphism. This implies that $[L'] = 0$. Thus there exists a $\tilde{z}' \in \tilde{Q}_2$ such that $L' = \tilde{H}_Z^T(\tilde{z}')$, so $L'$ is in fact a $Z$ stabilizer of $\tilde{Q}_{\bullet}$. This implies $d \leq d_h$. 
\end{proof}

When attempting to generalize these arguments to situations where $h$ is even or $k \neq k_h$, we arrive at points in the proof where the injectivity or surjectivity of $p_i$ or $\tau_i$ respectively or their corresponding induced maps in homology are needed. These are clearly not guaranteed in general. While it may be possible to prove general upper and lower bounds for these cover codes, this may require either more specific information about the codes or other kinds of arguments entirely. We therefore conjecture the following: 

\begin{conj}
    \textit{All} $h$\textit{-cover BB codes obey} $d \leq d_h \leq hd$.
\end{conj}

We've used a definition of the lifting map so far that naturally forms a chain map. There are other possible definitions that are useful however. 

\begin{dfn}[\textbf{Section of a Graph Cover}]
    Let $\pi \colon \tilde{V} \rightarrow V$ from $C(\tilde{V},\tilde{E})$ to $B(V, E)$ be a  graph covering map. A \textit{section} of a graph cover is a map $s \colon V \rightarrow \tilde{V}$ such that $\pi \circ s = I$. 
\end{dfn}
A section is essentially defined by selecting for each $v \in V$, a single element $\tilde{v}$ in the pre-image $\pi^{-1}(v)$. The definition above implies that $s$ is necessarily injective. 

\begin{dfn}[\textbf{Weight-Preserving Lift}]
    Let $s \colon V \rightarrow \tilde{V}$, be a section of a graph cover $\pi \colon \tilde{V} \rightarrow V$ and $e_v$ be a basis vector in $\bbF_2^{|V|}$ corresponding to a vertex $v \in V$. Define an injective linear map $\sigma \colon \bbF_2^{|V|} \rightarrow \bbF_2^{|\tilde{V}|}$ by
    \begin{equation}
        \sigma(e_{v}) \coloneqq e_{s(v)}.
    \end{equation}
    and extending linearly. We call $\sigma$ a \textit{weight-preserving lift}.
\end{dfn}

\begin{rmk}
\label{rmk:WgtPreserveMonom}
    This translates to defining $\sigma$ on the monomial basis elements $x^ay^b \in M$ by
    \begin{equation}
        \sigma(x^a y^b) = x^{a+l \gamma_a} y^{b + m\delta_b}
    \end{equation}
    and extending linearly. Both $\gamma_a \in \bbZ_u$ and $\delta_b \in \bbZ_t$ can depend on $a$ and $b$. The condition that $\pi \circ s = I$ also implies that $p \circ \sigma  = I$ as linear maps,\footnote{This follows from the fact that the map sending a finite set $S$ to the free-vector space $\bbF_2^{|S|}$ is a functor and therefore preserves compositions of maps.} where $p$ is as in \ref{eq:projMonomial}. \hfill $\blacksquare$
\end{rmk}

Intuitively, whereas the map $\tau$ sends a given monomial in the base code to a sum over all its preimages in the cover code, $\sigma$ sends it to just one of its preimages residing on a single sheet in the cover. 

As before, we can use $\sigma$ above to possibly define a (co)chain map between a base and a cover code. While it was straightforward to show that the lifting map $\tau$ did in fact form a (co)chain map and therefore induced a map between the (co)homologies of the base and cover codes, this is no longer necessarily the case for the weight-preserving lift $\sigma$. This is because the action of $\sigma$ on input monomials need not act in a consistent way across different input monomials, as noted in the above remark. Given codes $Q_{\bullet}$ and $\tilde{Q}_{\bullet}$ or their duals, if there is a choice of possibly different lifts $\sigma$ at each degree such that they form a chain map, then we have the following distance bound. 

\begin{thm}
 Let $\tilde{Q}_{\bullet}$ and $Q_{\bullet}$ be the chain complexes as in Remark \ref{rmk:BBcodeDefn} associated to BB codes $\tilde{Q}(\tilde{A}, \tilde{B}, \tilde{l}, \tilde{m})$ and $Q(A, B, l, m)$ with parameters $[[n_h=hn,k_h,d_h]]$ and $[[n,k,d]]$  respectively  and with the same properties as in the statement of Theorem \ref{thm:derived-Tanner-iso}. Suppose there is a chain map $\sigma_{\bullet} \colon Q_{\bullet} \rightarrow \tilde{Q}_{\bullet}$
     \begin{equation}
            \label{eq:singlesheet-lift_diag}
                \begin{tikzcd}[row sep=2em, column sep = 5em]
                    \tilde{Q}_2 = \bbF_2^{\tilde{l}\tilde{m}} \arrow[r, "\tilde{H}_Z^T = \begin{pmatrix} \tilde{B} \\ \tilde{A} \end{pmatrix}"] 
                    & \tilde{Q}_1 = \bbF_2^{\tilde{l}\tilde{m}} \oplus \bbF_2^{\tilde{l}\tilde{m}} \arrow[r, "\tilde{H}_X = \begin{pmatrix} \tilde{A} \ \tilde{B} \end{pmatrix}"]
                    & \tilde{Q}_0 = \bbF_2^{\tilde{l}\tilde{m}} \\
                    Q_2 = \bbF_2^{lm} \arrow[r, "H_Z^T = \begin{pmatrix} B \\ A \end{pmatrix}"] \arrow[u,"\sigma_2"]
                    & Q_1 = \bbF_2^{lm} \oplus \bbF_2^{lm} \arrow[r, "H_X = \begin{pmatrix} A \ B \end{pmatrix}"] \arrow[u,"\sigma_1"]
                    & Q_0 = \bbF_2^{lm} \arrow[u,"\sigma_0"]
                \end{tikzcd}
            \end{equation}
            where $\sigma_1 \coloneqq \sigma' \oplus \sigma'$ and $\sigma_1, \sigma', \sigma_0$ need not be the same maps. Then $d_h \leq d$.
\end{thm}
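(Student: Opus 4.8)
The plan is to run the same argument as in the proof of Theorem \ref{thm:hOddDistUpperBound}, but with the weight-\emph{preserving} lift $\sigma_{\bullet}$ in place of $\tau_{\bullet}$, and with the projection chain map $p_{\bullet}$ of Theorem \ref{thm:proj-chain-map} supplying the left inverse that certifies non-triviality. First I would note that a weight-preserving lift always satisfies $p \circ \sigma = I$ on the ambient $\bbF_2$-vector spaces (this is recorded in Remark \ref{rmk:WgtPreserveMonom}, and ultimately comes from the defining property $\pi \circ s = I$ of a section). Using the block decompositions $p_1 = p \oplus p$ and $\sigma_1 = \sigma' \oplus \sigma'$, this gives $p_i \circ \sigma_i = I$ for $i = 0,1,2$, hence $p_{\bullet} \circ \sigma_{\bullet} = I_{Q_{\bullet}}$ as a composition of chain maps ($p_{\bullet}$ is a chain map by Theorem \ref{thm:proj-chain-map}, and $\sigma_{\bullet}$ is one by hypothesis).

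Next I would apply the functoriality of homology from Remark \ref{rmk:InducedHomMap} to the identity $p_{\bullet} \circ \sigma_{\bullet} = I_{Q_{\bullet}}$, obtaining $H_1(p_{\bullet}) \circ H_1(\sigma_{\bullet}) = I$ on $H_1(Q_{\bullet})$. Thus the induced map $\hat{\sigma}_1 \coloneqq H_1(\sigma_{\bullet}) \colon H_1(Q_{\bullet}) \to H_1(\tilde{Q}_{\bullet})$ has a left inverse and is therefore injective.

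Finally I would choose $L$ to be a minimum-weight representative of a non-zero class $[L] \in H_1(Q_{\bullet})$, which has Hamming weight $d$ since $d = d_Z$ for BB codes (see Remark \ref{rmk:BBcodeDefn}). Because $\sigma_{\bullet}$ is a chain map, $\tilde{H}_X(\sigma_1(L)) = \sigma_0(H_X(L)) = 0$, so $\sigma_1(L)$ is a $1$-cycle of $\tilde{Q}_{\bullet}$, and $[\sigma_1(L)] = \hat{\sigma}_1([L]) \neq 0$ by injectivity; hence $\sigma_1(L)$ is a non-trivial $Z$-logical operator of $\tilde{Q}$. Since $\sigma_1 = \sigma' \oplus \sigma'$ with $\sigma'$ an injective linear map carrying each monomial basis element to a single monomial basis element, $\sigma_1$ preserves Hamming weight, so $|\sigma_1(L)| = |L| = d$. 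Therefore the $Z$-distance of $\tilde{Q}$ is at most $d$, and $d_h \leq d$ follows.

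I do not expect a serious obstacle here: the content is a short diagram chase once the right players are named. The one point requiring care is the very first step --- confirming that the hypothesized $\sigma_i$ are genuinely weight-preserving lifts, so that $p_i \circ \sigma_i = I$ holds degreewise and $p_{\bullet} \circ \sigma_{\bullet}$ is literally the identity chain map (not merely a quasi-isomorphism). It is also worth emphasizing that it is $p_{\bullet}$, not $\tau_{\bullet}$, that serves as the left inverse of $\sigma_{\bullet}$; this is exactly what lets the argument run for every $h$, with no appeal to the parity considerations of Lemma \ref{lem:PcomposeT}.
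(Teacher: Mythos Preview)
Your proposal is correct and follows essentially the same approach as the paper: use $p_{\bullet}\circ\sigma_{\bullet}=I$ (from Remark~\ref{rmk:WgtPreserveMonom}) to deduce injectivity of $\hat{\sigma}_1$ on homology, then lift a weight-$d$ representative of a nonzero class in $H_1(Q_{\bullet})$ to a weight-$d$ nontrivial logical of $\tilde{Q}$. Your write-up is in fact a bit more explicit than the paper's (you spell out the degreewise identity, the cycle check via the chain-map condition, and the weight preservation via the block structure of $\sigma_1$), but the argument is the same.
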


\begin{proof}
Under the assumptions in the theorem statement, there is an induced map on homology,
\begin{equation}
	\hat{\sigma}_1 \coloneqq H_1(\sigma_{\bullet}) \colon H_1(Q_{\bullet}) \rightarrow H_1(\tilde{Q}_{\bullet}).
\end{equation}
Let $L$ be a weight $d$ representative of a non-trivial class $[L] \in H_1(Q)$. From Remark \ref{rmk:WgtPreserveMonom}, $p_{\bullet} \circ \sigma_{\bullet} = I$, from which it follows that $\hat{\sigma}_1$ is injective and $\hat{\sigma}_1[L] = [\sigma_1(L)] \neq 0$. Thus $\sigma_1(L)$ is a non-trivial logical operator of the cover code $\tilde{Q}$. By definition, $|\sigma(L)| = |L| = d$, which implies $d_h \leq d$.  
\end{proof}

This result can be used to numerically determine instances of cover codes where distance \textit{does not} increase relative to the base code regardless of $h$. In practice, there are combinatorially many ways to choose a weight-preserving lift. As there are $h$ elements in the fibre $\pi^{-1}(v)$ for each vertex $v$ in the base Tanner graph, there are a total of $h^d$ possible choices for a weight $d$ logical operator. While this scales unfavourably, it often scales better than a brute force search for a weight $d$, length $n$ vector that might be a logical operator which scales as $\binom{n}{d}$. For example, searching over all possible weight 6 candidate logical operators for an $n=144$ double cover of the $[[72,12,6]]$ code would require searching $\binom{144}{6}$ vectors. Searching over all weight-preserving lifts of a weight 6 base logical operator requires 
only $h^d = 2^6 = 64$ lifts by contrast. 

Checking the weight-preserving lifts of minimum weight base code logical operators is therefore a potentially more efficient method of finding a weight $d$ logical operator in the cover code and thus upper bounding $d_h \leq d$. \textit{Failure to find a weight-preserving lift does not in general rule out that $d_h \leq d$ however}. The checking of weight-preserving lifts should be treated as a numerical tool for efficiently ruling out low distance codes as opposed to a method for guaranteeing that distance has increased relative to the base code. 

Lastly, we note that while it is possible in principle to derive conditions on each $\sigma_i$ in \eqref{eq:singlesheet-lift_diag} that suffice to make the diagram commute, the fact that any $\sigma$ depends on an arbitrary choice of a cover code monomial in the pre-image of each base code monomial makes this theoretically unfeasible except for trivial choices of $\sigma_i$, e.g. $\sigma(x^a y^b) = x^a y^b$ or $\sigma(x^a y^b) = x^{a+lc_1} x^{b + mc_2}$ where $c_1 \in \bbZ_u, c_2 \in \bbZ_t$ are constants. 

\begin{rmk}
    The BB codes inherit logical operators from their underlying classical codes. More specifically, let $Q(A,B,l,m)$ be a BB code and suppose it has a logical operator defined entirely on either the left or right qubits, e.g. $Z(t,0)$ for some polynomial $t$. Since $Z(t,0)$ is in $\text{ker } H_X$, it follows that $At = 0$, i.e. that $t \in \text{ker } A$. We can therefore treat $A$ as the parity check matrix of a classical code,
    \begin{equation}
        C_{\bullet} \colon C_1 = \bbF_2^{lm} \xrightarrow[]{\partial_1=A} C_0 = \bbF_2^{lm}.
    \end{equation}
    Similarly, if $Z(0,q)$ is a logical operator of the BB code, then it can be seen as having been inherited from a classical code with $B$ as a parity check matrix. 
    
    There are instances where only a subset of the base code's logical operators lift by a weight-preserving lift to cover code logical operators. Often, but not always, this subset is exactly the subset of logical operators that are inherited from the classical codes. In these cases, it appears the weight-preserving lift does not form a chain map as in \eqref{eq:singlesheet-lift_diag} but instead forms a chain map $\sigma_{\bullet} \colon C_{\bullet} \rightarrow \tilde{C}_{\bullet}$
    \begin{equation}
    	\label{eq:classical_sheet_lift_diag}
    	\begin{tikzcd}
    		\tilde{C}_1 = \bbF_2^{\tilde{l}\tilde{m}} \arrow[r, "\tilde{\partial}_1 = \tilde{A}"]
    		& \tilde{C}_0 = \bbF_2^{lm} \\
    		  C_1 = \bbF_2^{lm}  \arrow[u, "\sigma_{1}"] \arrow[r, "\partial_1 = A"] 
    		& C_0 = \bbF_2^{lm} \arrow[u, "\sigma_{0}"]
    	\end{tikzcd}
    \end{equation}
    \end{rmk}
of the relevant classical codes. \hfill $\blacksquare$
\newpage
\begin{ex}
There is a weight-preserving lift for the single block logical operators between the $[[18,8,2]]$ and $[[36,8,2]]$ codes of Example \ref{ex:18-8-2-code-ex}. The logical operators $X_r = X(0, r(1+y))$ with $r \in \{ 1, y, x^2, x^{2}y\}$ are half of the base code's logical operators. These logical operators satisfy $A^T(r(1+y)) = 0$ and can therefore be viewed as logical operators of the classical code with check matrix $A^T$. 

This set of logical operators lifts to cover code logical operators under a chain map with weight-preserving lifts $s_1(x^ay^b) = s_0(x^ay^b) = x^ay^b$. It is straightforward to check that $\tilde{A}^T(r(1+y)) = 0$. Interestingly, the remaining base code's logical $X$ operators fail to lift under any weight-preserving lifting map and the cover code has a mix of weight 2 and weight 4 logical operators. \hfill $\blacksquare$
\end{ex}

\subsection{Generalization to Qudits}
\label{sec:Qudits}

While we have focused on qubit codes, there has been recent work \cite{Spencer2026quditldpc} showing that the BB codes can be defined on qudits. The main algebraic difference is that the underlying field is now $\mathbb{F}_q$, where $q = p^n$ for some $n$ and $p$ is a prime number, rather than $\mathbb{F}_2$. We refer readers to \cite{Spencer2026quditldpc} for the construction of such codes.  

We note that the lifting and projection maps in Definitions \ref{dfn:projectionMap} and \ref{dfn:liftingMap} extend straightforwardly to this setting as maps that are, in particular, $\bbF_q$-linear. The main subtlety concerns the generalization of Lemma \ref{lem:PcomposeT}, since its proof relies on the characteristic of $\bbF_2$ in an essential way. The result is as follows:

\begin{lem}
\label{lem:PcomposeTinFQ}
    Let $Q(A,B,l,m)$ and $\tilde{Q}(\tilde{A},\tilde{B},\tilde{l},\tilde{m})$ be two BB codes satisfying the conditions of Theorem \ref{thm:derived-Tanner-iso} so that $\tilde{Q}$ is an $h$-cover code of $Q$. Consider the projection map $p$ in \eqref{eq:projMonomial} and lifting map $\tau$ in \eqref{eq:tauMonomial} as maps that are, in particular, $\bbF_q$-linear. When $h \neq 0 \text{ mod } p$, $p \circ \tau = hI$. When $h = 0 \text{ mod } p$, $p \circ \tau = 0$. 
\end{lem}

\begin{proof}
    The same argument as in Lemma \ref{lem:PcomposeT} establishes that $p \circ \tau = hI$ in general. Considering $h$ modulo $p$ yields the two cases we need. 
\end{proof}

The maps induced in (co)homology by $p_{\bullet}$ and $\tau_{\bullet}$ have the follow property:

\begin{thm}
\label{thm:h0ModpTauandProj}
    With the same conditions as in the preceding theorem, suppose $h \neq 0 \text{ mod } p$. Then the induced maps in first degree (co)homology $\hat{p}_1$ $(\hat{p}^1)$ and $\hat{\tau}_1$ $(\hat{\tau}^1)$ are surjective and injective respectively. 
\end{thm}

\begin{proof}
    We focus on the induced maps in homology, since the cohomology case follows from an identical argument. 

    To prove that $\hat{\tau}_1$ is injective, let $[L]$ be a logical in the base code and suppose that $\hat{\tau}_1([L]) = [0]$. Then applying $\hat{p}_1$ to both sides and using the previous lemma, we get 
    \begin{align}
        &(\hat{p}_1 \circ \hat{\tau}_1)([L]) = \hat{p}_1([0]) \\
        \implies &[p(\tau(L))] = [0] \\
        \implies &[hL] = h[L] = [0].
    \end{align}
    Since $h \neq 0 \text{ mod } p$, it follows that $[L] = [0]$. 
    
    To prove that $\hat{p}_1$ is surjective, let $[L]$ be a logical of the base code and consider $h^{-1} \hat{\tau}_1([L])$, where $h^{-1}$ denotes the unique multiplicative inverse of $h$ that exists in $\bbF_q$ (specifically in its prime subfield $\bbF_p$) by our assumption that $h \neq 0 \text{ mod } p$. Then by the previous lemma, $$(\hat{p}_1(h^{-1}\hat{\tau}_1([L]))) = h^{-1}[p(\tau(L))] = [L],$$ showing that $\hat{p}_1$ is surjective. 
\end{proof}

An analogous argument holds for all other degrees as well. 

Since our previous bounds on distance depend only these properties of the induced maps in (co)homology when $h \neq 0 \text{ mod } p$, they will also apply to qudit BB codes. The case where $h = 0 \text{ mod } p$ remains difficult to prove concrete distance results about, but these are now a smaller proportion of all qudit cover codes compared to the $\bbF_2$ case. We conjecture however that our proposed distance bounds hold here as well. 

\begin{conj}
    \textit{Any qudit} $h$\textit{-cover BB code obeys} $d \leq d_h \leq hd$. 
\end{conj}

On the other hand, the argument in Theorem \ref{thm:kboundanyh} that $k_h \geq k$ for any $h$ immediately carries over to the qudit setting when replacing $\bbF_2$ with $\bbF_q$, since the equality $k = n - 2(\text{rank}(H_X))$ also holds over other fields as shown in \cite{Spencer2026quditldpc}. 

\section{Graph and Code Automorphisms}
\label{sec:GraphCodeAuts}

We now discuss a series of increasingly general notions of graph and code transformations that preserve our covering graph structures. In Section \ref{sec:Deck}, we discuss the notions of a deck transformation and the more general notion of the lifting of a graph automorphism from a base to a cover graph. We extend these notions to chain complexes in Section \ref{sec:CSScodeAuts} and give examples of how they can be used to lift the actions of logical CNOT and Hadamard operators from a base to a cover code. 

\subsection{Lifts of Graph Automorphisms}
\label{sec:Deck}

\begin{dfn}[\textbf{Deck Transformation}]
    Let $\pi \colon \tilde{V} \rightarrow V$ be a covering map between graphs $C(\tilde{V},\tilde{E})$ and $B(V,E)$. A deck transformation of $\pi$ is a graph isomorphism $\varphi \colon \tilde{V} \rightarrow \tilde{V}$ such that $\pi \circ \varphi = \pi$. 
\end{dfn}

The deck transformations of a covering map form a group under composition and act `vertically' on fibers of the covering graph, i.e. they permute elements within a given fiber as opposed to between fibers.  

Consider in particular the derived graph $G(V',E')$ of Theorem \ref{thm:derived-Tanner-iso} with the canonical projection $c$ to the base Tanner graph $T$ of Theorem \ref{thm:graphProjectionThm}. It can be shown that $\operatorname{Deck}(c) \cong \Gamma$. An element $\gamma \in \Gamma$ corresponds to sending all vertices $v' = (v, g) \mapsto (v, \gamma + g)$ and edges $e'=[(u,g), (v, g + \nu(e))] \mapsto [(u, \gamma+g), (v, \gamma + g + \nu(e))]$. It then follows that deck transformations of $G$ correspond to multiplying the defining polynomials $\tilde{A}$ and $\tilde{B}$ of a BB cover code by the \textit{same} monomial $x^{il}y^{jm}$, where $0 \leq i < u$ and $0 \leq j < t$. Since a deck transformation is a graph isomorphism by definition, a cover code with new defining polynomials of the form $x^{il}y^{jm}\tilde{A}, \ x^{il}y^{jm}\tilde{B}$ gives rise to identical Tanner graphs and therefore identical BB codes.

It is known though that multiplying the defining polynomials $A,B$ of a BB code $Q$ by \textit{different} monomials, for example $x^ay^bA$ and $x^cy^dB$ where $a \neq c$ and $b \neq d$, also gives a code with the same parameters \cite{Liang2025gtc}. These transformations are more general than deck transformations as they need not act vertically on fibers. In order for a cover code to remain a cover code under such a transformation, $\tilde{A}$ and $\tilde{B}$ can only be multiplied by monomials that respect the projection condition of Theorem \ref{thm:derived-Tanner-iso}, that is $p(x^ay^b\tilde{A}) = A$ and $p(x^cy^d\tilde{B}) = B$. 

This motivates the following generalization of a deck transformation:
\begin{dfn}[\textbf{Lift of a Graph Automorphism}]
    Let $T$ be a graph, $(\tilde{T}, p)$ a covering graph, and $\sigma \in \operatorname{Aut}(T)$. Then $\tilde{\sigma} \in \operatorname{Aut}(\tilde{T})$ is a lift of $\sigma$ if the following holds
    \begin{equation}
        \sigma \circ p = p \circ \tilde{\sigma}.
    \label{eq:aut-lift}
    \end{equation}    
\end{dfn}

As a particular example of this, let $T = (V,E)$ be the Tanner graph of a BB code $Q$ and $\tilde{T}$ the Tanner graph of a cover BB code $\tilde{Q}$. Suppose there is a Tanner graph automorphism $\sigma \in \operatorname{Aut}(T)$ such that $\sigma$ swaps qubits $a,b \in V$, i.e. $\sigma(a) = b$ and $\sigma(b) = a$. We denote the elements in the fibers of $a$ and $b$ as $\tilde{a}_i \in p^{-1}(a)$ and $\tilde{b}_j \in p^{-1}(b)$ respectively. Then an example of an automorphism $\tilde{\sigma}$ of $\tilde{T}$ which will satisfy Equation \ref{eq:aut-lift} is a fiber permutation which permutes the fibers above the base elements that are swapped, i.e. $\tilde{\sigma}(\tilde{a}_i) = \tilde{b}_j$ and $\tilde{\sigma}(\tilde{b}_j) = \tilde{a}_i$ where $i$ need not equal $j$. This extends trivially to automorphisms that swap more than two qubits by simply swapping all the relevant fibers. 

\subsection{Lifts of CSS Code Automorphisms}
\label{sec:CSScodeAuts}

The graph theoretic notions of automorphisms that preserve a covering graph structure discussed above extend readily to the CSS code chain complex setting \cite{Berthusen2025autGadgets,Breuckmann2024foldtransversal}. 

\begin{dfn}[\textbf{Automorphism of a CSS Code}]
     Let $C_{\bullet}$ be a chain complex representing a CSS code. An automorphism of $C_{\bullet}$ is a chain map $A_{\bullet} \colon C_{\bullet} \rightarrow C_{\bullet}$
    \begin{equation}
        \begin{tikzcd}
            C_2 = \bbF_2^{n_Z} \arrow[r, "H_Z^T"] \arrow[d, "W_Z"]
            & C_1 = \bbF_2^n \arrow[d, "f"] \arrow[r, "H_X"]
            & C_0 = \bbF_2^{n_X} \arrow[d, "W_X"] \\
            C_2 = \bbF_2^{n_Z} \arrow[r, "H_Z^T"]
            & C_1 = \bbF_2^n \arrow[r, "H_X"] 
            & C_0 = \bbF_2^{n_X}
        \end{tikzcd}
    \end{equation}
    where $f \in S_n$, $W_X \in \text{GL}_{n_X}(\bbF_2)$, $W_Z \in \text{GL}_{n_Z}(\bbF_2)$.\footnote{While we can have $f \in \text{GL}_n(\bbF_2)$, this is unphysical as it would include operations that can send a qubit to a product of qubits.}
\end{dfn}
In other words, an automorphism of a CSS code is a permutation of the qubits $f \in S_n$ such that the following equations are satisfied
\begin{equation}
\begin{split}
    &W_X H_X = H_X f\\
    &H_Z^T W_Z = f H_Z^T.
\end{split}
\label{eq:css-code-aut}
\end{equation}
Code automorphisms are important because they can have non-trivial logical actions and can therefore be used to implement transversal or swap-transversal logic \cite{Sayginel2025auts,Berthusen2025autGadgets}. 

Since invertible matrices include elementary transformations that add rows and columns together, the matrices $W_X,W_Z$ can send stabilisers to products of stabilisers. A more restrictive notion of an automorphism where this does not happen is as follows.

\begin{dfn}[\textbf{Tanner Graph Automorphism}]
     Let $C_{\bullet}$ be a chain complex representing a CSS code. A Tanner graph automorphism of $C_{\bullet}$ is a chain map 
    \begin{equation}
        \begin{tikzcd}
            C_2 = \bbF_2^{n_Z} \arrow[r, "H_Z^T"] \arrow[d, "f_2"]
            & C_1 = \bbF_2^n \arrow[d, "f_1"] \arrow[r, "H_X"]
            & C_0 = \bbF_2^{n_X} \arrow[d, "f_0"] \\
            C_2 = \bbF_2^{n_Z} \arrow[r, "H_Z^T"]
            & C_1 = \bbF_2^n \arrow[r, "H_X"] 
            & C_0 = \bbF_2^{n_X}
        \end{tikzcd}
    \end{equation}
    where $f_2,f_1,f_0 \in S_n$.
\end{dfn}

Tanner graph automorphisms are therefore a subgroup of CSS code automorphisms that are restricted to sending qubits to qubits and checks to checks. 

We can now define the lift of a CSS code automorphism as follows.
\begin{dfn}[\textbf{Lift of a CSS Code Automorphism}]
    Let $h_{\bullet} \colon \tilde{C}_{\bullet} \rightarrow C_{\bullet}$ be a chain map between CSS codes and $f_{\bullet} \colon C_{\bullet} \rightarrow C_{\bullet}$ be an automorphism. We say that an automorphism $\tilde{f}_{\bullet} \colon \tilde{C}_{\bullet} \rightarrow \tilde{C}_{\bullet}$ is a lift of $f$ if and only if 
    \begin{equation}
        h_{\bullet} \circ \tilde{f}_{\bullet} = f_{\bullet} \circ h_{\bullet}.
    \end{equation}
\end{dfn}

The above definition immediately implies by Remark \ref{rmk:InducedHomMap} that
\begin{equation}
    H_1(h_{\bullet}) \circ H_1(\tilde{f}_{\bullet}) = H_1(f_{\bullet}) \circ H_1(h_{\bullet}). 
\label{eq:aut-tau-ind}
\end{equation}
This equality raises the question of whether the logical action of an automorphism is preserved under a lifting. As the logical action of an automorphism depends on the choice of a logical basis, this question is well posed only when it is possible to define a consistent basis between the two codes. While this is usually not possible in general as there is no canonical logical basis, there is one situation in our BB code context where it is. 

\begin{prop}
    Let $Q(A,B,l,m)$ and $\tilde{Q}(\tilde{A},\tilde{B},\tilde{l},\tilde{m})$ be two BB codes with parameters $[[n,k,d]]$ and $[[n_h=hn,k_h,d_h]]$ respectively that satisfy the conditions of Theorem \ref{thm:derived-Tanner-iso} so that $\tilde{Q}$ is an $h$-cover code of $Q$. Let $f_{\bullet} \colon Q_{\bullet} \rightarrow Q_{\bullet}$ and $\tilde{f}_{\bullet} \colon \tilde{Q}_{\bullet} \rightarrow \tilde{Q}_{\bullet}$ be automorphisms such that $\tau_{\bullet} \circ f_{\bullet} = \tilde{f}_{\bullet} \circ \tau_{\bullet}$ and $\tau^{\bullet} \circ f_{\bullet} = \tilde{f}_{\bullet} \circ \tau^{\bullet}$. If $h$ is odd and $k_h = k$, then the logical action of $\tilde{f}_{\bullet}$ is the same as $f_{\bullet}$. 
\end{prop}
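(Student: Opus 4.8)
The logical action of a CSS code automorphism $g_\bullet$ means the pair of $\mathbb{F}_2$-linear maps it induces on (co)homology: $\hat g_1 = H_1(g_\bullet)$ on $H_1$ (the $Z$-logical space) and $\hat g^1 = H^1(g^\bullet)$ on $H^1$ (the $X$-logical space), where $g^\bullet$ is the transposed cochain automorphism. Since there is no canonical logical basis, the statement ``$\tilde f_\bullet$ has the same logical action as $f_\bullet$'' can only mean: after transporting a chosen logical basis of $Q$ to a logical basis of $\tilde Q$ via the lift maps $\hat\tau_1$ and $\hat\tau^1$, the matrices of $\hat{\tilde f}_1$ and $\hat{\tilde f}^1$ in the lifted basis agree with those of $\hat f_1$ and $\hat f^1$ in the base basis. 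The first thing I would establish is that the hypotheses $h$ odd and $k_h = k$ make $\hat\tau_1$ and $\hat\tau^1$ isomorphisms: for $h$ odd, Lemma~\ref{lem:PcomposeT} gives $p\circ\tau = I$, hence $p_\bullet\circ\tau_\bullet = I_\bullet$ as chain maps, so by functoriality of homology (Remark~\ref{rmk:InducedHomMap}) $\hat p_1\circ\hat\tau_1 = I$ and $\hat\tau_1$ is injective; since $k_h=k$ forces $\dim H_1(\tilde Q_\bullet) = \dim H_1(Q_\bullet)$, it is an isomorphism. Dualizing the same argument with Theorem~\ref{thm:lift-chain-map}(b) shows $\hat\tau^1$ is an isomorphism on $H^1$ as well. (This also recovers the remark after Theorem~\ref{thm:hOddLogical} that a complete logical basis for $Q$ lifts to one for $\tilde Q$.)

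\textbf{Basis transport and conjugation.} Next I would fix a symplectic logical basis of $Q$, i.e. bases $\{[z_i]\}_{i=1}^{k}$ of $H_1(Q_\bullet)$ and $\{[x_i]\}_{i=1}^{k}$ of $H^1(Q^\bullet)$ with $[z_i]^T[x_j] = \delta_{ij}$, and check that its image $\{\hat\tau_1[z_i]\}$, $\{\hat\tau^1[x_i]\}$ is again a symplectic logical basis for $\tilde Q$. Indeed $\hat\tau_1,\hat\tau^1$ are isomorphisms by the previous step, and using $\tau = p^T$ (Theorem~\ref{thm:lift-transpose}) together with $p\circ\tau = I$ for $h$ odd one gets $\tau_1^T\tau^1 = p_1\circ\tau_1 = I$, so $\tau_1(z_i)^T\tau^1(x_j) = z_i^T x_j$ and the pairing is preserved on passing to (co)homology. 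Then I would apply $H_1$ to the hypothesis $\tau_\bullet\circ f_\bullet = \tilde f_\bullet\circ\tau_\bullet$ to obtain $\hat\tau_1\circ\hat f_1 = \hat{\tilde f}_1\circ\hat\tau_1$; writing $\hat f_1[z_i] = \sum_j M_{ji}[z_j]$ and using linearity of $\hat\tau_1$ yields $\hat{\tilde f}_1(\hat\tau_1[z_i]) = \sum_j M_{ji}\,\hat\tau_1[z_j]$, so the matrix of $\hat{\tilde f}_1$ in the lifted $Z$-basis is the same matrix $M$ as that of $\hat f_1$ in the base $Z$-basis. The identical computation with the cochain hypothesis $\tau^\bullet\circ f^\bullet = \tilde f^\bullet\circ\tau^\bullet$ and $\hat\tau^1$ shows the matrix of $\hat{\tilde f}^1$ in the lifted $X$-basis equals that of $\hat f^1$ in the base $X$-basis. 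Since over $\mathbb{F}_2$ the logical action of a CSS automorphism is fully determined by these two matrices, the logical actions of $\tilde f_\bullet$ and $f_\bullet$ coincide, proving the proposition.

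\textbf{Main obstacle.} There is no serious computational difficulty: everything reduces to functoriality of (co)homology (Remark~\ref{rmk:InducedHomMap}) applied to the given intertwining relations, plus the fact that $\hat\tau_1$ and $\hat\tau^1$ are isomorphisms under the stated hypotheses. The only point genuinely requiring care is conceptual/bookkeeping: making ``same logical action'' well posed, which forces one to (i) fix the correspondence of logical bases, supplied precisely by $\hat\tau_1$ and $\hat\tau^1$ (this is where $h$ odd and $k_h=k$ are used), and (ii) verify that these two maps transport an $X$/$Z$ symplectic basis to a symplectic basis, so that logical $X$'s and $Z$'s are identified compatibly and the action on, say, a logical $\mathrm{CNOT}$ or Hadamard is genuinely preserved and not merely the homology matrices up to unrelated changes of basis; this is the short $\tau_1^T\tau^1 = I$ check above, which invokes $h$ odd one final time.
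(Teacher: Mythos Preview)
Your proposal is correct and follows essentially the same approach as the paper: lift a logical basis of $Q$ to $\tilde Q$ via $\hat\tau_1,\hat\tau^1$ (using $h$ odd and $k_h=k$ to guarantee these are isomorphisms), then apply functoriality of (co)homology to the intertwining relations $\tau_\bullet\circ f_\bullet=\tilde f_\bullet\circ\tau_\bullet$ and its cochain analogue to conclude that the matrices of $\hat{\tilde f}_1,\hat{\tilde f}^1$ in the lifted basis equal those of $\hat f_1,\hat f^1$. Your verification that the symplectic pairing is preserved under the lift (via $\tau_1^T\tau^1=p_1\tau_1=I$ for $h$ odd) is a point the paper asserts implicitly but does not check, so your write-up is in fact slightly more complete.
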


\begin{proof}
Consider a basis $X_i, Z_j$ for a base BB code $Q$ where $[X_i, Z_j] = 0$ when $i\neq j$ and they anti-commute for all $i,j$ otherwise. By the proof of Theorem \ref{thm:hOddLogical}, $\hat{p}_1$, $\hat{\tau}_1$, and their transposes are isomorphisms when $h$ is odd and $k_h = k$. Thus the basis for $Q$ lifts to a basis $\tilde{X}_i, \tilde{Z}_j$ for a cover BB code $\tilde{Q}$ where $\tilde{X}_i = \hat{\tau}^1(X_i)$ and $\tilde{Z}_j = \hat{\tau}_1(Z_j)$.

Without loss of generality, suppose that the base automorphism acts as $\hat{\sigma}_1(Z_i) = U_i$ where $U_i \in H_1(Q)$ and that the lifted automorphism induces the action $\hat{\tilde{f}}_1(\tilde{Z}_i) = \tilde{V}_i$ where $\tilde{V}_i \in H_1(\tilde{Q})$. Then
\begin{equation*}
     (\hat{\tilde{f}}_1 \circ \hat{\tau}_1) (Z_i) =  \hat{\tilde{f}}_1(\tilde{Z}_i) = \tilde{V}_i
\end{equation*}
and
\begin{equation*}
    (\hat{\tau}_1 \circ \hat{f}_1) (Z_i) = \hat{\tau}_1(U_i) = \tilde{U}_i. 
\end{equation*}

By the definition of a lift of a CSS code automorphism, $\tilde{V}_i = \tilde{U}_i$. An analogous argument holds for the $X_i$ and $\tilde{X}_i$ operators.  
\end{proof}

In the qudit BB code case, we instead only need that $k_h = k$ and $h \neq 0 \text{ mod } p$. 

In short, if the conditions of the preceding proposition are met and a base automorphism sends the base logical $Z_i$ operator to a different logical operator $U_i$, then the lifted automorphism will send the corresponding cover logical operator $\tilde{Z}_i$ to a logical operator $\tilde{V}_i$ that is necessarily the lift of $U_i$. 

\begin{ex}
\label{ex:TG-aut}
In Example \ref{ex:logical-lift-2}, we showed that the $[[54,8,6]]$ BB code is a triple cover of the $[[18,8,2]]$ BB code. We use the base and lifted logical basis given there in what follows.

Define the matrix
\begin{equation}
        f = \begin{pmatrix}
        0 & 1 & 0 & 0 & 0 & 0 & 0 & 0 & 0 \\
        1 & 0 & 0 & 0 & 0 & 0 & 0 & 0 & 0 \\
        0 & 0 & 1 & 0 & 0 & 0 & 0 & 0 & 0 \\
        0 & 0 & 0 & 0 & 1 & 0 & 0 & 0 & 0 \\
        0 & 0 & 0 & 1 & 0 & 0 & 0 & 0 & 0 \\
        0 & 0 & 0 & 0 & 0 & 1 & 0 & 0 & 0 \\
        0 & 0 & 0 & 0 & 0 & 0 & 0 & 1 & 0 \\
        0 & 0 & 0 & 0 & 0 & 0 & 1 & 0 & 0 \\
        0 & 0 & 0 & 0 & 0 & 0 & 0 & 0 & 1 \\
    \end{pmatrix}.
\end{equation}

We define a Tanner graph automorphism $f_{\bullet}$ of the base code by letting $f_2=f_0 = f$ and
\begin{equation}
    f_1 = \begin{pmatrix}
        f & 0 \\
        0 & f
    \end{pmatrix}.
\end{equation}
The projection chain map $p_{\bullet}$ consists of the vertical maps $p_2,p_1,p_0$ with matrix representations 
\begin{equation}
    p_2 = p_0 = p = \begin{pmatrix}
        I_{lm} & I_{lm} & I_{lm}
    \end{pmatrix},
\end{equation}
\begin{equation}
    p_1 = \begin{pmatrix}
        p & 0 \\
        0 & p 
    \end{pmatrix}.
\end{equation}
In general, the matrix $p$ will be $h$ blocks of $I_{lm}$ stacked horizontally for an $h$-cover. Let 
\begin{equation}
    \tilde{f} = \begin{pmatrix}
        f & 0 & 0 \\
        0 & f & 0 \\
        0 & 0 & f
    \end{pmatrix}.
\end{equation}
It can be shown that the lifted automorphism $\tilde{f}_\bullet$ is defined as $\tilde{f}_2 = \tilde{f}_0 = \tilde{f}$ and
\begin{equation}
    \tilde{f}_1 = \begin{pmatrix}
        \tilde{f} & 0 \\
        0 & \tilde{f}
    \end{pmatrix}.
\end{equation}
It is straightforward to verify that this lifted automorphism corresponds to swapping entire fibers of the covering map. 

Taking $G_X$ and $G_Z$ as the matrices whose rows are respectively the basis of $X$ and $Z$ logical operators given in Equation \ref{eq:n18-log-basis}, we find that the logical action of the automorphism is to transform the bases as $G'_X = G_Xf^1$ and $G'_Z = G_Z f_1$, where $f^1 = f_1$. We can then determine how the logical operators have changed in order to find the logical action. In this example, the logical action of $f_{\bullet}$ is as $$\text{CNOT}(1,2)\text{CNOT}(3,4)\text{CNOT}(6,5)\text{CNOT}(8,7).$$ Applying the lifted automorphism to the lifted basis for the cover code we find that, as expected, the logical action in the cover code is identical. \hfill $\blacksquare$
\end{ex}

There are also situations where we want an isomorphism between a CSS code and its dual. These correspond to the $ZX$-dualities in \cite{Breuckmann2024foldtransversal} as they swap the $X$ and $Z$ sectors of a CSS code and therefore implement logical Hadamards. We specialize to BB codes as follows (this extends to group algebra codes in general \cite{Eberhardt2025foldBB}):

\begin{dfn}[\textbf{ZX-Duality of a BB Code}]
    \label{dfn:ZXduality}
     Let $Q_{\bullet}$ be a chain complex representing a BB code $Q$. A $ZX$-duality of $Q_{\bullet}$ is a chain map $A_{\bullet} \colon Q_{\bullet} \rightarrow Q_{\bullet}^T$
    \begin{equation}
        \begin{tikzcd}
            Q_2 = \bbF_2^{lm} \arrow[r, "H_Z^T"] \arrow[d, "W_Z"]
            & Q_1 = \bbF_2^{lm} \oplus  \bbF_2^{lm} \arrow[d, "f"] \arrow[r, "H_X"]
            & Q_0 = \bbF_2^{lm} \arrow[d, "W_X"] \\
            Q_0 = \bbF_2^{lm} \arrow[r, "H_X^T"]
            & Q_1 = \bbF_2^{lm} \oplus  \bbF_2^{lm} \arrow[r, "H_Z"] 
            & Q_2 = \bbF_2^{lm}
        \end{tikzcd}
    \end{equation}
    where $f \in S_{2lm}$, $W_X \in \text{GL}_{lm}(\bbF_2)$, $W_Z \in \text{GL}_{lm}(\bbF_2)$.
\end{dfn}
\newpage
As the subsequent example shows, we can also lift $ZX$-dualities from a base BB code to a cover BB code.  

\begin{ex}
\label{ex:zx-duality}
Using the same codes and logical bases as in Example \ref{ex:TG-aut}, let $I_{lm}$ be the $lm \times lm$ identity matrix and consider the base CSS code automorphism defined by $f_2 = f_0 = I_{lm}$ and
\begin{equation}
    f_1 = \begin{pmatrix}
        0 & I_{lm} \\
        I_{lm} & 0 
    \end{pmatrix}.
\end{equation}
Letting
\begin{equation}
    \tilde{f} = \begin{pmatrix}
        I_{lm} & 0 & 0 \\
        0 & I_{lm} & 0 \\
        0 & 0 & I_{lm}
    \end{pmatrix} = I_{\tilde{l}\tilde{m}},
\end{equation}
we find that the lifted automorphism consists of the maps $\tilde{f_2} = \tilde{f}_0 = \tilde{f} = I_{\tilde{l}\tilde{m}}$ and,
\begin{equation}
    \tilde{f}_1 = \begin{pmatrix}
        0 & \tilde{f} \\
        \tilde{f} & 0 
    \end{pmatrix} = \begin{pmatrix}
        0 & I_{\tilde{l}\tilde{m}} \\
        I_{\tilde{l}\tilde{m}} & 0 
    \end{pmatrix}.
\end{equation}
\begin{equation}
    \tilde{f}_1 = \begin{pmatrix}
        0 & I_{\tilde{l}\tilde{m}} \\
        I_{\tilde{l}\tilde{m}} & 0 
    \end{pmatrix}.
\end{equation}
The logical action of the base and lifted automorphisms is identical and is given by the logical circuit in Figure \ref{fig:aut-logical-circ}. This is precisely a SWAP-transversal Hadamard gate as in \cite{Sayginel2025auts}. \hfill $\blacksquare$
\end{ex}

\begin{figure}
    \centering
    \includegraphics[width=0.20\linewidth]{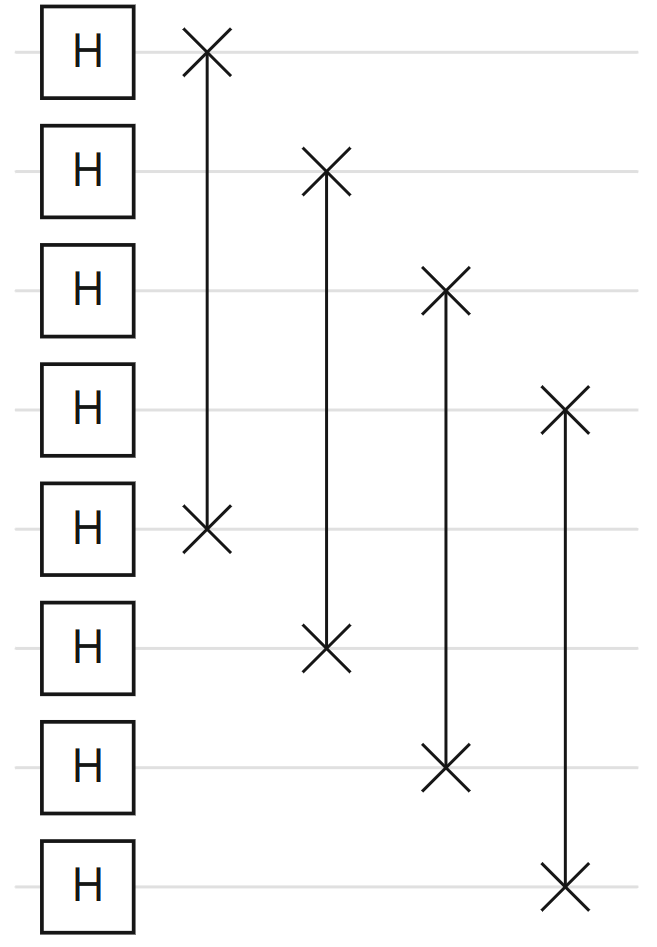}
    \caption{A circuit showing the logical action of the base and lifted automorphisms in Example \ref{ex:zx-duality}.}
    \label{fig:aut-logical-circ}
\end{figure}

\section{Searching for Cover Codes}
\label{sec:CoverCodeNumerics}

Theorem \ref{thm:code-seqs} allows us to start with a small base BB code and search for new codes by taking integer multiples of the base code's lattice parameters $l, m$ and check polynomials with monomials whose powers are congruent to those of the base code's modulo $l$ or $m$. All possible check polynomials for the cover code are of the form given in \eqref{eq:cover-poly-explicit}.

We perform a search for weight 6 and weight 8 BB codes in Sections \ref{sec:weight-6-codes} and \ref{sec:weight-8-codes} respectively. We give more details below but, we find some novel weight 8 codes that have promising parameters. In particular, the $[[64,14,8]]$, $[[144,14,14]]$, $[[192,20,\leq 16]]$ and $[[288,14,\leq 24]]$ codes are some of the best instances we find.

The search space of cover codes is much smaller than the full search space of 6 term polynomials. For a given $\tilde{l}$ and $\tilde{m}$, there are $\tilde{l}\tilde{m} = n_h/2$ possible choices of monomials. As there are 6 terms in the check polynomials, the full search space consists of $(n_h/2)^6$ polynomials. However, as discussed in Section \ref{sec:Deck}, there is an $\tilde{l}\tilde{m}$-fold degeneracy in the choice of both $\tilde{A}$ and $\tilde{B}$, as multiplying either polynomial by any monomial will give an equivalent code. This reduces the search space to $N=(n_h/2)^4$.  

For a $h$-cover code though, as there are $ut=h$ choices of each $\tilde{A}_i, \tilde{B}_j$ that satisfy the conditions in Theorem \ref{thm:code-seqs}, the full search space is $h^6$. As before, there is at least an $h$-fold degeneracy for each $\tilde{A}$ and $\tilde{B}$ due to the fact that each polynomial can independently be multiplied by monomials $x^{il}y^{jm}$ with $0 \leq i < u$ and $0 \leq j < t$ such that the resulting code has identical parameters and still remains a cover code. There are also certain instances where multiplication by other monomials, e.g. $\tilde{A}' = x^ay^b\tilde{A}$ where $a$ and $b$ are not multiples of $l$ and $m$, still produces a valid cover code satisfying  $p(\tilde{A}') = A$. Since whether or not such monomials exist depends on the specific code, we have an upper bound of $N'\leq h^4$ on the number of unique codes in the search space for $h$-cover codes. This is considerably smaller than $N = (n_h/2)^4$. For example, when considering the gross code as a double cover of the $[[72,12,6]]$ code, we find that $N = 72^4$ while $N' = 2^4=16$. We show the sizes of the full and reduced search spaces for the code examples in Tables \ref{tab:k-12-codes}-\ref{tab:k-6-codes} and Tables \ref{tab:k-14-w-8-codes}-\ref{tab:k-8-w-8-codes}, and we find reductions of up to a factor of $10^{9}$ in the size of the search space.

The search space of cover codes is not only significantly smaller than the full search space but also appears to contain interesting instances of BB codes. By interesting, we mean codes that have a non-trivial number of logical qubits and have distance that appears to scale with $h$. 

In Table \ref{tab:numbers-of-covers}, we show $k_h$ for all unique instances of cover codes up to $h=5$ for the $[[72,12,6]]$, $[[18,8,2]]$ and $[[14,6,2]]$ base codes (see Tables \ref{tab:k-12-codes}-\ref{tab:k-6-codes} for polynomials). By unique codes, we mean all codes with polynomials that are inequivalent as described above (recall that check polynomials related by monomial multiplication are equivalent). The data in Table \ref{tab:numbers-of-covers} were gathered by brute-force searching over all unique instances of valid cover codes for a given base code with $\tilde{l},\tilde{m}$ chosen to match the instances in Tables \ref{tab:k-12-codes}-\ref{tab:k-6-codes}. If the number of logical qubits of the cover code increases, we observe that it is often, but not always, because the cover is trivial in the sense that the cover consists of multiple disconnected Tanner graphs i.e. multiple separate, smaller codes. The disconnected Tanner graphs may or may not be copies of the base Tanner graph. In Tables \ref{tab:increasing-k-codes-wt-6} and \ref{tab:increasing-k-codes-wt-8} we show examples of non-trivial cover codes where $k_h > k$. 

\subsection{Weight 6 Code Examples}
\label{sec:weight-6-codes}

In Tables \ref{tab:k-12-codes}, \ref{tab:k-10-codes}, \ref{tab:k-8-codes}, and \ref{tab:k-6-codes}, we show sequences of increasingly large cover codes with fixed $k=12, 10, 8$ and $6$, respectively. In each of the sequences, the $h$-cover codes are covers of the base ($h=1$) code which is the bottom entry in the table. For codes with distances up to and including $d=14$, exact distances were found using the mixed integer programming method of \cite{Bravyi2024BB}. For codes with larger distances, upper bounds were computed using BP-OSD \cite{Roffe2020Decode,Roffe_LDPC_Python_tools_2022} with 16,000 shots. Codes from the original BB codes paper \cite{Bravyi2024BB} are highlighted in bold. Note that while we chose to fix $k_h=k$ in these sequences, there are non-trivial instances of cover codes where $k_h > k$. In particular, the $[[72,12,6]]$ code used as the base code for the $k=12$ sequence is respectively a double and 4-cover of the $[[36,8,4]]$ and $[[18,8,2]]$ codes shown in Table \ref{tab:k-8-codes}. Another interesting example of increasing $k$ is for the $[[756,16,\leq 34]]$ code shown in \cite{Bravyi2024BB}, which can be viewed as a 6-cover code of the $[[126, 4, 12]]$ code $Q(x^3 + y + y^2, \ y^2 + x^3 + x^{19} , \  21, \ 3)$.
While we show codes with specific choices of lattice parameters $\tilde{l}$ and $\tilde{m}$, these are generally not unique and may not be the only lattice parameters that would lead to such codes. 

The sequences of codes shown in Tables \ref{tab:k-12-codes}-\ref{tab:k-6-codes}, appear to have increasing distance. This further supports our claim that the reduced search space of polynomials that corresponds to cover codes contains interesting code instances. The $k=12$ sequence appears to align with the conjectured family in Yoder et al \cite{Yoder2025TdG}. In particular, if the $n=432$ and $n=648$ codes have distances of 24 and 30 respectively, then this would match the conjectured distances. 

We believe that several of the codes in Tables \ref{tab:k-12-codes}-\ref{tab:k-6-codes} are new, particularly the larger instances of the $k=12$ codes. The $k=8$ codes appear to match the $[[18h,8,\leq 2h]]$ balanced product cyclic codes of \cite{Tiew2025Dehn}. In principle, we are also able to recover all of the generalised toric codes in \cite{Liang2025gtc}. As an example, we show in Table \ref{tab:k-10-codes} that several of the $k=10$ codes can be generated as a sequence of covers of the base $[[62,10,6]]$ code. The remaining $k=10$ codes from \cite{Liang2025gtc} can be generated as covers of the $[[42,10,4]]$ code $Q(1+x^2 + x^{13},\ 1+x^{16}+x^{20},\ 21,\ 1)$. We are not restricted to only reproducing the reported codes though as we can also extend the sequence by increasing $h$. We find, for example, the $[[434, 10, d \leq 26]]$ 7-cover code in Table \ref{tab:k-10-codes} that is larger than the $[[372,10,\leq 24]]$ code reported in \cite{Liang2025gtc}.

\vspace{0.5em}
\begin{table}[H]
\centering
\caption*{$\boldsymbol{k=12}$ \textbf{Codes}}
\begin{tabular}{c|c|c|c|c|c|c|c|c}
    Parameters & $l$ & $m$ & $h$ & $A$ & $B$ & $N$ & $N'$ & $kd^2/n$\\
    \hline & & & & & & & & \\[-1em]
    $[[648,12,d \leq 36]]$ & 18 & 18 & 9 & $x^3 + y^{13} + x^{12}y^2$ & $y^3 + x^{7}y^{12} + x^{14}y^6$ & $1.1\times 10^{10}$& 6561 & $\leq 24$\\
    \hline & & & & & & & & \\[-1em]
    $[[576,12, d \leq 32]]$ & 24 & 12 & 8 & $x^{21} + y^7 + y^2$ & $y^9 + x^{13} + x^{14}$ & $6.9\times 10^9$ & 4096 & $\leq 21.3$\\
    \hline & & & & & & & & \\[-1em]
    $[[504,12,d \leq 30]]$ & 42 & 6 & 7 & $x^{27} + y + y^2$ & $y^3 + x^{37} + x^{38}$ & $4.0\times 10^9$ & 2401 & $\leq 21.4$\\
    \hline & & & & & & & & \\[-1em]
    $[[432,12,d \leq 24]]$ & 18 & 12 & 6 & $x^{15} + y^7 + y^2$ & $y^9 + x^7 + x^{14}$ & $2.2\times 10^9$& 1296 & $\leq 16$\\
    \hline & & & & & & & & \\[-1em]
    $\mathbf{[[360,12,24]]}$ & \textbf{30} & \textbf{6} & 5 & $\boldsymbol{x^9 + y + y^2}$ & $\boldsymbol{y^3 + x^{25} + x^{26}}$ & $1.0\times 10^9$ & 625 & 19.2\\
    \hline & & & & & & & & \\[-1em]
    $\mathbf{[[288,12,18]]}$ & \textbf{12} & \textbf{12} & 4 & $\boldsymbol{x^3 + y^7 + y^2}$ & $\boldsymbol{y^3 + x + x^2}$ & $4.3\times 10^8$& 256 & 13.5\\
    \hline & & & & & & & & \\[-1em]
    $[[216,12,12]]$ & 18 & 6 & 3 & $x^3 + y + y^2$ & $y^3 + x + x^2$ & $1.4\times 10^8$& 81 & 8\\
    \hline & & & & & & & & \\[-1em]
    $\mathbf{[[144,12,12]]}$ & \textbf{12} & \textbf{6}& 2 & $\boldsymbol{x^3 + y + y^2}$ & $\boldsymbol{y^3 + x + x^2}$ & $2.7\times 10^7$ & 16 & 12\\
    \hline & & & & & & & & \\[-1em]
    $\mathbf{[[72,12,6]]}$ & \textbf{6} & \textbf{6} & 1 & $\boldsymbol{x^3 + y + y^2}$ & $\boldsymbol{y^3 + x + x^2}$ & 1 & 1 & 6 \\
\end{tabular}
\vspace{0.25em}
\caption{Sequence of $h$-cover codes of the base $[[72,12,6]]$ code with $k=12$. Codes from the original BB codes paper \cite{Bravyi2024BB} are highlighted in bold. $N$ is the size of the full search space of polynomials for the given $l,m$ while $N'$ is the reduced search space of cover code polynomials.}
\label{tab:k-12-codes}
\end{table}


\begin{table}[H]
\centering
\caption*{$\boldsymbol{k=10}$ \textbf{Codes}}
\begin{tabular}{c|c|c|c|c|c|c|c|c}
    Parameters & $l$ & $m$ & $h$ & $A$ & $B$ & $N$ & $N'$ & $kd^2/n$\\
    \hline & & & & & & & & \\[-1em]
    $[[434,10,\leq26]]$ & 31 & 7 & 7 & $1 + x^6y + x^{27}y^4$ & $1 + x^{15}y^6 + x^{24}y^3$ & $2.2\times 10^9$ & 2401 & $\leq 15.6$\\
    \hline & & & & & & & & \\[-1em]
    $[[372,10,\leq24]]$ & 31 & 6 & 6 & $1 + x^6y^2 + x^{27}$ & $1 + x^{15}y + x^{24}y^4$ & $1.2\times 10^9$ & 1296 & $\leq 15.5$\\
    \hline & & & & & & & & \\[-1em]
    $[[310,10,\leq22]]$ & 31 & 5 & 5 & $1 + x^6y^2 + x^{27}y$ & $y^2 + x^{15} + x^{24}$ & $5.8\times 10^8$ & 625 & $\leq 15.6$\\
    \hline & & & & & & & & \\[-1em]
    $[[248,10,\leq18]]$ & 31 & 4 & 4 & $1 + x^6y + x^{27}$ & $y^2 + x^{15}y^3 + x^{24}$ & $2.4\times 10^8$ & 256 & $\leq 13.1$\\
    \hline & & & & & & & & \\[-1em]
    $[[186,10,14]]$ & 31 & 3 & 3 & $y + x^6y^2 + x^{27}$ & $1 + x^{15}y + x^{24}$ & $7.5 \times 10^7$& 81 & 10.5\\
    \hline & & & & & & & & \\[-1em]
    $[[124,10,10]]$ & 31 & 2 & 2 &$y + x^6 + x^{27}$ & $1 + x^{15} + x^{24}$ & $1.7\times 10^7$& 16 & 8.1\\
    \hline & & & & & & & & \\[-1em]
    $[[62,10,6]]$ & 31 & 1 & 1 & $1 + x^6 + x^{27}$ & $1 + x^{15} + x^{24}$ & 1 & 1 & 5.8\\
\end{tabular}
\vspace{0.25em}
\caption{Sequence of $h$-cover codes of the base $[[62,10,6]]$ code with $k=10$. The codes with $h=1,2,...,6$ are some of the $k=10$ generalised toric codes \cite{Liang2025gtc}.}
\label{tab:k-10-codes}
\end{table}

\vspace{-2.5em}
\begin{table}[H]
\centering
\caption*{$\boldsymbol{k=8}$ \textbf{Codes}}
\begin{tabular}{c|c|c|c|c|c|c|c|c}
    Parameters & $l$ & $m$ & $h$ & $A$ & $B$ & $N$ & $N'$ & $kd^2/n$\\
    \hline & & & & & & & & \\[-1em]
    $[[198,8,d\leq 16]]$ & 33 & 3 & 11 & $x^{12} + y + y^2$ & $1 + x + x^8$ & $9.6\times 10^7$ & 1681 & $\leq 10.3$ \\
    \hline & & & & & & & & \\[-1em]
    $[[180,8,d\leq 16]]$ & 15 & 6 & 10 & $x^9 + y + x^{6}y^5$ & $x^6 y^3 + x + x^2$ & $6.6\times 10^7$ & 1156 & $\leq 11.4$\\
    \hline & & & & & & & & \\[-1em]
    $[[162,8, 14]]$ & 27 & 3 & 9 & $1 + y + x^6y^2$ & $1 + x^{25} + x^{20}$ & $4.3\times 10^7$ & 729 & 9.7\\
    \hline & & & & & & & & \\[-1em]
    $[[144,8,12]]$ & 24 & 3 & 8 & $1 + y + x^{21}y^2$ & $1 + x^{22} + x^{17}$ & $2.7\times 10^7$ & 484 & 8\\
    \hline & & & & & & & & \\[-1em]
    $[[126,8,10]]$ & 21 & 3 & 7 & $x^9 + y + y^2$ & $1 + x + x^8$ & $1.6\times 10^7$ & 289 & 6.3\\
    \hline & & & & & & & & \\[-1em]
    $\mathbf{[[108,8,10]]}$ & \textbf{9} & \textbf{6} & 6 & $\boldsymbol{x^3 + y + y^2}$ & $\boldsymbol{y^3 + x + x^2}$ & $8.5\times 10^6$& 168 & 7.4\\
    \hline & & & & & & & & \\[-1em]
    $\mathbf{[[90,8,10]]}$ & \textbf{15} & \textbf{3} & 5 & $\boldsymbol{x^9 + y + y^2}$ & $\boldsymbol{1 + x^7 + x^2}$ & $4.1\times 10^6$ & 81 & 8.9\\
    \hline & & & & & & & & \\[-1em]
    $[[72,8,8]]$ & 12 & 3 & 4 & $x^9 + y + y^2$ & $1 + x^4 + x^{11}$ & $1.7\times 10^6$ & 36 & 7.1\\
    \hline & & & & & & & & \\[-1em]
    $[[54,8,6]]$ & 9 & 3 & 3 & $x^3 + y + y^2$ & $1 + x + x^2$ & $5.3 \times 10^5$& 15 & 5.3\\
    \hline & & & & & & & & \\[-1em]
    $[[36,8,4]]$ & 6 & 3 & 2 &$x^3 + y + y^2$ & $1 + x + x^2$ & $1.0\times 10^5$& 4 & 3.6\\
    \hline & & & & & & & & \\[-1em]
    $[[18,8,2]]$ & 3 & 3 & 1 & $1 + y + y^2$ & $1 + x + x^2$ & 1 & 1 & 1.8\\
\end{tabular}
\vspace{0.25em}
\caption{Sequence of $h$-cover codes of the base $[[18,8,2]]$ code with $k=8$. Codes from the original BB codes paper \cite{Bravyi2024BB} are highlighted in bold.}
\label{tab:k-8-codes}
\end{table}


\vspace{-2.5em}
\begin{table}[H]
\centering
\caption*{$\boldsymbol{k=6}$ \textbf{Codes}}
\begin{tabular}{c|c|c|c|c|c|c|c|c}
    Parameters & $l$ & $m$ & $h$ & $A$ & $B$ & $N$ & $N'$ & $kd^2/n$\\
    \hline & & & & & & & & \\[-1em]
    $[[154,6,\leq 16]]$ & 7 & 11 & 11 & $1 + x^2 y^3 +x^3y^4$  & $1 + x^2 y^8 + x^3 y^7$ & $3.5\times 10^7$ & $1.5 \times 10^4$ & $\leq 10.0$\\
    \hline & & & & & & & & \\[-1em]
    $[[140,6,14]]$ & 7 & 10 & 10 & $1 + x^2 y^5 +x^3y^9$  & $1 + x^2y^6 + x^3 y^3$ & $2.4\times 10^7$ & $1.0 \times 10^4$ & 8.4\\
    \hline & & & & & & & & \\[-1em]
    $[[126,6,14]]$ & 7 & 9 & 9 & $1 + x^2 y^5 +x^3y$  & $1 + x^2 + x^3 y^2$ & $1.6\times 10^7$ & 6561 & 9.3\\
    \hline & & & & & & & & \\[-1em]
    $[[112,6,12]]$ & 7 & 8 & 8 & $1 + x^2 y^5 +x^3y$ & $1 + x^2 y^6 + x^3 y^5$ & $9.8\times 10^6$ & 4096 & 7.7\\
    \hline & & & & & & & & \\[-1em]
    $[[98,6,12]]$ & 7 & 7 & 7 & $1 + x^2 y^2 +x^3y$  & $1 + x^2 + x^3 y^2$ & $5.8\times 10^6$ & 2401 & 8.8\\
    \hline & & & & & & & & \\[-1em]
    $[[84,6,10]]$ & 7 & 6 & 6 & $1 + x^2 y^3 +x^3y^2$  & $1 + x^2 + x^3 y$ & $3.1\times 10^6$ & 1296 & 7.1\\
    \hline & & & & & & & & \\[-1em]
    $[[70,6,8]]$ & 7 & 5 & 5 & $y + x^2 y^4 +x^3y$  & $y^4 + x^2 + x^3$ & $1.5\times 10^6$ & 625 & 5.5\\
    \hline & & & & & & & & \\[-1em]
    $[[56,6,8]]$ & 7 & 4 & 4 &$1 + x^2 + x^3y^2$ & $1 + x^2y + x^3$ & $6.1\times 10^5$ & 256 & 6.9\\
    \hline & & & & & & & & \\[-1em]
    $[[42,6,6]]$ & 7 & 3 & 3 &$1 + x^2 + x^3y$ & $1 + x^2 + x^3y^2$ & $1.9\times 10^5$ & 81 & 5.1\\
    \hline & & & & & & & & \\[-1em]
    $[[28,6,4]]$ & 7 & 2 & 2 &$y + x^2 + x^3$ & $1 + x^2 + x^3$ & $3.8\times 10^4$ & 16 & 3.4\\
    \hline & & & & & & & & \\[-1em]
    $[[14,6,2]]$ & 7 & 1 & 1 & $1 + x^2 + x^3$ & $1 + x^2 + x^3$ & 1 & 1& 1.7\\
\end{tabular}
\vspace{0.25em}
\caption{Sequence of $h$-cover codes of the base $[[14,6,2]]$ code with $k=6$.}
\label{tab:k-6-codes}
\end{table}

It is also possible to choose a sequence of covers where $k_h$ is increasing rather than fixed. In Table \ref{tab:increasing-k-codes-wt-6}, we show a sequence of $h$-covers of the $[[36,8,4]]$ code that all have $k_h > 8$. We only show values of $h$ in Table \ref{tab:increasing-k-codes-wt-6} where $k_h > k$. These occur somewhat sparsely and it is unclear if this is a general feature or specific to the choice of base code. Comparing the codes in Table \ref{tab:increasing-k-codes-wt-6} to those in Tables \ref{tab:k-12-codes}-\ref{tab:k-6-codes}, we see that it is generally better to choose multiple copies of a lower $k$ code from a sequence with fixed $k$. All codes in Table \ref{tab:increasing-k-codes-wt-6} have parameters that align with our conjectured bounds, and we saw no codes that violated those bounds when searching.

\begin{table}[H]
\centering
\caption*{$\boldsymbol{k_h > k}$ \textbf{Codes}}
\begin{tabular}{c|c|c|c|c|c|c}
    Parameters & $l$ & $m$ & $h$ & $A$ & $B$ & $kd^2/n$\\
    \hline & & & & & & \\[-1em]
    $[[288,20,6]]$ & 12 & 12 & 8 & $x^{9}y^{9} + x^{6}y^{7} + y^{8}$ & $1 + x^{7}y^{6} + x^{2}y^{9}$ & 2.5 \\
    \hline & & & & & & \\[-1em]
    $[[288,16, 12]]$ & 12 & 12 & 8 & $x^3y^3 + x^{6}y^{7} + y^8$  & $x^6 + xy^9 + x^2y^{9}$ & $8$ \\
    \hline & & & & & & \\[-1em]
    \hline & & & & & & \\[-1em]
    $[[252,20,4]]$ & 6 & 21 & 7 & $x^3y^{18} + y^{4} + y^{11}$  & $y^{18} + xy^{15} + x^2y^{9}$ & 1.27 \\
    \hline & & & & & & \\[-1em]
    $[[252,14,12]]$ & 6 & 21 & 7 & $x^3 + y^{19} + y^8$  & $y^9 + xy^6 + x^2y^{15}$ & 8 \\
    \hline & & & & & & \\[-1em]
    \hline & & & & & & \\[-1em]
    $[[72,12,6]]$ & 6 & 6 & 2 & $x^3 + y + y^2$  & $y^3 + x + x^2$ & 6 \\
    \hline & & & & & & \\[-1em]
    \hline & & & & & & \\[-1em]
    $[[36,8,4]]$ & 6 & 3 & 1 & $x^3 + y + y^2$ & $1 + x + x^2$ & 3.6 \\
\end{tabular}
\vspace{0.25em}
\caption{Sequence of $h$-cover codes of the base $[[36,8,4]]$ code with increasing $k_h$. We only show codes in the sequence of $h$-covers where  $k_h > 8$.}
\label{tab:increasing-k-codes-wt-6}
\end{table}

\begin{table}[H]
\centering
\begin{tabular}{c|c|c|c|c|c}
    Base Code & $h$ & $\tilde{l}$ & $\tilde{m}$ & $k_h$ & \# Unique Instances \\
    \hline
    $[[18,8,2]]$ & 2 & 6 & 3 & 8 & 3 \\
    $(l=3,m=3)$&  &  &  & 16 & 1 \\
    \cdashline{2-6}
    & 3 & 9 & 3 & 8 & 15 \\
    \cdashline{2-6}
    & 4 & 12 & 3 & 8 & 32 \\
    &  &  &  & 16 & 3 \\
    &  &  &  & 32 & 1 \\
    \cdashline{2-6}
    & 5 & 15 & 3 & 8 & 80 \\
    &  &  &  & 40 & 1 \\ 
    \hline
    $[[72,12,6]]$ & 2 & 12 & 6 & 12 & 16  \\
    \cdashline{2-6}
    $(l=6,m=6)$& 3 & 18 & 6 & 12 & 81 \\
    \cdashline{2-6}
    & 4 & 12 & 12 & 12 & 192 \\
    & & & & 16 & 48 \\
    & & & & 20 & 16 \\
    \cdashline{2-6}
    & 5 & 30 & 6 & 12 & 624 \\
    & & & & 60 & 1 \\
    \hline
    $[[14,6,2]]$ & 2 & 7 & 2 & 6 & 15 \\
    $(l=7,m=1)$ & & & & 12 & 1 \\
    \cdashline{2-6}
    & 3 & 7 & 3 & 6 & 76 \\
    & & & & 10 & 4 \\
    & & & & 18 & 1 \\
    \cdashline{2-6}
    & 4 & 7 & 4 & 6 & 240 \\
    & & & & 12 & 15 \\
    & & & & 24 & 1 \\
    \cdashline{2-6}
    & 5 & 7 & 5 & 6 & 624 \\
    & & & & 30 & 1 \\
\end{tabular}
\caption{Distribution of $k_h$ for all unique instances of $h=2$ to $h=5$ cover codes of 3 different base codes.}
\label{tab:numbers-of-covers}
\end{table}

\subsection{Weight 8 Code Examples}
\label{sec:weight-8-codes}

Until now, we have only considered weight 6 BB codes, i.e. those with 3 terms in each of the defining polynomials. However, our methods extend straightforwardly to BB codes with any number of terms in the defining polynomials. The larger the number of terms in the defining polynomials, the larger the full search space of codes. This makes the reduction in search space by considering coverings even more useful. 

We applied our methods to search for BB codes with weight 8 checks where both defining polynomials $A$ and $B$ have 4 terms. We find that for these weight 8 BB codes, the full search space is $N = (n_h/2)^6$ while the reduced search space of cover codes is upper bounded as $N' \leq h^6$. In Tables \ref{tab:k-14-w-8-codes}-\ref{tab:k-8-w-8-codes}, we show sequences of $k=8,10,12,14$ weight 8 BB codes. While we believe most of these codes are new, the $[[96,10,12]]$ and $[[96,12,10]]$ codes were found previously by Lin and Pryadko \cite{LinPryadko2024Q2BGA}. Our weight 8 codes represent further empirical evidence that our conjectured bounds hold and also demonstrate the utility of our method in searching for new codes with promising parameters.

It is expected that increasing the check weight will improve the parameters of the codes and this is indeed what we observe in Tables \ref{tab:k-14-w-8-codes}-\ref{tab:k-8-w-8-codes}. We find two highly promising sequences of $k=14$ codes in Tables \ref{tab:k-14-w-8-codes} and \ref{tab:k-14-w-8-codes-alt} that have among the best $kd^2/n$ of any of the codes that we found. For example, the $[[64,14,8]]$ and $[[144,14,14]]$ codes have $kd^2/n$ of 14 and 19.1 respectively.

The weight 8 Abelian and non-Abelian two-block group algebra codes found by Lin and Pryadko in \cite{LinPryadko2024Q2BGA} are some of the more closely comparable codes to our weight 8 codes in Tables \ref{tab:k-14-w-8-codes}-\ref{tab:k-8-w-8-codes}. The weight 8 codes we find are as good and often better in terms of $kd^2/n$ than those found in \cite{LinPryadko2024Q2BGA}. In particular, the $[[64,14,8]]$ has better $kd^2/n$ than all of their codes apart from their $[[96,10,12]]$ code and most of our codes with $n>100$ have better $kd^2/n$.

All of the weight 8 codes in Tables \ref{tab:k-14-w-8-codes}-\ref{tab:k-8-w-8-codes} have improved parameters relative to their weight 6 counterparts in Tables \ref{tab:k-12-codes}-\ref{tab:k-6-codes}. The $k=8$ and weight 8 codes in Table \ref{tab:k-8-w-8-codes} have better $kd^2/n$ than the $k=8$ and weight 6 codes in Table \ref{tab:k-8-codes}, and some of these codes even outperform some of the weight 6, $k=12$ codes. For example, the $[[56,8,8]]$ weight 8 code has $kd^2/n = 9.1$ compared to the $[[72,12,6]]$ weight 6 code with $kd^2/n = 6$. Similarly, the weight 8, $k=12$ codes in Table \ref{tab:k-12-w-8-codes} have better parameters than the weight 6 codes in Table \ref{tab:k-12-codes}; the $[[32,12,4]]$ weight 8 code achieves the same $kd^2/n$ as the $[[72,12,6]]$ weight 6 code, and the $[[96,12,10]]$ code has slightly better $kd^2/n$ than the gross code with 48 fewer physical qubits. The $k=10$ codes in Table \ref{tab:k-10-w-8-codes} also outperform the $k=10$ generalised toric codes in \cite{Liang2025gtc}; the $[[48,10,6]]$ achieves distance 6 with just $n=48$ as opposed to $n=62$ for the $[[62,10,6]]$ generalised toric code.


\begin{table}[H]
\centering
\caption*{$\boldsymbol{k=14}$ \textbf{Codes}}
\begin{tabular}{c|c|c|c|c|c|c|c}
    Parameters & $l$ & $m$ & $h$ & Polynomials & $N$ & $N'$ & $kd^2/n$\\
    \hline & & & & & & & \\[-1em]
    $[[288,14,\leq 24]]$ & 12 & 12 & 4 & \multicolumn{1}{l|}{$A = x^{6}y^4 + x^{11}y^4 + x^{3}y^{6} +x^{11}y^3$} & $8.9\times 10^{12}$ & 4096 & $\leq 28$\\
    & & & & \multicolumn{1}{l|}{$B = y^{11} + x^{2}y^7 + x^{5}y^{11} +x^{9}y^4$} & & & \\
    \hline & & & & & & & \\[-1em]
    $[[216,14,\leq 20]]$ & 18 & 6 & 3 & \multicolumn{1}{l|}{$A = y^4 + x^{11}y^4 + x^{3} +x^{11}y^3$} & $1.6\times 10^{12}$ & 729 & $\leq 25.9$\\
    & & & & \multicolumn{1}{l|}{$B = x^{6}y^5 + x^{2}y + x^{5}y^5 +x^{15}y^4$} & & & \\
    \hline & & & & & & & \\[-1em]
    $[[144,14,14]]$ & 12 & 6 & 2 & \multicolumn{1}{l|}{$A = x^6y^4 + x^5y^4 + x^3 +x^{11}y^3$} & $1.4\times 10^{11}$ & 64 & 19.1\\
    & & & & \multicolumn{1}{l|}{$B = y^5 + x^8y + x^5y^5 +x^9y^4$} & & & \\
    \hline & & & & & & & \\[-1em]
    $[[72,14,8]]$ & 6 & 6 & 1 & \multicolumn{1}{l|}{$A = y^4 + x^5y^4 + x^3 +x^5y^3$} & 1 & 1& 12.4\\
    & & & & \multicolumn{1}{l|}{$B = y^5 + x^2y + x^5y^5 +x^3y^4$} & & & \\
\end{tabular}
\vspace{0.25em}
\caption{Sequence of $h$-cover codes of the base $[[72,14,8]]$ code, with weight 8 checks and $k=14$.}
\label{tab:k-14-w-8-codes}
\end{table}

\vspace{-2em}
\begin{table}[H]
\centering
\begin{tabular}{c|c|c|c|c|c|c|c}
    Parameters & $l$ & $m$ & $h$ & Polynomials & $N$ & $N'$ & $kd^2/n$\\
    \hline & & & & & & & \\[-1em]
    $[[256,14,\leq 22]]$ & 16 & 8 & 4 & \multicolumn{1}{l|}{$A = x^{9}y^3 + y^4 + x^{14} +x^{3}y^6$} & $4.4\times 10^{12}$ & 4096 & $\leq 26.5$\\
    & & & & \multicolumn{1}{l|}{$B = x^{14}y + x^4y + x^3 +x^{13}y$} & & & \\
    \hline & & & & & & & \\[-1em]
    $[[192,14,\leq 16]]$ & 8 & 12 & 3 & \multicolumn{1}{l|}{$A = xy^3 + y^8 + x^6y^4 +x^3y^2$} & $7.8\times 10^{11}$ & 729 & $\leq 18.7$\\
    & & & & \multicolumn{1}{l|}{$B = x^6y + x^4y^5 + x^3 +x^5y$} & & & \\
    \hline & & & & & & & \\[-1em]
    $[[128,14,12]]$ & 8 & 8 & 2 & \multicolumn{1}{l|}{$A = xy^3 + y^4 + x^6y^4 +x^3y^6$} & $6.9\times 10^{10}$ & 64 & 15.8\\
    & & & & \multicolumn{1}{l|}{$B = x^6y + x^4y^5 + x^3 + x^5y^5$} & & & \\
    \hline & & & & & & & \\[-1em]
    $[[64,14,8]]$ & 8 & 4 & 1 & \multicolumn{1}{l|}{$A = xy^3 + 1 + x^6 +x^3y^2$} & 1 & 1& 14\\
    & & & & \multicolumn{1}{l|}{$B = x^6y + x^4y + x^3 +x^5y$} & & & \\
\end{tabular}
\vspace{0.25em}
\caption{Sequence of $h$-cover codes of the base $[[64,14,8]]$ code, with weight 8 checks and $k=14$.}
\label{tab:k-14-w-8-codes-alt}
\end{table}

\begin{table}[H]
\centering
\caption*{$\boldsymbol{k=12}$ \textbf{Codes}}
\begin{tabular}{c|c|c|c|c|c|c|c}
    Parameters & $l$ & $m$ & $h$ & Polynomials & $N$ & $N'$ & $kd^2/n$\\
    \hline & & & & & & & \\[-1em]
    $[[160,12,\leq 16]]$ & 8 & 10 & 5 & \multicolumn{1}{l|}{$A = xy^3 + y^2 + x^6y^8 +x^3y^4$} & $2.6\times 10^{11}$ & 15625 & $\leq 19.2$\\
    & & & & \multicolumn{1}{l|}{$B = x^6y^7 + x^4y + x^3 + x^5y^3$} & & & \\
    \hline & & & & & & & \\[-1em]
    $[[128,12, 14]]$ & 8 & 8 & 4 & \multicolumn{1}{l|}{$A = xy^3 + 1 + x^6y^6 +x^3y^2$} & $6.9\times 10^{10}$ & 4096 & $\leq 18.4$\\
    & & & & \multicolumn{1}{l|}{$B = x^6y^7 + x^4y^7 + x^3 + x^5y$} & & & \\
    \hline & & & & & & & \\[-1em]
    $[[96,12,10]]$ & 8 & 6 & 3 & \multicolumn{1}{l|}{$A = xy + y^2 + x^6y^4 +x^3y^4$} & $1.2\times 10^{10}$ & 729 & 12.5\\
    & & & & \multicolumn{1}{l|}{$B = x^6y + x^4y^5 + x^3y^2 +x^5y$} & & & \\
    \hline & & & & & & & \\[-1em]
    $[[64,12,8]]$ & 8 & 4 & 2 & \multicolumn{1}{l|}{$A = xy^3 + 1 + x^6y^2 +x^3$} & $1.1\times 10^{9}$ & 64 & 12\\
    & & & & \multicolumn{1}{l|}{$B = x^6y + x^4y + x^3y^2 +x^5y$} & & & \\
    \hline & & & & & & & \\[-1em]
    $[[32,12,4]]$ & 8 & 2 & 1 & \multicolumn{1}{l|}{$A = xy + 1 + x^6 +x^3$} & 1 & 1& 6\\
    & & & & \multicolumn{1}{l|}{$B = x^6y + x^4y + x^3 +x^5y$} & & & \\
\end{tabular}
\vspace{0.25em}
\caption{Sequence of $h$-cover codes of the base $[[32,12,4]]$ code, with weight 8 checks and $k=12$.}
\label{tab:k-12-w-8-codes}
\end{table}

\vspace{-2.40em}
\begin{table}[H]
\centering
\caption*{$\boldsymbol{k=10}$ \textbf{Codes}}
\begin{tabular}{c|c|c|c|c|c|c|c}
    Parameters & $l$ & $m$ & $h$ & Polynomials & $N$ & $N'$ & $kd^2/n$\\
    \hline & & & & & & & \\[-1em]
    $[[168,10,\leq 18]]$ & 6 & 14 & 7 & \multicolumn{1}{l|}{$A = y^{12} + x^5y^8 + x^3y^4 +x^5y^5$} & $3.5\times 10^{11}$ & $1.2\times 10^5$ & $\leq 19.3$\\
    & & & & \multicolumn{1}{l|}{$B = y^7 + x^2y^7 + x^5y^5 +x^3y^6$} & & & \\
    \hline & & & & & & & \\[-1em]
    $[[144, 10, \leq 16]]$ & 6 & 12 & 6 & \multicolumn{1}{l|}{$A = 1 + x^5y^4 + x^3y^4 +x^5y^3$} & $1.4\times 10^{11}$ & 46656 & $\leq 17.8$\\
    & & & & \multicolumn{1}{l|}{$B = y^5 + x^2y^9 + x^5y^{11} +x^3y^4$} & & & \\
    \hline & & & & & & & \\[-1em]
    $[[120,10, 14]]$ & 6 & 10 & 5 & \multicolumn{1}{l|}{$A = y^6 + x^{5}+ x^3 y^2 +x^5y^3$} & $1.2\times 10^{10}$ & 15625 & $\leq 16.3$\\
    & & & & \multicolumn{1}{l|}{$B = y + x^2 y + x^5 y^9 +x^3 y^2$} & & & \\
    \hline & & & & & & & \\[-1em]
    $[[96,10,12]]$ & 12 & 4 & 4 & \multicolumn{1}{l|}{$A = y^2 + x^{11}+ x^9 +x^5y$} & $1.2\times 10^{10}$ & 4096 & 15\\
    & & & & \multicolumn{1}{l|}{$B = y^3 + x^2 y^3 + x^5 y^3 +x^3 y^2$} & & & \\
    \hline & & & & & & & \\[-1em]
    $[[72,10,8]]$ & 6 & 6 & 3 & \multicolumn{1}{l|}{$A = y^4 + x^5 + x^3 y^4 +x^5y$} & $2.2\times 10^9$ & 729 & 8.9\\
    & & & & \multicolumn{1}{l|}{$B = y^5 + x^2 y^5 + x^5 y +x^3 y^2$} & & & \\
    \hline & & & & & & & \\[-1em]
    $[[48,10,6]]$ & 6 & 4 & 2 & \multicolumn{1}{l|}{$A = y^2 + x^5 + x^3 +x^5y$} & $1.9\times 10^8$ & 64 & 7.5\\
    & & & & \multicolumn{1}{l|}{$B = y^3 + x^2 y^3 + x^5 y^3 +x^3 y^2$} & & & \\
    \hline & & & & & & & \\[-1em]
    $[[24,10,4]]$ & 6 & 2 & 1 & \multicolumn{1}{l|}{$A = 1 + x^5 + x^3 +x^5y$} & 1 & 1& 6.7\\
    & & & & \multicolumn{1}{l|}{$B = y + x^2 y + x^5 y +x^3$} & & & \\
\end{tabular}
\vspace{0.25em}
\caption{Sequence of $h$-cover codes of the base $[[24,10,4]]$ code, with weight 8 checks and $k=10$.}
\label{tab:k-10-w-8-codes}
\end{table}

\vspace{-2.40em}
\begin{table}[H]
\centering
\caption*{$\boldsymbol{k=8}$ \textbf{Codes}}
\begin{tabular}{c|c|c|c|c|c|c|c}
    Parameters & $l$ & $m$ & $h$ & Polynomials & $N$ & $N'$ & $kd^2/n$\\
    \hline & & & & & & & \\[-1em]
    $[[140,8,\leq 16]]$ & 7 & 10 & 5 & \multicolumn{1}{l|}{$A = x^4y^6 + y^5+ x^5y^5 +x^3y^9$} & $1.2\times 10^{11}$ & 15625 & $\leq 14.6$\\
    & & & & \multicolumn{1}{l|}{$B = x^5y^3 + x^3y + x^4y + y^8$} & & & \\
    \hline & & & & & & & \\[-1em]
    $[[112,8,14]]$ & 7 & 8 & 4 & \multicolumn{1}{l|}{$A = x^4y^4 + y^3 + x^5y^7 +x^3y^7$} & $1.2\times 10^{10}$ & 4096 & $\leq 14$\\
    & & & & \multicolumn{1}{l|}{$B = x^5y^7 + x^3y^5 + x^4y^7 + 1$} & & & \\
    \hline & & & & & & & \\[-1em]
    $[[84,8,10]]$ & 7 & 6 & 3 & \multicolumn{1}{l|}{$A = x^4y^4 + y^5 + x^5y^3 +x^3y^5$} & $2.2\times 10^9$ & 729 & 9.5\\
    & & & & \multicolumn{1}{l|}{$B = x^5y + x^3y^3 + x^4y^3 + y^2$} & & & \\
    \hline & & & & & & & \\[-1em]
    $[[56,8,8]]$ & 7 & 4 & 2 & \multicolumn{1}{l|}{$A = x^4y^2 + y + x^5y +x^3y^3$} & $1.9\times 10^8$ & 64 & 9.1\\
    & & & & \multicolumn{1}{l|}{$B = x^5y^3 + x^3y^3 + x^4y^3 + y^2$} & & & \\
    \hline & & & & & & & \\[-1em]
    $[[28,8,4]]$ & 7 & 2 & 1 & \multicolumn{1}{l|}{$A = x^4 + y + x^5y +x^3y$} & 1 & 1& 4.6\\
    & & & & \multicolumn{1}{l|}{$B = x^5y + x^3y + x^4y + 1$} & & & \\
\end{tabular}
\vspace{0.25em}
\caption{Sequence of $h$-cover codes of the base $[[28,8,4]]$ code, with weight 8 checks and $k=8$.}
\label{tab:k-8-w-8-codes}
\end{table}

Finally, we also consider weight 8 codes with increasing $k$. In Table \ref{tab:increasing-k-codes-wt-8} we show a sequence of covers of a base $[[32,8,4]]$ weight 8 code that have $k_h > k$. We find instances of $k_h > k$ far more frequently here than in the weight 6 case, but we again do not know if this is a general property of weight 8 codes or a consequence of the choice of base code. For several values of $h$, we find codes with different values of $k_h$ that satisfy $k_h > k$ and we do not always show all of them. Where a code is clearly worse, we tend not to include it; for example, when $h=6$ we find a $[[192,28,8]]$ code that we do not show as it has the same distance as the $[[192,30,8]]$ code but fewer logical qubits. Where we only have upper bounds on distances however, we typically opt to keep codes that have the same upper bound on $d_h$ but different numbers of logical qubits. All of the codes in Table \ref{tab:increasing-k-codes-wt-8} have parameters in line with our conjectured bounds.

Within Table \ref{tab:increasing-k-codes-wt-8}, we find what appear to be sequences of codes that have constant rate $k/n$ and constant $kd^2/n$. For example the $[[36,8,4]]$, $[[96,24,4]]$, $[[160,40,4]]$ and $[[224,56,4]]$ codes all have a rate of 0.25 and $kd^2/n = 4$. These codes all have connected Tanner graphs i.e. they are single codeblocks but their parameters are equivalent to taking multiple individual copies of the $[[32,8,4]]$ code. We also found a $[[192,24,8]]$ code that we did not include in Table \ref{tab:increasing-k-codes-wt-8} but that appears to form part of a constant rate sequence with $k/n=0.125$, along with the $[[96,12,8]]$ and $[[288,36,\leq 8]]$ codes.

Beyond the constant rate codes, we also find codes with better distances and rates of up to around $k/n=0.15$. We also see some fixed $k$ sequences within Table \ref{tab:increasing-k-codes-wt-8}. For example there is a sequence of $k=20$ codes that have promising parameters. The $[[96,20,8]]$, $[[192,20,\leq 16]]$ and $[[288,20,\leq 22]]$ codes have $kd^2/n$ of 13.3, $\leq 26.7$ and $\leq 33.6$ respectively.

\begin{table}[H]
\centering
\caption*{$\boldsymbol{k_h > k}$ \textbf{Codes}}
\begin{tabular}{c|c|c|c|c|c|c}
    Parameters & $l$ & $m$ & $h$ & $A$ & $B$ & $kd^2/n$\\
    \hline & & & & & & \\[-1em]
    $[[288,36, \leq 8]]$ & 12 & 12 & 8 & $x^{9}y^{3} + x^{8}y^{4} + x^{10} + x^{7}y^{10}$  & $x^{6}y^{9} + x^{4}y + x^{11} + x^{9}y$ & $\leq 8$\\
    \hline & & & & & & \\[-1em]
    $[[288,24, \leq 14]]$ & 12 & 12 & 8 & $x^{9}y^{11} + x^{4}y^{4} + x^{10} + x^{3}y^{10}$  & $x^{10}y^{5} + x^{8}y^{5} + x^{11}y^{8} + xy^{5}$ & $\leq 16.3$\\
    \hline & & & & & & \\[-1em]
    $[[288,20, \leq 22]]$ & 12 & 12 & 8 & $x^{5}y^{3} + x^{4}y^{8} + x^{2}y^{4} + x^{7}y^{6}$  & $x^{10}y + x^{8}y + x^{11}y^{4} + x^{5}y^{5}$ & $\leq33.6$ \\
    \hline & & & & & & \\[-1em]
    \hline & & & & & & \\[-1em]
    $[[224,56,4]]$ & 4 & 28 & 7 & $xy^{7} + y^{4} + x^{2}y^{4} + x^{3}y^{14}$  & $x^{2}y^{25} + y^{25} + x^{3} + xy^{21}$ & 4 \\
    \hline & & & & & & \\[-1em]
    $[[224,32,8]]$ & 4 & 28 & 7 & $xy^{23} + y^{16} + x^{2}y^{4} + x^{3}y^{18}$  & $x^{2}y^{9} + y^{21} + x^{3} + xy^{9}$ & 9.1 \\
    \hline & & & & & & \\[-1em]
    $[[224,26, \leq 14]]$ & 28 & 4 & 7 & $x^{13}y^{3} + x^{16} + x^{10} + x^{11}y^{2}$  & $x^{26}y + x^{16}y + x^{3} + xy$ & $\leq 22.8$ \\
    \hline & & & & & & \\[-1em]
    $[[224,20, \leq 14]]$ & 28 & 4 & 7 & $x^{5}y^{3} +  + x^{26} + x^{7}y^{2}$  & $x^{26}y + x^{16}y + x^{19} + x^{9}y$ & $\leq 17.5$ \\
    \hline & & & & & & \\[-1em]
    \hline & & & & & & \\[-1em]
    $[[192,30,8]]$ & 24 & 4 & 6 & $x^{21}y^{3} + x^{4} + x^{10} + x^{15}y^{2}$  & $x^{22}y + x^{4}y + x^{3} + x^{21}y$ & 10 \\
    \hline & & & & & & \\[-1em]
    $[[192,26, 12]]$ & 24 & 4 & 6 & $x^{9}y^{3} + x^{12} + x^{14} + x^{23}y^{2}$  & $x^{10}y + x^{20}y + x^{11} + xy$ & $19.5$ \\
    \hline & & & & & & \\[-1em]
    $[[192,20,\leq 16]]$ & 24 & 4 & 6 & $x^{9}y^{3} + x^{4} + x^{6} + x^{19}y^{2}$  & $x^{22}y + x^{8}y + x^{7} + x^{5}y$ & $\leq 26.7$ \\
    \hline & & & & & &\\[-1em]
    $[[192,18, \leq 16]]$ & 24 & 4 & 6 & $x^{21}y^3 + x^{20} + x^2 + x^{23}y^2$  & $x^{18}y + y + x^{11} + x^{17}y$ & $\leq 24$ \\
    \hline & & & & & & \\[-1em]
    \hline & & & & & & \\[-1em]
    $[[160,40,4]]$ & 20 & 4 & 5 & $x^{17}y^3 + x^{16} + x^6 +x^{7}y^2$  & $x^{10}y + y + x^{11} +xy$ & 4 \\
    \hline & & & & & & \\[-1em]
    $[[160,24,8]]$ & 20 & 4 & 5 & $xy^3 + x^{16} + x^{14} +x^{19}y^2$  & $x^6y + x^{8}y + x^3 +xy$ & 9.6 \\
    \hline & & & & & & \\[-1em]
    $[[160,16,\leq 14]]$ & 20 & 4 & 5 & $x^{17}y^3 + x^{12} + x^{10} +x^{15}y^2$  & $x^6y + x^{12}y + x^7 +xy$ & $\leq 19.6$ \\
    \hline & & & & & & \\[-1em]
    \hline & & & & & & \\[-1em]
    $[[96,24,4]]$ & 12 & 4 & 3 & $xy^3 + 1 + x^6 +x^{7}y^2$  & $x^{2}y + x^{8}y + x^3 +x^{9}y$ & 4 \\
    \hline & & & & & & \\[-1em]
    $[[96,20,8]]$ & 12 & 4 & 3 & $xy^3 + x^4 + x^2 +x^{11}y^2$  & $x^6y + x^{8}y + x^{11} +x^{9}y$ & 13.3 \\
    \hline & & & & & & \\[-1em]
    $[[96,16,4]]$ & 12 & 4 & 3 & $x^{5}y^3 + x^4 + x^{10} +x^{11}y^2$  & $x^6y + y + x^3 +x^{9}y$ & 2.7 \\
    \hline & & & & & &\\[-1em]
    $[[96,12,8]]$ & 12 & 4 & 3 & $x^{9}y^3 + 1 + x^6 +x^3y^2$  & $x^{10}y + x^{8}y + x^7 +x^{5}y$ & 8  \\
    \hline & & & & & & \\[-1em]
    \hline & & & & & & \\[-1em]
    $[[64,14,8]]$ & 8 & 4 & 2 & $xy^3 + 1 + x^6 +x^3y^2$  & $x^6y + x^4y + x^3 +x^5y$ & 14  \\
    \hline & & & & & & \\[-1em]
    \hline & & & & & & \\[-1em]
    $[[32,8,4]]$ & 4 & 4 & 1 & $xy^3 + 1 + x^2 +x^3y^2$ & $x^{2}y + y + x^3 +xy$ & 4 \\
\end{tabular}
\vspace{0.25em}
\caption{Sequence of $h$-cover codes of the base $[[32,8,4]]$ code, with weight 8 checks and $k_h > 8$. Only a selection of codes with $k_h > 8$ are shown.}
\label{tab:increasing-k-codes-wt-8}
\end{table}

\subsection{Qudit Codes}

To show that our methods can be readily applied to qudit codes, we take the $[[24,4,4]]_3$ code from Table 2 in \cite{Spencer2026quditldpc} as a base code. By using the algebraic rules from Theorem \ref{thm:derived-Tanner-iso}, we find the sequence of $h$-cover codes in Table \ref{tab:qudit-k-4-codes}, where the subscript $3$ means that the code is defined over $\mathbb{F}_3$. The codes all have weight 5 checks; they are far from exhaustive and we only provide a short sequence as a proof of principle to demonstrate that our method is not limited to qubit codes.

We also find, for example, a $[[56,4,8]]_5$ 2-cover of the $[[28,4,5]]_5$ weight 6 code from \cite{Spencer2026quditldpc}. The base code is $Q( 3x^6 + 3x^2 + 4x^3, 3x^5 + x + 1, 7,2)$ and our 2-cover is $Q( 3x^6 + 3x^2 + 4x^3, 3x^5 + x + 1, 14,2)$. The base code has a disconnected Tanner graph however, i.e. the code is in fact multiple smaller codes. This is the case for several of the weight 6 codes in Table 2 of \cite{Spencer2026quditldpc} and our 2-cover in this instance is also disconnected.

\begin{table}[H]
\centering
\begin{tabular}{c|c|c|c|c|c|c}
    Parameters & $l$ & $m$ & $h$ & $A$ & $B$ & $kd^2/n$\\
    \hline & & & & & & \\[-1em]
    $[[72,4,8]]_3$ & 12 & 3 & 3 & $x + x^6$  & $x^3 + 2y + 2y^2$ & 3.56 \\
    \hline & & & & & & \\[-1em]
    $[[48,4,7]]_3$ & 8 & 3 & 2 & $x + x^2$  & $x^3 + 2y + 2y^2$ & 4.08 \\
    \hline & & & & & & \\[-1em]
    $[[24,4,4]]_3$ & 4 & 3 & 1 & $x + x^2$ & $x^3 + 2y + 2y^2$ & 2.67 \\
\end{tabular}
\vspace{0.25em}
\caption{Sequence of $h$-cover codes of the base $[[24,4,4]]_3$ qudit BB code from \cite{Spencer2026quditldpc}.}
\label{tab:qudit-k-4-codes}
\end{table}

\section{Conclusion}

We have shown that when starting with a given BB code, it is possible to generate an infinite sequence of new BB codes via covering graphs. These so-called cover codes all have Tanner graphs that are covering graphs of the base code's Tanner graph. We present simple algebraic requirements on the defining polynomials and lattice parameters of a BB code to be a cover of a smaller BB code. In particular, the base code's lattice parameters must divide the cover code's lattice parameters and the defining polynomials of the cover code must equal the base code's polynomials modulo the base code's lattice parameters.

The cover codes reside in a considerably smaller search space than the full search space of polynomials for a given set of lattice parameters. This is only useful in so far as this search space contains interesting codes that have non-trivial $k$ and $d$. Through a combination of theoretical guarantees and numerical evidence, we have shown that this is indeed the case. For example, we have shown that the $k=12$ and $k=8$ codes from \cite{Bravyi2024BB} can be viewed as cover codes. We have also recovered the $[[18,8h,\leq 2h]]$ balanced product codes of \cite{Tiew2025Dehn} and the generalised toric codes \cite{Liang2025gtc}. 

We have found several new sequences of BB codes with weight 8 checks. Many of these codes have $kd^2/n$ considerably higher than the weight 6 codes, although this comes at the cost of increased complexity of implementation on quantum devices. The $[[64,14,8]]$, $[[144,14,14]]$, $[[192,20,\leq 16]]$ and $[[288,14,\leq24]]$ codes are some of the best instances we find. 

Beyond searching for new BB codes, we have also used the covering graph relationship to explore the connections between the logical operators of different BB codes. By extending the graph covering map to a linear map on vector spaces, we were able to define a projection (co)chain map from a cover code to a base code. The dual of this projection (co)chain map is a lifting (co)chain map. Both of these induce maps on (co)homology which can be used to project and lift logical operators. We use these chain maps to prove that $k_h \geq k$ for any $h$-cover code, and $d_h \leq hd$ when $h$ is odd. Thus the number of logical qubits cannot decrease for any cover code, and the distance cannot increase by more than a factor of $h$ when $h$ is odd. When $h$ is odd and $k_h=k$, we are able to lower bound the distance as at least that of the base code, i.e. $d_h \geq d$. We have empirically observed the same distance behaviour for codes with even $h$ and conjecture that the distance of an $h$-cover code satisfies $d \leq d_h \leq hd$ for any $h$. All the numerical evidence we have gathered to date supports this conjecture.

We also use the idea of a section of a graph cover to define a weight-preserving-lift (co)chain map. When such a (co)chain map exists, we show it can be used to lift a weight $d$ base code logical operator to a weight $d$ cover code logical operator and therefore upper bound the distance of the cover code as $d_h \leq d$. This serves primarily as a useful numerical tool to rule out cover codes where the distance does not increase relative to the base code. 

We further show how to lift automorphisms that induce non-trivial logical actions from a base code to a cover code. The maps induced in first degree homology by the lifting and projection chain maps are isomorphisms when $k_h = k$ and $h$ is odd, which implies that a logical basis of the base code will lift to a logical basis of the cover code. In this setting, where there is a notion of a consistent basis for the base and cover codes, we show that automorphisms of a base code not only lift to the cover code but that their logical action is preserved.

Our theoretical results readily extend to qudit BB codes when replacing the condition that $h$ be odd with the more general condition that $h \neq 0 \text{ mod } p$, where $p$ is the characteristic of the underlying field. We apply them to a $[[24,4,4]]_3$ base code from \cite{Spencer2026quditldpc} and get a 2-cover $[[48,4,7]]_3$ code and a 3-cover $[[72,4,8]]_3$ code. We only performed a limited search for qudit codes as a proof of principle that our method extends to qudits. We leave a more extensive search for future work.

This work raises several outstanding open questions. Our proof techniques for establishing bounds on $d$ break down when $h = 0 \text{ mod } p$. While we have numerical evidence that all $h$-cover BB codes obey the conjectured bounds $d \leq d_h \leq hd$ regardless of $h$, establishing this rigorously may require different tools from those employed here. 

The weight 8 codes we found have, perhaps unsurprisingly, better parameters than the weight 6 codes. While the improvement in parameters is due in part to the increased check weight, it is likely that the non-locality of the checks also plays a role. Recent work develops a subsystem variant of BB codes that have weight 4 checks that are less local than the surface code but are not non-local, and so are easier to implement in practice \cite{liang2026topologicalsubsystembivariatebicycle}. These codes have worse parameters than the standard BB codes but better than those of the surface code, so a compelling research direction is to develop a robust theoretical understanding of the trade-offs between the code parameters and both the weight and the non-locality of the checks in BB codes. 

The drastically reduced search space for cover codes means the remaining bottleneck to finding useful instances of BB codes, especially at larger $n$, is computing the distance. While we have made some progress in using the properties of a base code to bound the distance of a cover code, the question remains whether these tools can be extended to explicitly compute the distances of cover code instances. A recent paper by Hopkins et. al. also proves that any family of BB codes with connected Tanner graphs has distance scaling obeying the general BPT bound for dimension $2 \leq D \leq w-2$, where $w$ is the check weight \cite{hopkin2026TTIqLDPC}. If the check polynomials are fixed, then it is expected that the distance will obey the BPT bound for $D=2$. Exceeding this bound will likely require that the check polynomials need to change to ensure that range of the checks increases as $n$ increases. Since our techniques give explicit constructions for infinite sequences of BB codes with possibly different check polynomials all of fixed weight, investigating whether constructing BB cover code families that saturate these upper bounds in dimension $D \leq w -2$ is possible may give further insight into how precisely the non-locality of the checks impacts the code parameters. 

Our techniques can be readily extended to the case where the underlying polynomial ring is the group algebra of a finite Abelian group, as these are all direct products of cyclic groups. Extensions of our results to tensor product code constructions beyond two products are also possible, such as the recent trivariate tricycle codes \cite{Jacob2025trivariate}. We leave for future research the question of whether these extensions yield code families with interesting properties, such as possessing non-Clifford transversal gates, and whose parameters can be bounded through similar arguments. 

An extension of our methods to group-algebra codes based on non-Abelian groups was developed in a recent work by Aydin et. al. using a code construction method based on group actions on cosets \cite{aydin2026breakingbicycleframecosetbased}. They recover the covering code relationship presented here as a special case, though they were unable to prove the same code parameter inequalities and the results on lifting logical operators and automorphisms that we derive. We expect a similar analysis is possible for these non-Abelian cover codes. 

Lastly, there are numerous results in the decoding literature that solve decoding problems for certain codes by mapping them to simpler codes where the decoding problem is well-understood, such as for the BB code by decomposing them into copies of the toric code \cite{tan2026generalizedmatchingdecoders2d,Sahay2026matchingdecoderbivariatebicycle}. Since we have lifting and projection maps between the Tanner graphs of larger and smaller BB codes, one interesting application would be to use them to map the decoding problem for a larger BB code to a decoding problem for a smaller BB code. 

\section{Acknowledgements}

We would like to thank Virgile Guemard, Christopher Pattison, Dylan Airey, Eric Sabo, and Surya Raghavendran for many helpful discussions and for their comments on an early draft of the paper. 
D.E.B. and B.C.B.S were supported by the Engineering and Physical Sciences Research Council [grant number EP/Y004620/1]. A.R. is supported by the UKRI through the Future Leaders Fellowship Theory to Enable Practical Quantum Advantage (MR/Y015843/1). For the purposes of Open Access, the authors have applied a CC BY public copyright licence
to any Author Accepted Manuscript version arising from this submission.

\printbibliography

\end{document}